\pdfoutput=1
\documentclass[11pt]{article}
\usepackage{newtxtext}
\usepackage{amsthm}
\usepackage{newtxmath}
\usepackage[margin=1in]{geometry}
\usepackage{graphicx}
\usepackage{amsmath}
\usepackage{blkarray,booktabs,bigstrut}
\usepackage{multirow}
\usepackage{comment}
\usepackage{array}

\usepackage{enumerate}
\usepackage{chemarrow}
\usepackage{xcolor}

\usepackage{bbm}
\usepackage{subcaption}
\usepackage{makecell}
\usepackage[lined,ruled,linesnumbered,noend]{algorithm2e}

\usepackage{bm}
\usepackage{mathtools}
\usepackage{multicol,lipsum}
\usepackage[hidelinks]{hyperref}
\usepackage[nameinlink,capitalize]{cleveref} 

\def\thisismainpaper{0}   

\def\mbbE{\mathbb{E}}
\def\mbbN{\mathbb{N}}
\def\mbbZ{\mathbb{Z}}

\DeclareMathOperator{\argmin}{\arg\min}
\DeclareMathOperator{\argmax}{\arg\max}

\def\ddefloop#1{\ifx\ddefloop#1\else\ddef{#1}\expandafter\ddefloop\fi}
\def\ddef#1{\expandafter\def\csname bb#1\endcsname{\ensuremath{\mathbb{#1}}}}
\ddefloop ABCDEFGHIJKLMNOPQRSTUVWXYZ\ddefloop
\def\ddef#1{\expandafter\def\csname c#1\endcsname{\ensuremath{\mathcal{#1}}}}
\ddefloop ABCDEFGHIJKLMNOPQRSTUVWXYZ\ddefloop
\def\ddef#1{\expandafter\def\csname v#1\endcsname{\ensuremath{\boldsymbol{#1}}}}
\ddefloop ABCDEFGHIJKLMNOPQRSTUVWXYZabcdefghijklmnopqrstuvwxyz\ddefloop
\def\ddef#1{\expandafter\def\csname u#1\endcsname{\ensuremath{\underline{#1}}}}
\ddefloop ABCDEFGHIJKLMNOPQRSTUVWXYZabcdefghijklmnopqrstuvwxyz\ddefloop
\def\ddef#1{\expandafter\def\csname v#1\endcsname{\ensuremath{\boldsymbol{\csname #1\endcsname}}}}
\ddefloop {alpha}{beta}{gamma}{delta}{epsilon}{varepsilon}{zeta}{eta}{theta}{vartheta}{iota}{kappa}{lambda}{mu}{nu}{xi}{pi}{varpi}{rho}{varrho}{sigma}{varsigma}{tau}{upsilon}{phi}{varphi}{chi}{psi}{omega}{Gamma}{Delta}{Theta}{Lambda}{Xi}{Pi}{Sigma}{varSigma}{Upsilon}{Phi}{Psi}{Omega}{ell}\ddefloop

\makeatletter
\newcommand{\nosemic}{\renewcommand{\@endalgocfline}{\relax}}
\newcommand{\dosemic}{\renewcommand{\@endalgocfline}{\algocf@endline}}

\let\oldnl\nl
\newcommand{\nonl}{\renewcommand{\nl}{\let\nl\oldnl}}
\makeatother

\theoremstyle{definition}
\newtheorem{theorem}{Theorem}[section]
\newtheorem{lemma}[theorem]{Lemma}

\newcommand{\E}{{\rm I\kern-.3em E}}

\def\cmax{c_{\max}}

\allowdisplaybreaks
\title{Serving Chain-structured Jobs with Large Memory Footprints with Application to Large Foundation Model Serving}

\author{
Tingyang Sun\\
Pennsylvania State University\\
\texttt{tfs5679@psu.edu}
\and
Ting He\\
Pennsylvania State University\\
\texttt{tinghe@psu.edu}
\and
I-Hong Hou\\
Texas A\&M University\\
\texttt{ihou@tamu.edu}
}

\date{}

\begin{document}
\maketitle
\begin{abstract}
As a current trend in Artificial Intelligence (AI), large foundation models are increasingly employed as the core of AI services. However, even after training, serving such models at scale remains a challenging task due to their heavy resource footprints, particularly in terms of GPU memory. While recent works revealed unique characteristics of systems serving foundation models that distinguish them from traditional distributed computing systems, there is still a lack of fundamental understanding of the underlying system management problems. This work aims at addressing this gap by extracting a novel problem of ``server chain composition'' via block placement and cache allocation for serving chain-structured jobs with large memory footprints, which models a fundamental problem in serving large foundation models through pipeline parallelism. After showing the NP-hardness of the optimal solution, the focus is turned to developing scalable algorithms with guaranteed performance under state-of-the-art load balancing. Application of the proposed solution to a distributed large language model (LLM) serving system shows significant reduction of response times compared to state-of-the-art solutions.
\end{abstract}

\section{Introduction}\label{sec:Introduction}

Widely regarded as one of the most promising approaches towards realizing Artificial General Intelligence (AGI), large foundation models such as large language models (LLMs), large vision models, and related multimodal models are gaining popularity in both research communities and industries. The widespread deployment of such models introduces a system-level question of how to efficiently host these models and use them to answer queries (i.e., inference requests) at scale, given limited GPU resources. Due to the large sizes of foundation models and the high cost of high-end GPUs, \emph{model parallelism}, where a model is split across multiple GPUs/GPU servers, becomes a popular model serving approach. In particular, for physically distributed servers, the dominant approach is to split the model at the boundary of layers, a.k.a. \emph{pipeline parallelism}, which essentially places each model instance onto a chain of servers that are invoked sequentially during inference. 

Serving large foundation models is fundamentally different from serving traditional workloads. Most foundation models are based on the \emph{transformer architecture}, including all the LLMs (e.g., GPT series, LLaMA), vision foundation models such as Vision Transformer (ViT), and the backbone of most multimodal models (e.g., GPT-4V, Gemini). A large transformer has a massive memory footprint for storing both the model parameters (i.e., tensor weights) and the intermediate values generated during inference (i.e., KV values of the past tokens). This memory-intensive nature makes GPU memory the bottleneck resource in serving such models, which leads to fundamentally different resource allocation problems from serving traditional compute-bound workloads. 

In this work, we tackle a fundamental resource allocation problem underlying the emerging application of \emph{pipeline-parallel processing of inference requests by transformer-based large models}. Abstracting a given model as a chain of blocks (e.g., each representing a transformer layer) and the inference requests as jobs, we focus on the \emph{composition of server chains} by strategically placing the blocks onto servers and allocating cache space at each server for storing the intermediate values during job processing, to optimize the system performance in combination with a state-of-the-art load balancing policy. 
We further provide explicit performance bounds to not only characterize the performance of the resulting system but also guide the tuning of an important design parameter therein. 

\subsection{Related Work}\label{subsec:Related Work}

\textbf{Model-parallel model serving:}
One line of related works is on distributed serving of large models, a.k.a. model-parallel model serving. This includes \emph{pipeline parallelism} that splits the model at the granularity of layers~\cite{Huang19NeurIPS,Narayanan19SOSP,Yang21MLsys}, and \emph{tensor parallelism} that splits the model at the granularity of neurons~\cite{Krizhevsky17CommACM,Ben-Nun19ACMCompSurv,Tang20arXiv}. Tensor parallelism incurs a higher communication overhead across the splits and is thus typically limited within a single multi-GPU server~\cite{vLLM_distributed}. In contrast, pipeline parallelism only requires communications between adjacent splits and is thus more suitable for distributed settings. 
There are a few systems that support model-parallel model serving with focus on LLMs, such as vLLM with Ray Serve~\cite{vLLM_Ray}, Nvidia Dynamo~\cite{Nvidia_Dynamo}, Amazon EKS~\cite{Amazon_EKS}, and PETALS~\cite{Borzunov23NeurIPS}. Except for PETALS, most of these systems are designed for highly-connected environments (e.g., datacenters), while PETALS can work on weakly-connected devices. 
Most of the existing efforts focused on system implementation and ignored the fundamental problems in managing such systems, which is the gap addressed by this work. Such problems only started to receive attention recently, where \cite{Sun25Performance} considered a problem formulated after PETALS~\cite{Borzunov23NeurIPS} called ``block placement and request scheduling''. However, the solution therein contained many implementation-specific artifacts and was heuristic in nature. In this work, we will deepen this investigation by \emph{abstracting the fundamental problems in managing large model serving systems from a job processing perspective} and \emph{developing scalable solutions as well as performance analysis}. 
 
\looseness=0

\textbf{Service function chaining:}
Another line of related works is on \emph{service function chaining (SFC)}~\cite{Medhat17CommMag}. 
Under pipeline parallelism, each inference request is served by a chain of servers, which makes the resource allocation problem resemble the problem of \emph{service function (SF)} placement and flow routing in the context of SFC~\cite{Medhat17CommMag}. 
SFC placement and routing has been studied in the networking community, with domain-specific objectives (e.g., minimizing network operation cost or maximizing flow rate)~\cite{Chen19arXiv} and mostly heuristic solutions~\cite{Jang17JSAC,Chen19arXiv} (except for a few special cases such as \cite{Zhang17ICDCS,Guo18INFOCOM,Shang19ICPP}). Despite the conceptual similarity, \emph{SFC is fundamentally different from pipeline-parallel model serving}: (i) difference SFs have heterogeneous resource requirements~\cite{Jang17JSAC,Zhang17ICDCS,Jalalitabar19NFV-SDN} and some SFs may change the flow rate~\cite{Ma17INFOCOM} or branch an input flow into multiple output flows~\cite{Jalalitabar19NFV-SDN}, while transformer layers of the same transformer model are typically identically-structured modules with the same resource requirements and input/output data size; (ii) more importantly, the bottleneck resource in large model serving is typically GPU memory, which differs from the bottleneck resource in SFC --- compute cycles --- in how it is consumed by the workloads\footnote{The part of GPU memory used to store model parameters is shared by all the processed requests while the part used to store KV values for each request is dedicated to this request, differing from the additive consumption of compute cycles~\cite{Zhang17ICDCS,Guo18INFOCOM}.}.

\textbf{Load balancing:} Viewing each inference request as a job, our problem can be viewed as a job processing problem where the processing is distributed to a chain of servers. In a multi-server context, most existing works focused on load balancing with given servers, including Join-the-Shortest-Queue (JSQ)~\cite{Winston77JAP}, Join-the-Idle-Queue (JIQ)~\cite{Lu11PE}, and their low-overhead alternatives (e.g., JSQ-$d$~\cite{Mitzenmacher96FOCS} and JIQ-$d$~\cite{Wang18TON}) for homogeneous servers, and later adaptations like Speed-Aware JSQ (SA-JSQ)~\cite{Bhambay22PE} (a.k.a. Join-the-Fastest-of-the-Shortest-Queue (JFSQ)~\cite{Weng20MACS}) for heterogeneous servers. However, in our case, a major challenge is how to ``compose the servers'' by distributing model parameters and allocating KV caches, as serving an inference request not only requires a chain of servers hosting all the model parameters but also enough cache space at each server for the intermediate values. 
A related problem of job scheduling in LLM serving was discussed in \cite{Mitzenmacher25SS}, which considered scheduling inference requests to a single LLM instance; in our case, the deployed blocks may form many (partially overlapping) model instances and the solutions in \cite{Mitzenmacher25SS} can be used for scheduling within each instance 
after dispatching. 
A seemingly related problem is ``job shop scheduling''~\cite{Gromicho12COR}, where each job has a chain of operations to be performed sequentially by a set of machines and each machine can only perform one operation at a time. However, our problem differs not only in that our machine-operation association is controllable (through block placement),  
but also in that in our case the operations of the same job need to be scheduled simultaneously instead of sequentially\footnote{This is because although microscopically the forward pass in generating each token is processed sequentially by the layers, macroscopically the inference session for serving each request needs to be run simultaneously at all the servers in the selected server chain to perform sequence generation in an auto-regressive manner.}. 
To our knowledge, the problem of \emph{processing chain-structured memory-bound jobs by composable server chains} has not been formally studied before. 

\subsection{Summary of Contributions}
We tackle a novel problem of processing chain-structured memory-bound jobs by composable server chains, inspired by pipeline-parallel processing of inference requests for large transformer models, with the following contributions:
\begin{enumerate}
    \item We decompose the resource allocation problem in such a system into three coupled subproblems, targeting at optimizing (i) the \emph{placement of service blocks} across servers to form server chains, (ii) the \emph{allocation of cache space} at each server to determine the (concurrency) capacity of each server chain, and (iii) the \emph{dispatching of jobs} to the server chains. The first two subproblems are solved offline to compose ``job servers'' that can each serve a job independently, and the third is solved online to load balance among the ``job servers''. 

    \item The main difficulty resides in the first two subproblems, both of which are NP-hard and contend for the limited server memory. Under a state-of-the-art load balancing policy of ``Join-the-Fastest-Free-Server'', we develop a set of efficient algorithms for the first two subproblems that achieve optimality in important special cases, and provide explicit analysis in terms of upper/lower bounds on the steady-state mean response time. 

    \item We test our proposed solution against state-of-the-art resource allocation algorithms for distributed LLM serving via both model-driven simulations and experiments based on a real LLM serving system. Besides validating our theoretical analysis, these tests also show that the proposed solution can provide significant performance improvement (with $63$--$77\%$ reduction in mean response time) under real demands deviating from our original assumptions, which indicates the robustness of our solution. 
     
\end{enumerate}

\textbf{Roadmap.} Section~\ref{sec:Background and Formulation} presents our problem formulation, Section~\ref{sec:Optimized Job Serving} describes the proposed solution and its analysis, Section~\ref{sec:Performance Evaluation} tests the proposed solution against benchmarks in serving LLM inference requests, and Section~\ref{sec:Conclusion} concludes the paper with a discussion. 
\textbf{All the proofs are provided in Appendix~\ref{appendix:Proofs}.}

\section{Problem Formulation}\label{sec:Background and Formulation}

\subsection{System Model}\label{subsec:System Model}

\subsubsection{Abstract Model}
Consider a system consisting of a set $\mathcal{J}$ of $J:=|\mathcal{J}|$ servers. 
Each server $j\in \mathcal{J}$ has a \emph{memory size} of $M_j$ and incurs a \emph{mean communication time} of $\tau^c_j$ to participate in a job's processing and a \emph{mean computation time} of $\tau^p_j$ to process each service block for the job. 

The system hosts a service consisting of $L$ blocks, each with \emph{block size} $s_m$, to serve dynamically arriving jobs. When serving a job, a server needs to cache the intermediate results from each block it processes for the job, with a \emph{cache size} of $s_c$ per block per job. To be successfully served, a job needs to (i) be assigned to a chain of servers that collectively host all the $L$ blocks \emph{in order} and (ii) be allocated enough cache space at each server. Due to the communication overhead, we consider block placements that place a continuous range of blocks at each server, represented by $\{a_j,\ldots,a_j+m_j-1\}$ for server $j\in\mathcal{J}$, where $a_j$ denotes the \emph{index of the first block} and $m_j$ the \emph{number of blocks}, and thus the entire block placement can be represented by $(\bm{a}:=(a_j)_{j\in \mathcal{J}},\: \bm{m}:=(m_j)_{j\in \mathcal{J}})$. Requirement (i) means that servers $i, j\in \mathcal{J}$ can be traversed consecutively by a server chain if and only if $a_j\leq a_i+m_i\leq a_j+m_j-1$, and requirement (ii) means that for each job that is processed by a server chain traversing $(i,j)$ (i.e., traversing servers $i$ and $j$ consecutively), server $j$ needs to allocate a cache space of $s_c(a_j+m_j-a_i-m_i)$, assuming that each block is processed at the first server hosting it\footnote{This is consistent with PETALS~\cite{Borzunov23NeurIPS}. Other tie-breaking rules can be modeled by modifying the number of processed blocks accordingly.}. 

To handle the boundary cases at the beginning/end of each chain, we introduce two \emph{dummy servers} $j_0$ and $j_{J+1}$, with $a_{j_0}:=0$, $a_{j_{J+1}} := L+1$, and $m_{j_0}=m_{j_{J+1}} := 1$, i.e., $j_0$ hosts a dummy block $0$ and $j_{J+1}$ hosts a dummy block $L+1$. We denote the extended server set by $\mathcal{J}_+:= \mathcal{J}\cup \{j_0, j_{J+1}\}$. 
For the ease of presentation, we denote by $m_{ij}:= a_j+m_j-a_i-m_i$ the \emph{number of blocks processed at server $j$} after the processing at server $i$, $\mathcal{E}_{\bm{a},\bm{m}}:=\{(i,j)\in \mathcal{J}_+^2:\: a_j\leq a_i+m_i\leq a_j+m_j-1\}$ the \emph{set of possible neighbors} on server chains, and $\mathcal{K}_{\bm{a},\bm{m}}$ the set of feasible server chains, all under a given block placement $(\bm{a},\bm{m})$. To ensure the traversal of all the $L$ blocks in order, each chain in $\mathcal{K}_{\bm{a},\bm{m}}$ must start from $j_0$, end at $j_{J+1}$, and only traverse links in $\mathcal{E}_{\bm{a},\bm{m}}$. 
\if \thisismainpaper1
The main notation is summarized in Appendix~A.1 of \cite{Sun26arXiv}.
\else
The main notation is summarized in Table~\ref{tab:notations} of Appendix~\ref{appendix:Notations}. 
\fi

The above system model implies that under a block placement $(\bm{a}, \bm{m})$, a server $j$ that simultaneously processes $c_{ij}$ jobs for the server chains traversing $(i,j)$ has a \emph{total memory consumption} of 
\begin{align}\label{eq:memory consumption model}
s_m m_j + s_c \sum_{i:\: (i,j)\in \mathcal{E}_{\bm{a},\bm{m}}} c_{ij}m_{ij},
\end{align}
and a job assigned to a server chain $k$ has a \emph{mean service time} of 
\begin{align}\label{eq:mean service time model}
T_k:= \sum_{(i,j)\in k}\left(\tau^c_j + \tau^p_j m_{ij} \right),
\end{align}
where $(i,j)\in k$ means that chain $k$ traverses servers $i$ and $j$ consecutively. 
While processing a job also requires other types of resources (e.g., compute cycles and network bandwidth), we assume the workload to be \emph{memory-bound}, i.e., memory is the bottleneck resource that will be exhausted before other resources.  
To ensure that dummy servers do not contribute to the service time or impose resource constraints, we assume $\tau^c_{j_{J+1}} = \tau^p_{j_{J+1}}:= 0$ and $M_{j_{J+1}}:=\infty$. 

\subsubsection{Connection to Large Model Serving} 

The above system model is inspired by systems used to serve inference requests for transformer-based large models through pipeline parallelism, where the service represents a large transformer with decoder-only architecture that is used by many state-of-the-art large foundation models (e.g., GPTs).  

Specifically, each ``job'' represents an \emph{inference request} that aims at generating a sequence of output tokens based on a sequence of input tokens, where most of the processing is carried out in a set of identically structured \emph{transformer layers}. Each ``service block'' represents such a transformer layer, and each ``cache'' represents a \emph{KV cache} used to store the context information in the form of KV values of the past tokens (including input tokens) processed for a given request by a given transformer layer. Serving a job includes both a \emph{prefill phase}, which populates the caches with KV values from processing the input tokens, and a \emph{decode phase}, which generates the output tokens one by one by passing the latest token through the transformer layers auto-regressively. 
We assume static cache allocation, which pre-allocates a fixed-size cache for each transformer layer at the beginning of each job processing according to a predetermined maximum sequence length. We assume that once a job starts its processing at a chain of servers, it must run there until completion (i.e., no migration or preemption), which holds in most practical systems due to the high overhead in transferring/swapping KV caches. 
The mean communication time $\tau^c_j$ represents the mean total time for communicating the input/output to server $j$ in generating all the tokens  for a request\footnote{This model fits the communication method in PETALS~\cite{Borzunov23NeurIPS} with a frontend server as the proxy of clients, where the frontend server connects the servers in a chain by sending the input to a server and then forwarding its output to the next server. Our solution can be adapted to the case where consecutively traversed servers directly communicate as in \cite{vLLM_Ray}, by making the communication time dependent on the preceding server, i.e., $\tau^c_{ij}$, and taking $\tau^c_j:= \max_i \tau^c_{ij}$.}, and the mean per-block computation time $\tau^p_j$ represents the mean total time in processing all the tokens for a request by a transformer layer. 

While this system model omits some possible complications\footnote{For example, the KV cache size can vary across requests and even grow during the processing of the same request if dynamic allocation is used.}, it captures the main difference between serving inference requests for large transformers and traditional workloads: \emph{the shift of bottleneck resource from compute/network to memory}, induced by the large memory footprints of transformers, the relatively small runtime bandwidth demands, 
and the massive parallel processing capabilities of GPUs. 
This shift leads to a novel challenge in resource allocation: as (GPU) memory is a critical resource for both placing blocks and processing inference requests (due to the need of KV caches), there is a \emph{resource contention between the model placement stage and the request serving stage}, as the residual memory after model placement determines how many requests each server can process in parallel. We will address this contention explicitly.

\subsection{Optimization Problems}\label{subsec:Optimization Problems}

Our overall goal is to optimize the system's performance by strategically composing server chains and using the composed chains to process incoming jobs. In this work, we assume a \emph{centralized orchestrator} that controls these decisions and serves as the ingress/egress point of jobs; distributed variations are left to future work. 
We focus on services with a large memory footprint (e.g., large foundation models), where loading service blocks to servers takes substantial time. This motivates us to consider a two-time-scale approach: \emph{server chain composition} via block placement and cache allocation at a large time scale, and \emph{load balancing} for assigning incoming jobs to the composed server chains at a small time scale.

\emph{Offline Server Chain Composition:}
As mentioned in Section~\ref{subsec:Related Work}, one of the main differences between our problem and classical job serving problems is that the ``job servers'' in our case, each representing a chain of servers with resources for serving a job independently, are composable. Their composition depends on not only how the blocks are placed onto the physical servers, but also how each server allocates its residual memory among the server chains traversing it as caches for ongoing jobs. 

Specifically, since under a given block placement $(\bm{a},\bm{m})$, each ongoing job traversing $(i,j)\in \mathcal{E}_{\bm{a},\bm{m}}$ (i.e., processed by servers $i$ and $j$ consecutively) consumes $m_{ij}$ \emph{cache slots} at server $j$ (each of size $s_c$), we can represent the cache allocation by a vector $\bm{c}:=(c_k)_{k\in \mathcal{K}_{\bm{a},\bm{m}}}$ that specifies the maximum number of jobs processed concurrently on all the feasible server chains. We refer to $c_k$
as the \emph{capacity of chain $k$}. Then $m_{ij} c_k$ is the number of cache slots server $j$ allocates to a chain $k$ that traverses it from server $i$. According to the memory consumption model \eqref{eq:memory consumption model}, the cache allocation $\bm{c}$ must satisfy
\begin{align}\label{eq:tildeM_j}
\sum_{i:\: (i,j)\in \mathcal{E}_{\bm{a},\bm{m}}} m_{ij} \sum_{k:\: (i,j)\in k}c_k \leq \left\lfloor {M_j-s_m m_j\over s_c} \right\rfloor =: \widetilde{M}_j,~~~\forall j\in \mathcal{J},
\end{align}
where $\widetilde{M}_j$ denotes the number of cache slots at server $j$  after hosting $m_j$ blocks. 

From a job processing perspective, each feasible server chain $k$ with capacity $c_k$ can process up to $k$ jobs in parallel (with possibly different processing times), we model such a server chain as $c_k$ ``virtual servers'' that can each serve one job at a time in the queueing-theoretic sense. We refer to such virtual servers as \emph{job servers} to differentiate from the physical servers. Let $\mu_k:=1/T_k$ denote the \emph{service rate of chain $k$}, where $T_k$ is the mean service time defined in \eqref{eq:mean service time model}. Then the block placement $(\bm{a},\bm{m})$ determines the set of feasible server chains $\mathcal{K}_{\bm{a},\bm{m}}$ and their service rates $(\mu_k)_{k\in \mathcal{K}_{\bm{a},\bm{m}}}$, while the cache allocation determines the capacity of each chain $(c_k)_{k\in \mathcal{K}_{\bm{a},\bm{m}}}$. Together, they compose a set of job servers $(\mu_k,c_k)_{k\in \mathcal{K}_{\bm{a},\bm{m}}}$, where there are $c_k$ job servers of service rate $\mu_k$.

\emph{Online Load Balancing:}
Given the job servers composed by block placement and cache allocation, the remaining problem reduces to a classical online load balancing problem where the orchestrator dispatches incoming jobs to the composed job servers, which are generally heterogeneous in terms of service rates. 
This suggests that we can borrow from the existing solutions on load balancing for heterogeneous servers such as \cite{Weng20MACS,Bhambay22PE}. 
A subtle difference in our case is that the job servers are logical servers instead of physical servers, and thus any arrived job will be kept in a central queue by the orchestrator before processing. Although any work-conserving load balancing policy can stabilize the system for arrival rates within the \emph{total service rate}, defined as
\begin{align}\label{eq:total service rate}
\nu := \sum_{k\in \mathcal{K}_{\bm{a},\bm{m}}}\mu_k c_k,
\end{align}
different policies can lead to different response times. 

\emph{Design Objective:}
Given a request arrival rate $\lambda$, our goal is to jointly design the block placement, the cache allocation, and the load balancing policy so as to \emph{minimize the mean response time} (including waiting and processing), while ensuring queue stability. 

\emph{Remark:} In this work, we assume the standard First-Come-First-Serve (FCFS) scheduling to focus on the above optimizations. Our solution can be combined with optimizations of job scheduling as in \cite{Mitzenmacher25SS} for further improvement, which is left to future work. \looseness=0

\section{Optimized Job Serving}\label{sec:Optimized Job Serving}

We now address the optimization problems identified in Section~\ref{subsec:Optimization Problems}, with focus on developing \emph{scalable algorithms}.

\subsection{Server Chain Composition}\label{subsec:Server Chain Composition}

\subsubsection{Optimization Formulation}\label{subsubsec:BPCA Optimization}
As explained in Section~\ref{subsec:Optimization Problems}, the block placement $(\bm{a},\bm{m})$ and the cache allocation $\bm{c}$ jointly determine the feasible server chains and their capacities, which provides the job servers for subsequent processing. Since the ultimate performance depends on how jobs are assigned to the composed job servers, ideally we should design $(\bm{a},\bm{m})$ and $\bm{c}$ with the load balancing policy in mind. Let $\mathcal{A}$ denote the adopted load balancing policy with a mean response time of $\overline{T}_{\mathcal{A}}((\mu_k,c_k)_{k\in \mathcal{K}_{\bm{a},\bm{m}}})$. Then our original goal is to $\min_{\bm{a},\bm{m},\bm{c}} \overline{T}_{\mathcal{A}}((\mu_k,c_k)_{k\in \mathcal{K}_{\bm{a},\bm{m}}})$. However, directly solving this optimization is intractable due to the implicit and complicated dependency of the objective on the decision variables. We thus simplify the problem as follows. 

Let $\overline{\rho}\in (0,1)$ be a given {safety margin} for stability that represents a \emph{target maximum system load}. For a sufficiently small $\overline{\rho}$, the response time will be dominated by the service time, which allows us to focus on minimizing the mean service time, given by 
\begin{align}
\overline{T}_{\mathcal{A}}((\mu_k,c_k)_{k\in \mathcal{K}_{\bm{a},\bm{m}}}) \approx \sum_{k\in \mathcal{K}_{\bm{a},\bm{m}}} {1\over \mu_k}\cdot {\lambda_{\mathcal{A},k}\over \lambda}, \label{eq:mean service time}
\end{align}
where $\lambda_{\mathcal{A},k}$ is the rate for policy $\mathcal{A}$ to assign requests to chain $k$. Since any policy $\mathcal{A}$ preserving stability must satisfy $\lambda_{\mathcal{A},k}\leq c_k \mu_k$, we can upper-bound \eqref{eq:mean service time} by
\begin{align}
\overline{T}_{\mathcal{A}}((\mu_k,c_k)_{k\in \mathcal{K}_{\bm{a},\bm{m}}}) \leq \sum_{k\in \mathcal{K}_{\bm{a},\bm{m}}} {1\over \mu_k}\cdot {c_k\mu_k \over \lambda} = {1\over \lambda} \sum_{k\in \mathcal{K}_{\bm{a},\bm{m}}} c_k. \label{eq:mean service time bound}
\end{align}
%
This bound allows us to formulate the block placement and cache allocation problem as an explicit optimization as follows\footnote{We use $[n]$ to denote the set $\{1,\ldots,n\}$ for a positive integer $n$.}
\begin{subequations}\label{eq:BPCA - conceptual}
\begin{align}
\min_{\bm{a},\bm{m},\bm{c}} \quad & \sum_{k\in \mathcal{K}_{\bm{a},\bm{m}}} c_k \label{BPCA:obj}\\
\mbox{s.t.}\quad 
& \sum_{k\in \mathcal{K}_{\bm{a},\bm{m}}} {c_k\over \sum_{(i,j)\in k}(\tau^c_j+\tau^p_j m_{ij})} \geq {\lambda\over \overline{\rho}}, \label{BPCA:service rate}\\
& a_j+m_j-1\leq L,~~\forall j\in \mathcal{J},\label{BPCA:placement}\\
& \sum_{i: (i,j)\in \mathcal{E}_{\bm{a},\bm{m}}}m_{ij}\sum_{k: (i,j)\in k} c_k \leq \widetilde{M}_j,~~\forall j\in \mathcal{J}, \label{BPCA:memory}\\
& a_j, m_j\in [L],\: c_k\in \mbbN,
\end{align}
\end{subequations}
where \eqref{BPCA:obj} and \eqref{BPCA:service rate} imply the objective of minimizing an upper bound on the mean response time under stability constraints (with safety margin), \eqref{BPCA:placement} ensures validity of the placement variables, and \eqref{BPCA:memory} enforces the memory constraints. 
Recall that $m_{ij}$ is the number of blocks processed at server $j$ for a chain traversing $(i,j)$, $\widetilde{M}_j$ is the number of cache slots at server $j$, and $\mathcal{K}_{\bm{a},\bm{m}}$/$\mathcal{E}_{\bm{a},\bm{m}}$ is the set of server chains/pairs feasible under block placement $(\bm{a},\bm{m})$.

\subsubsection{NP-hardness}
The optimization \eqref{eq:BPCA - conceptual} is a complex integer programming problem with nonlinear constraints. Moreover, the number of variables in $\bm{c}$ scales with the number of feasible chains $|\mathcal{K}_{\bm{a},\bm{m}}|$, which is generally exponential in the size of the system. 
Rigorously, we can prove that even though the block placement is given, the remaining cache allocation problem is still NP-hard. 

\begin{theorem}\label{thm:NP-hardness of BPCA}
Computing the optimal cache allocation as in \eqref{eq:BPCA - conceptual} is NP-hard even if the server chains are given. 
\end{theorem}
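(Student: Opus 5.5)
With $(\bm{a},\bm{m})$ fixed, the problem \eqref{eq:BPCA - conceptual} becomes an integer program over the chain capacities $\bm{c}$ alone. We minimize $\sum_k c_k$ subject to two constraints. The first is a covering constraint $\sum_k \mu_k c_k \ge \lambda/\overline{\rho}$. The second is the family of packing constraints \eqref{BPCA:memory}, which are linear in $\bm{c}$ with coefficients $m_{ij}$. The plan is a polynomial-time reduction from a known NP-hard integer problem, and the natural candidate is the \textbf{knapsack problem}. Its decision version asks for a subset of items with total weight at most $W$ and total value at least $V$. This lets us view each chain as an item with value $\mu_k$ and weight given by its memory consumption at one shared server. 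If the reduction to the unbounded variant is cleaner, I would use that instead, since it matches $c_k\in\mbbN$; both variants are NP-hard.

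\textbf{Construction.} Given items $n=1,\ldots,N$ with values $v_n$ and weights $w_n$, I would build an instance as follows.
\begin{enumerate}
\item Use one ``bottleneck'' server $j^*$ and, for each item $n$, a private server $j_n$. The block placement is chosen so that chain $k_n$ is $j_0\to j_n\to j^*\to j_{J+1}$ and the $m_{j_n j^*}$ differ across $n$. Server $j^*$ is the last server on every chain.
\item Encode the weights as cache slots at the shared server by arranging $m_{j_n j^*}=w_n$. To do this, server $j^*$ hosts blocks $\{a,\ldots,L\}$ and server $j_n$ hosts $\{1,\ldots,L-w_n\}$, with $a\le L-w_n+1$.
\item Set $\widetilde{M}_{j^*}=W$, and give each private server $j_n$ a cache capacity of $\widetilde{M}_{j_n}=L-w_n$. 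This forces $c_{k_n}\le 1$ and yields the 0--1 version.
\item Encode the values in the service rates $\mu_{k_n}=1/T_{k_n}$ by choosing $\tau^c_{j_n}$ so that $T_{k_n}=1/v_n$ (after scaling). Chains are given explicitly, so the dummy extra chains that the placement might also admit are excluded by the ``chains are given'' hypothesis.
\end{enumerate}
The polynomial encoding size is fine because the parameters are numbers written in binary, although $L$ must be at least $\max_n w_n$. Blocks are indices, so $L$ only needs to appear as a number, not as a unary list.

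\textbf{Main obstacle.} The objective is $\sum_k c_k$, not a value, so the correspondence must be made exact. The constraint set asks: does there exist $\bm{c}$ with $\sum_n c_{k_n} w_n\le W$ and $\sum_n v_n c_{k_n}\ge V$? Setting $\lambda/\overline{\rho}=V$ makes this exactly knapsack feasibility. Hence even deciding whether \eqref{eq:BPCA - conceptual} has any feasible solution is NP-hard, and so optimizing it is NP-hard too. This feasibility argument sidesteps the mismatch with the objective. If one insists that hardness come from the objective rather than feasibility, I would first try to make all $v_n$ large and nearly equal so that the count $\sum c_k$ controls the result. Failing that, I would reduce from a cardinality-constrained subset-sum variant, which is still NP-hard.
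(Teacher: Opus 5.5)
Your proposal is correct and is essentially the paper's argument: the paper also reduces the decision version of a knapsack problem to the feasibility of \eqref{eq:BPCA - conceptual} with the given chains. It takes $\lambda/\overline{\rho}$ as the value threshold, encodes item weights as per-job cache-slot requirements $m_{ij}$ against $\widetilde{M}_j$, and forces $c_k\in\{0,1\}$ with a server that has room for exactly one job. The only difference is that the paper reduces from the multidimensional knapsack problem, using one server per dimension chained serially with padding servers $v_{jk}$ and a final one-slot server. Your star gadget with a single shared server $j^*$ handles the one-dimensional case, which already suffices for NP-hardness; you just need $L\ge \max_n w_n+1$ rather than $L\ge\max_n w_n$, so that every private server $j_n$ hosts at least one block.
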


\emph{Remark:} Theorem~\ref{thm:NP-hardness of BPCA} implies that even if the block placement is given, optimizing server chain composition remains a hard problem.

\subsubsection{Block Placement}\label{subsubsec:Block Placement Algorithm}

Intuitively, block placement depends on the cache allocation strategy as the latter defines how the placed blocks will be used to form job servers. Since the cache allocation subproblem is already NP-hard by Theorem~\ref{thm:NP-hardness of BPCA}, we need a simpler (but possibly suboptimal) cache allocation strategy to facilitate the optimization of  block placement. In particular, the proof of Theorem~\ref{thm:NP-hardness of BPCA} implies that the cause of the hardness of cache allocation is the sharing of servers across chains. This inspires us to restrict our solution space to \emph{disjoint server chains} that can each host the entire set of blocks with sufficient capacity.  

\emph{Simplification:}
Given a \emph{required capacity} $c\in \mbbZ^+$ denoting the minimum number of jobs each server needs to serve concurrently, the maximum number of blocks we can place at server $j$ is
\begin{align}\label{eq:m_j(c)}
m_j(c) := \min\left(\left\lfloor{M_j\over s_m+s_c c}\right\rfloor,\: L \right),
\end{align}
obtained by reserving $c$ cache slots for each placed block. 
Accordingly, the mean service time each job spends at this server is upper-bounded by
\begin{align}
t_j(c) := \tau^c_j + \tau^p_j m_j(c),
\end{align}
achieved when all the $m_j(c)$ blocks are processed. 
For now we treat $c$ as a given parameter, which will be optimized later (see \eqref{eq:optimal c}). 

Forming disjoint server chains is equivalent to selecting disjoint subsets of servers such that each subset has enough memory to host all the blocks. Under cache allocation according to a required capacity $c$, the resulting total service rate can be bounded as follows. 

\begin{lemma}\label{lem:service rate bound}
Given a collection of disjoint subsets of servers $(\mathcal{J}_k)_{k\in\mathcal{K}}$ such that $\sum_{j\in \mathcal{J}_k}m_j(c)\geq L$ for all $k\in \mathcal{K}$, the block placement that places the entire set of $L$ blocks onto each subset $\mathcal{J}_k$ with $m_j(c)$ blocks at each server $j\in \mathcal{J}_k$ can achieve a total service rate of no less than $c\sum_{k\in \mathcal{K}}(\sum_{j\in \mathcal{J}_k}t_j(c))^{-1}$.  
\end{lemma}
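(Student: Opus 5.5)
The plan is to give an explicit block placement and cache allocation for each subset $\mathcal{J}_k$, and then check feasibility, chain capacity and chain service time one at a time.

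\textbf{Construction.} For each $k\in\mathcal{K}$, order the servers of $\mathcal{J}_k$ as $j_1,\ldots,j_{n_k}$. Place blocks contiguously: $a_{j_1}=1$, and $a_{j_{\ell+1}} = a_{j_\ell}+m_{j_\ell}$ for each $\ell$. Use $m_{j_\ell}=m_{j_\ell}(c)$, except that the last server is truncated so that $a_{j_{n_k}}+m_{j_{n_k}}-1= L$. This truncation is possible because $\sum_{j\in\mathcal{J}_k}m_j(c)\ge L$; servers after the point where block $L$ is reached can simply be left unused, or dropped from the chain. Then consecutive servers satisfy $a_{j_{\ell+1}} = a_{j_\ell}+m_{j_\ell}\le a_{j_{\ell+1}}+m_{j_{\ell+1}}-1$, so $(j_\ell,j_{\ell+1})\in\mathcal{E}_{\bm{a},\bm{m}}$. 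Likewise $(j_0,j_1)$ and $(j_{n_k},j_{J+1})$ are feasible links. Hence the chain $k=(j_0,j_1,\ldots,j_{n_k},j_{J+1})$ lies in $\mathcal{K}_{\bm{a},\bm{m}}$, and along it $m_{j_\ell j_{\ell+1}} = m_{j_{\ell+1}}$, so every server processes all blocks it hosts.

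\textbf{Capacity.} Set $c_k=c$ for these chains and $0$ for all other chains. One subtlety is that extra feasible chains may arise from overlaps across different subsets; these receive zero capacity, and they only matter for the memory constraint, where they contribute nothing. The subsets are disjoint, so each server $j$ belongs to exactly one designated chain. The left side of \eqref{BPCA:memory} then equals $m_j c$, from a single predecessor $i$ with $m_{ij}=m_j$. We need $m_j c\le \widetilde{M}_j=\lfloor (M_j-s_m m_j)/s_c\rfloor$, which reduces to $m_j(s_m+s_c c)\le M_j$. This holds since $m_j\le m_j(c)\le \lfloor M_j/(s_m+s_c c)\rfloor$; for integers, $m_j c\le (M_j-s_m m_j)/s_c$ implies $m_j c\le$ its floor.

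\textbf{Service rate.} By \eqref{eq:mean service time model},
\[
T_k=\sum_{\ell}\left(\tau^c_{j_\ell}+\tau^p_{j_\ell}m_{j_\ell}\right)\le \sum_{j\in\mathcal{J}_k}t_j(c),
\]
using $m_j\le m_j(c)$, nonnegativity of the $\tau$'s, and zero cost at $j_{J+1}$. Any unused servers only make the right-hand side larger. Therefore $\mu_k\ge (\sum_{j\in\mathcal{J}_k}t_j(c))^{-1}$, and \eqref{eq:total service rate} gives $\nu\ge c\sum_k(\sum_{j\in\mathcal{J}_k}t_j(c))^{-1}$.

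\textbf{Main obstacle.} The step needing the most care is the truncation of the last server together with the precise edge condition defining $\mathcal{E}_{\bm{a},\bm{m}}$. In particular, we must ensure every truncated $m_j\ge1$, which requires removing servers that would otherwise get zero blocks. We must also confirm that the cache requirement charged to each server is exactly $m_j c$ and not larger. The remaining steps are routine.
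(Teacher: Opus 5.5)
Your proposal is correct and follows the same route as the paper's proof: you place all $L$ blocks on each subset, give each of the disjoint chains capacity $c$, and bound each server's cache use by $c\,m_j(c)$ and its per-job time by $t_j(c)$, while also writing out the link-feasibility and floor details that the paper leaves implicit. The one nit is that the statement keeps $m_j(c)$ blocks on every server, so instead of truncating the last server you should shift its start to $L-m_j(c)+1$, as GBP-CR does in line~4; its incoming link then has $m_{ij}\le m_j(c)$, so your cache and service-time bounds still hold.
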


By restricting the solution space to disjoint server chains and reserving cache space according to the required capacity, we can simplify the joint optimization problem in \eqref{eq:BPCA - conceptual} as follows:\looseness=0
\begin{subequations}\label{eq:BP}
\begin{align}
\min_{(\mathcal{J}_k)_{k\in \mathcal{K}}}\quad & |\mathcal{K}| \label{BP:obj}\\
\mbox{s.t.}\quad & \sum_{k\in \mathcal{K}} {1\over \sum_{j\in \mathcal{J}_k}t_j(c)} \geq {\lambda\over \overline{\rho} c}, \label{BP:service rate} \\
& \sum_{j\in \mathcal{J}_k}m_j(c) \geq L,~~~\forall k\in \mathcal{K}, \label{BP:feasibility} \\
& \mathcal{J}_k\cap \mathcal{J}_{k'}=\emptyset,~~~\forall k,k'\in \mathcal{K},\: k\neq k', \label{BP:disjoint} \\
& \mathcal{J}_k \subseteq \mathcal{J},~~~\forall k\in \mathcal{K}, \label{BP:subset}
\end{align}
\end{subequations}
which aims at grouping the servers into the smallest number of disjoint subsets (ensured by \eqref{BP:disjoint}--\eqref{BP:subset}), subject to constraint \eqref{BP:feasibility} that ensures each subset to form a feasible server chain of capacity $c$, and constraint \eqref{BP:service rate} that ensures a sufficiently large total service rate according to Lemma~\ref{lem:service rate bound}. 

\emph{Algorithm:}
The simplified problem in \eqref{eq:BP} is still NP-hard. 

\begin{lemma}\label{lem:NP-hardness of server grouping}
The optimization in \eqref{eq:BP} is NP-hard. 
\end{lemma}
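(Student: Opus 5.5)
We prove NP-hardness of \eqref{eq:BP} by a polynomial-time reduction from \textsc{Bin Covering}, or equivalently from the decision version of \textsc{Partition}. The plan is to switch off the service-rate constraint \eqref{BP:service rate} so that only the grouping structure \eqref{BP:feasibility}--\eqref{BP:subset} remains. Since \eqref{eq:BP} \emph{minimizes} the number of groups, we then make the rate constraint force a lower bound on the number of feasible groups. The resulting decision question becomes ``can the servers be split into at least $K$ disjoint groups, each with total $m_j(c)\ge L$'', which is \textsc{Bin Covering}.

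\textbf{Construction.} Start from a \textsc{Partition} instance with positive integers $w_1,\ldots,w_n$ summing to $2W$. Create $J=n$ servers and set $c=1$ and $L=W$.
\begin{itemize}
\item Choose memories $M_j=w_j(s_m+s_c)$, with $w_j\le L$ (if some $w_j>W$ the instance is trivially ``no''), so that $m_j(1)=w_j$ by \eqref{eq:m_j(c)}.
\item Set $\tau^p_j=0$ and $\tau^c_j=\tau$ for all $j$, so that $t_j(1)=\tau$ and each group $\mathcal{J}_k$ contributes rate $1/(|\mathcal{J}_k|\tau)$.
\end{itemize}
This last choice is a complication: the rate then depends on group cardinalities. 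To avoid it, instead use $\tau^c_j=0$ and $\tau^p_j=\tau$, giving $t_j=\tau w_j$ and group rate $1/(\tau\sum_{j\in\mathcal{J}_k}w_j)$.

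\textbf{Equivalence.} With this choice, the groups used must each cover $W$, and disjointness gives total weight at most $2W$. Set $\lambda/(\overline\rho c)=2/(\tau W)$. Two disjoint groups each of weight at least $W$ must each have weight exactly $W$, giving rate $2/(\tau W)$. Any other feasible family has either at most one group, hence rate at most $1/(\tau W)$, or groups of weight above $W$, which is impossible for two groups by disjointness. So \eqref{eq:BP} is feasible iff a perfect partition exists. Deciding feasibility of \eqref{eq:BP} is therefore NP-hard, and so is the optimization. Optionally, adding the requirement that the optimum equals $2$ gives the same conclusion.

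\textbf{Remaining checks.}
\begin{itemize}
\item Verify that unused servers are allowed, since \eqref{BP:subset} does not require covering $\mathcal{J}$.
\item Verify that the rate sum over a family with three or more groups cannot reach $2/(\tau W)$. It cannot, because total weight $2W$ admits at most two disjoint groups of weight at least $W$.
\item Verify that all numbers (memories $w_j(s_m+s_c)$, $L=W$, $\lambda$) have polynomial encoding length. Since \textsc{Partition} is only weakly NP-hard, this gives weak NP-hardness. For the strong version, reduce instead from \textsc{3-Partition} with $K$ groups of exact weight $B$ and target rate $K/(\tau B)$, using the same argument.
\end{itemize}

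The main obstacle is making the non-linear rate constraint \eqref{BP:service rate} act as an exact counting constraint. The choice $\tau^c=0$ with rate inversely proportional to group weight, combined with the total-weight bound, handles this.
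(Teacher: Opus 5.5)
Your proposal is correct and is essentially the paper's own proof. The paper also reduces from Partition: it creates one server per number with $m_j(c)=t_j(c)=x_j$, sets $L$ to half the total, and sets the scaled rate target $\lambda/(\overline{\rho}c)$ to $2/L$. It then argues that two disjoint feasible groups must each have weight exactly $L$, while a single group gives rate at most $1/L$. Your construction with $\tau=1$ is the same, and it additionally spells out what the paper leaves implicit: realizing $m_j(1)=w_j$ and $t_j(1)=\tau w_j$ via $M_j=w_j(s_m+s_c)$, $\tau^c_j=0$, $\tau^p_j=\tau$, and handling the case $w_j>W$, where the cap at $L$ in $m_j(c)$ would otherwise break the correspondence.
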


\begin{algorithm}[tb]
\small
\SetKwInOut{Input}{input}\SetKwInOut{Output}{output}
\Input{Server set $\mathcal{J}$, required capacity $c$, \#blocks $m_j(c)$ and mean service time  $t_j(c)$ for each $j\in \mathcal{J}$, total \#blocks $L$, required (scaled) total service rate $\lambda/(\overline{\rho}c)$}
\Output{Block placement $(\bm{a}(c),\bm{m}(c))$}
$\widetilde{t}_j(c) \leftarrow t_j(c)/m_j(c)$, $\forall j\in \mathcal{J}$\;
$a\leftarrow 1$, $\nu\leftarrow 0$, $T\leftarrow 0$\;
\For{each server $j\in \mathcal{J}$ in increasing order of $\widetilde{t}_j(c)$ \label{GBP:3}}
{
$a_j(c) \leftarrow \min(a,\: L-m_j(c)+1)$\; \label{GBP:4}
$T\leftarrow T + t_j(c)$\;
$a\leftarrow \min(a+m_j(c)-1,\: L)+1$\;
\If{$a>L$}
{
$\nu\leftarrow \nu+1/T$\; \label{GBP:8}
\If{$\nu\geq \lambda/(\overline{\rho}c)$ \label{GBP:9}}
{break\;}
\Else
{$a\leftarrow 1$, $T\leftarrow 0$\;}
}
}
\caption{Greedy Block Placement with Cache Reservation (GBP-CR)}
\vspace{-.0em}
\label{Alg:GBP-CR}
\end{algorithm}
\normalsize

Nevertheless, this simplified problem formulation inspires an intuitive block placement algorithm. The basic intuition is that ``chaining fast servers together is better than mixing fast and slow servers''. For example, if $\exists$two subsets of servers $\mathcal{J}_1,\mathcal{J}_2$ not satisfying \eqref{BP:feasibility} yet with $\sum_{j\in \mathcal{J}_1}t_j(c) < \sum_{j\in \mathcal{J}_2}t_j(c)$, and two servers $j_1,j_2$ not in either subset that satisfy $m_{j_i}(c)\geq \max(L-\sum_{j\in \mathcal{J}_1}m_j(c),\: L-\sum_{j\in\mathcal{J}_2}m_j(c))$ ($i=1,2$) and $t_{j_1}(c)<t_{j_2}(c)$, then one can verify that 
\begin{align}
&{1\over \sum_{j\in \mathcal{J}_1}t_j(c) + t_{j_1}(c)} + {1\over \sum_{j\in \mathcal{J}_2}t_j(c) + t_{j_2}(c)} > \nonumber\\
&\hspace{6em} {1\over \sum_{j\in \mathcal{J}_1}t_j(c) + t_{j_2}(c)} + {1\over \sum_{j\in \mathcal{J}_2}t_j(c) + t_{j_1}(c)}. 
\end{align}
Thus, we can always achieve a higher total service rate by a given number of chains by adding a faster server to a faster chain and a slower server to a slower chain. 
Since different servers can host different numbers of blocks, we compare server speeds by the \emph{amortized mean service time} per block, given by 
\begin{align}\label{eq:amortized service time}
\widetilde{t}_j(c) := {t_j(c)\over m_j(c)}.
\end{align}
These ideas lead to a greedy block placement algorithm referred to as \emph{Greedy Block Placement with Cache Reservation (GBP-CR)}, shown in Alg.~\ref{Alg:GBP-CR}, which sequentially places blocks onto servers in descending speeds to form server chains (lines~\ref{GBP:3}--\ref{GBP:4}), until all the servers are processed or the scaled total service rate (denoted by $\nu$) satisfies the given requirement (line~\ref{GBP:9}). 

\emph{Complexity:} Alg.~\ref{Alg:GBP-CR} has a near-linear time complexity of $O(J\log{J})$, dominated by the sorting of servers in line~\ref{GBP:3}. 


\emph{Analysis:}
As an efficient solution, GBP-CR is expected to be generally suboptimal for \eqref{eq:BP} due to the NP-hardness of the problem. 
Nevertheless, GBP-CR is optimal in a special case of practical relevance. Specifically, consider the case where all the servers have an \emph{identical memory size} $M_j\equiv M$ ($\forall j\in \mathcal{J}$). This can be caused by servers having a standard memory configuration (e.g., 8 GB for consumer GPUs or 80 GB for datacenter GPUs) or owners limiting the amount of memory made available to the model serving system even if the servers may have more physical memory (e.g., when the GPUs are contributed by volunteers as in the case of PETALS~\cite{Borzunov23NeurIPS}). In this special case, GBP-CR is optimal in the following sense. 

\begin{theorem}\label{thm:optimality of GBP-CR}
In the case of homogeneous memory, i.e., $M_j\equiv M$ ($\forall j\in \mathcal{J}$), GBP-CR (Alg.~\ref{Alg:GBP-CR}) provides an optimal solution to \eqref{eq:BP}. 
\end{theorem}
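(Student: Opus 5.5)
With homogeneous memory, $m_j(c)\equiv m(c)$ is the same for every server. Every feasible chain therefore needs at least $n:=\lceil L/m(c)\rceil$ servers. Since $t_j(c)=\tau^c_j+\tau^p_j m(c)$, ordering servers by $\widetilde t_j(c)$ is the same as ordering them by $t_j(c)$. GBP-CR thus sorts servers by $t_j(c)$ and groups them into consecutive blocks of exactly $n$ servers: the $n$ fastest form chain 1, the next $n$ form chain 2, and so on. It stops at the first $K$ for which $\sum_{k\le K}1/T_k\ge \lambda/(\overline{\rho}c)$, where $T_k$ is the sum of $t_j(c)$ over the $k$-th group. The plan is to show that for every $K$, this greedy prefix of $K$ chains maximizes the left-hand side of \eqref{BP:service rate} among all feasible solutions with exactly $K$ chains. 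Optimality for \eqref{eq:BP} then follows. If some solution with $K'<K$ chains were feasible, the greedy $K'$-prefix would have at least as large a total rate and hence also satisfy \eqref{BP:service rate}, contradicting the fact that GBP-CR did not stop at $K'$. If GBP-CR exhausts all servers without meeting the requirement, the same argument with $K=\lfloor J/n\rfloor$ shows that \eqref{eq:BP} is infeasible.

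To prove the maximality claim, fix any feasible solution $(\mathcal{J}_k)_{k=1}^K$. First, reduce each subset to exactly $n$ servers. Removing the slowest servers from a subset with more than $n$ members keeps \eqref{BP:feasibility} satisfied and does not increase $\sum_{j\in\mathcal{J}_k}t_j(c)$, so the rate does not decrease. Second, replace the servers used by the $Kn$ overall-fastest ones. Swapping a used server for a faster unused server only decreases one chain's total time, so the union of the chosen servers can be taken to be the $Kn$ fastest. Third, apply an exchange argument to show that sorted consecutive grouping is optimal. Index the chains so that $T_1\le\dots\le T_K$. Suppose some $j_1\in\mathcal{J}_2$ is faster than some $j_2\in\mathcal{J}_1$, that is, $t_{j_1}<t_{j_2}$. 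Swapping them gives new totals $T_1-\delta$ and $T_2+\delta$ with $\delta=t_{j_2}-t_{j_1}>0$, while the sum $T_1+T_2$ is unchanged. Because $x\mapsto 1/x$ is convex, $1/(T_1-\delta)+1/(T_2+\delta)\ge 1/T_1+1/T_2$ whenever $T_1\le T_2$: spreading the pair further apart raises the sum of reciprocals. This is the same inequality the paper displays before Alg.~\ref{Alg:GBP-CR}.

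The main obstacle is making this exchange process terminate at the sorted grouping, since a swap can change which chain is the fastest. I would use a potential-function argument. Each swap strictly increases the sorted vector $(T_{(1)},\dots,T_{(K)})$ in the majorization order, because it moves the pair of totals further apart while preserving their sum. Since there are finitely many groupings, the process ends at a grouping admitting no improving swap. In such a grouping, for any two chains $T_k\le T_{k'}$, every server of chain $k$ is at least as fast as every server of chain $k'$. This is exactly GBP-CR's consecutive grouping, up to ties, which do not affect the rate. Hence the greedy $K$-prefix attains the maximum rate over all $K$-chain solutions, which completes the proof.
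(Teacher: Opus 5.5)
Your proposal is correct and takes essentially the same route as the paper's proof: reduce every chain to exactly $n(c)=\lceil L/m(c)\rceil$ servers, then use the swap inequality for $x\mapsto 1/x$ to show that consecutive grouping of the fastest servers in sorted order maximizes the rate. Your version is more careful than the paper's on three points the paper glosses over: you prove that the swap process terminates (via the majorization potential), you add an explicit step replacing the used servers by the $Kn$ fastest, and you make explicit the reduction from minimizing the number of chains to maximizing the rate for each fixed $K$.
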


\emph{Parameter tuning:}
The performance of GBP-CR (Alg.~\ref{Alg:GBP-CR}) crucially depends on its input parameter $c$, which controls a ``service time vs. waiting time''
tradeoff: a smaller $c$ allows more blocks to be placed at each server, which helps to shorten server chains and thus reduce service times; meanwhile, a larger $c$ leaves the servers with more residual memory after block placement, which allows them to run more jobs concurrently that helps to reduce waiting times. 

\if\thisismainpaper 0
\begin{figure}[!t]
   \centerline{\includegraphics[width=0.46\linewidth]{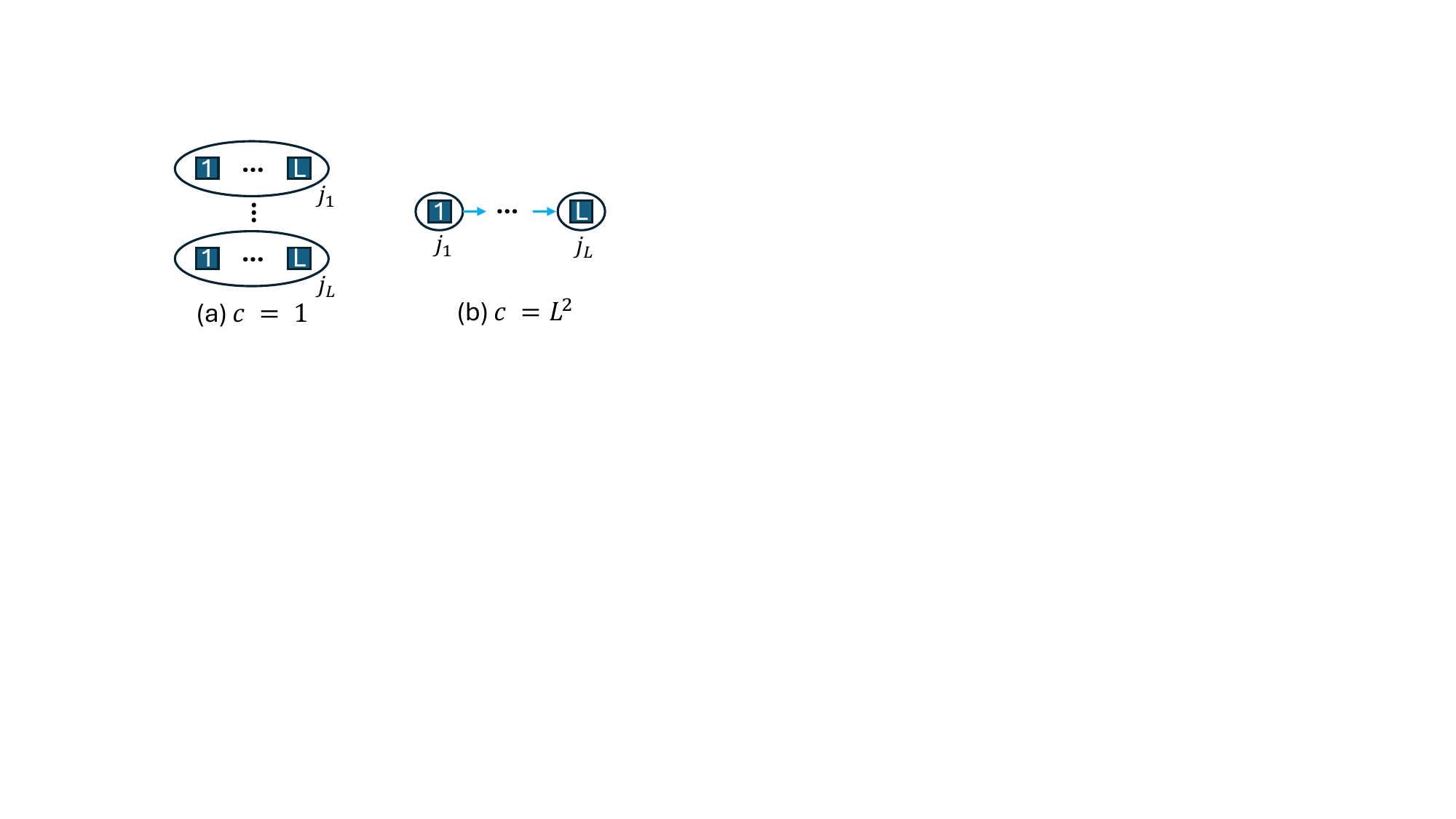}}
    \caption{Example for the impact of the capacity requirement $c$: for $\mathcal{J}=\{j_1,\ldots,j_L\}$, $M_j\equiv (L+1)s_m$, $s_m=L s_c$, $\tau^c_j\equiv \tau^c$, and $\tau^p_j\equiv \tau^p$, (a) shows the chains constructed by Alg.~\ref{Alg:GBP-CR} for $c=1$, and (b) shows the chains constructed by Alg.~\ref{Alg:GBP-CR} for $c=L^2$. 
    }
    \label{fig:throughput_delay_tradeoff}
\end{figure}

To illustrate this tradeoff, consider the example in Fig.~\ref{fig:throughput_delay_tradeoff}, where the squares denote blocks and the circles denote servers. Under $c=1$, we have $m_j(c)\equiv L$, which leads to $L$ single-server chains as shown in Fig.~\ref{fig:throughput_delay_tradeoff}a, each of capacity $1$. This configuration yields a mean service time of $T^{(1)} = \tau^c+L \tau^p$ and a total service rate of $\nu^{(1)}=L/(\tau^c+L \tau^p)$. Under $c=L^2$, we have $m_j(c)\equiv 1$, which leads to a single $L$-server chain as shown in Fig.~\ref{fig:throughput_delay_tradeoff}b with capacity $L^2$. This configuration yields a mean service time of $T^{(2)} = L\tau^c + L\tau^p$ and a total service rate of $\nu^{(2)} = L/(\tau^c+\tau^p)$. We see that $T^{(1)}< T^{(2)}$ but $\nu^{(2)}>\nu^{(1)}$, i.e., the second configuration is more suitable for high demands and the first configuration is more suitable for low demands. This indicates the existence of a demand-dependent optimal configuration of $c$. 
\fi

Let $\cmax := \lfloor (\max_j M_j - s_m)/s_c \rfloor$ denote the maximum number of concurrent jobs supported by any server. Assuming GBP-CR as the block placement algorithm, we can measure the performance under a given configuration of $c$ by the objective value of \eqref{eq:BP} achieved by GBP-CR for a required capacity of $c$, given by 
\begin{align}
K(c) := \min \{K:\: \sum_{l=1}^K {1\over \sum_{j\in k_l(c)} t_j(c)}\geq {\lambda\over \overline{\rho}c}\},
\end{align}
where $k_l(c)$ denotes the $l$-th server chain constructed by GBP-CR under capacity $c$. Then the optimal configuration is given by
\begin{align}\label{eq:optimal c}
c^* := \argmin_{c\in [\cmax]} c\cdot K(c).
\end{align}
By the argument in Section~\ref{subsubsec:BPCA Optimization}, such a configuration will minimize a surrogate of the mean response time if the jobs are processed by the disjoint server chains constructed by GBP-CR with a given capacity. 
The optimization \eqref{eq:optimal c} is easy to solve via brute-force search due to its small 1D solution space. 


\subsubsection{Cache Allocation}\label{subsubsec:Cache Allocation}

\if\thisismainpaper 1
Although during block placement, GBP-CR already reserves cache space at each server for processing $c$ jobs concurrently, the residual memory often allows more concurrency under proper cache allocation
as illustrated in Fig.~2 in \cite{Sun26arXiv}, which suggests the need to further optimize server chain composition after block placement. 
\else
\begin{figure}[!t]
   \centerline{\includegraphics[width=0.6\linewidth]{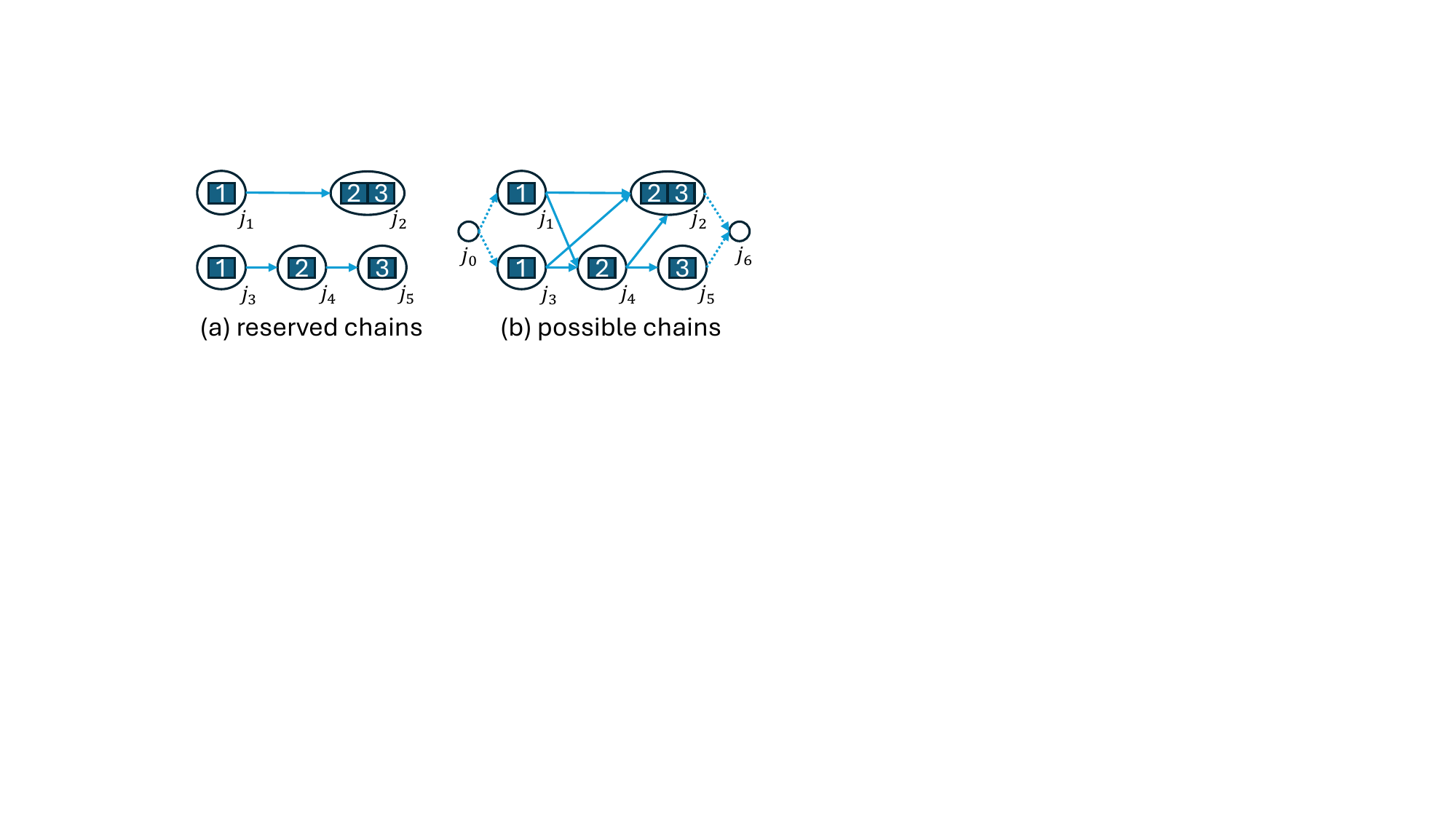}}
    \caption{Opportunity for further optimization after block placement with cache reservation: for $\mathcal{J}=\{j_1,\ldots,j_5\}$, $L=3$, $s_m=1$, $s_c=0.1$, $M_j=3$ if $j=j_2$ and $2$ otherwise, $\tau^c_j=2$ if $j=j_2$ and $1$ otherwise, and $\tau^p_{j_l}=l\epsilon$ for $0<\epsilon\ll 1$, (a) shows the chains constructed by Alg.~\ref{Alg:GBP-CR} for $c=1$, and (b) shows all possible chains under the same block placement.   
    }
    \label{fig:example_cache_allocation}
\end{figure}

Although during block placement, GBP-CR already reserves cache space at each server for processing $c$ jobs concurrently, the residual memory often allows more concurrency via proper cache allocation. 

\emph{Example:} Consider the problem instance in Fig.~\ref{fig:example_cache_allocation}, where the amortized mean service time for server $j_l$ ($l\in [5]$) under $c=1$ is $\widetilde{t}_{j_l}(c)=1+l\epsilon$. GBP-CR will form two disjoint server chains $k_1 = j_1\to j_2$ and $k_2=j_3 \to j_4 \to j_5$ as shown in Fig.~\ref{fig:example_cache_allocation}a, each with a capacity of $c=1$, which yields a total service rate of
\begin{align}
\nu = {1\over 3+5\epsilon} + {1\over 3+12\epsilon}.
\end{align}
Meanwhile, any directed path traversing all the 3 blocks is a feasible server chain as shown in Fig.~\ref{fig:example_cache_allocation}b, and the residual memory allows us to simultaneously run $c_1=5$ jobs on chain $k_1$, $c_2=5$ jobs on chain $k_2$, and $c_3=5$ jobs on a new chain $k_3=j_1 \to j_4 \to j_5$,  achieving a higher total service rate of 
\begin{align}
\nu = {5\over 3+5\epsilon} + {5\over 3+10\epsilon} + {5\over 3+12\epsilon}
\end{align}
under the same block placement. 
This example suggests the need to further optimize server chain composition after block placement. 
\fi

\emph{Algorithm:}
The main challenge is complexity. In addition to the NP-hardness of optimal cache allocation as stated in Theorem~\ref{thm:NP-hardness of BPCA}, the dimension of the solution space is also large. 
\if\thisismainpaper 0 
Specifically, let  $\mathcal{G}_{\bm{a},\bm{m}}=(\mathcal{J}_+, \mathcal{E}_{\bm{a},\bm{m}})$ denote a logical \emph{routing topology} under a given block placement $(\bm{a},\bm{m})$, formed by the extended server set $\mathcal{J}_+$ and the set $\mathcal{E}_{\bm{a},\bm{m}}$ of server pairs that can be traversed consecutively. Then each path in $\mathcal{G}_{\bm{a},\bm{m}}$ from the dummy server $j_0$ to the dummy server $j_{J+1}$ represents a valid server chain that traverses all the blocks in order, as illustrated by Fig.~\ref{fig:example_cache_allocation}b. 
\fi
Generally, the number of possible server chains under a given block placement is exponential in the number of servers, which causes severe challenges for cache allocation. Fortunately, a given load balancing policy may only assign jobs to a subset of all the possible server chains, which allows efficient cache allocation by only focusing on this subset.

\begin{algorithm}[tb]
\small
\SetKwInOut{Input}{input}\SetKwInOut{Output}{output}
\Input{Block placement $(\bm{a},\bm{m})$, residual server memory $(\widetilde{M}_j)_{j\in \mathcal{J}}$ measured in \#cache slots }
\Output{Server chains $\mathcal{K}$ and capacities $\bm{c}:=(c_k)_{k\in\mathcal{K}}$ }
$M^{(0)}_j\leftarrow\widetilde{M}_j$, $\forall j\in \mathcal{J}$\;
$\mathcal{E}^{(0)}\leftarrow \{(i,j)\in \mathcal{E}_{\bm{a},\bm{m}}:\: M^{(0)}_j\geq m_{ij}\}$\;
$\mathcal{K}\leftarrow \emptyset$, $l\leftarrow 1$\;
\While{$j_0$ is connected to $j_{J+1}$ in $\mathcal{G}^{(l-1)}:=(\mathcal{J}_+, \mathcal{E}^{(l-1)})$}
{
$k_l\leftarrow$ shortest $j_0\to j_{J+1}$ path in $\mathcal{G}^{(l-1)}$, with link cost $\tau^c_j+\tau^p_j m_{ij}$ for each $(i,j)\in \mathcal{E}^{(l-1)}$\; \label{GCA:4}
$\mathcal{K}\leftarrow \mathcal{K}\cup \{k_l\}$\;
$c_{k_l}\leftarrow \min_{(i,j)\in k_l} \lfloor {M^{(l-1)}_j\over m_{ij}} \rfloor$\; \label{GCA:6}
$M^{(l)}_j\leftarrow M^{(l-1)}_j - m_{ij}c_{k_l}$, $\forall j: (i,j)\in k_l$\; \label{GCA:7}
$\mathcal{E}^{(l)}\leftarrow \mathcal{E}^{(l-1)}$\;
\For{each $(i,j)\in k_l$ \label{GCA:9}}
{
\If{$M^{(l)}_j<m_{ij}$}
{$\mathcal{E}^{(l)}\leftarrow \mathcal{E}^{(l)}\setminus \{(i,j)\}$\;}
}\label{GCA:11}
$l\leftarrow l+1$\;
}
\caption{Greedy Cache Allocation (GCA)}
\vspace{-.0em}
\label{Alg:GCA}
\end{algorithm}
\normalsize

In particular, for load balancing among heterogeneous job servers with a central queue as considered in this work, a state-of-the-art load balancing policy is \emph{Join-the-Fastest-Free-Server (JFFS)}~\cite{Bhambay22PE}. Under JFFS, each incoming job will either be assigned to the fastest free job server if there is any, or entered into a central queue if none of the job servers is free. 
We find that this policy only utilizes a polynomial-sized subset of possible server chains, identified by \emph{Greedy Cache Allocation (GCA)} as shown in Alg.~\ref{Alg:GCA}. Let $M^{(l)}_j$ denote the residual memory at server $j$ after constructing and allocating cache for $l$ chains, measured in the number of cache slots; initially, $M^{(0)}_j=\widetilde{M}_j$ as defined in \eqref{eq:tildeM_j}.  Recall from Section~\ref{subsec:System Model} that $m_{ij}:=a_j+m_j-a_i-m_i$ is the number of cache slots occupied at server $j$ by each job processed by servers $i$ and $j$ consecutively. The algorithm sequentially allocates cache space by (i) finding the fastest chain among all the possible chains with available memory for at least one more job using shortest path routing (line~\ref{GCA:4}), (ii) allocating as much cache space as possible to this chain to maximize the number of jobs it can process concurrently (line~\ref{GCA:6}), and (iii) updating the residual server memory (line~\ref{GCA:7}) as well as the server pairs that can serve at least one more job  concurrently (lines~\ref{GCA:9}--\ref{GCA:11}). The process continues until no more job can be served using the remaining server memory. 
\if\thisismainpaper 0
For example, for the block placement in Fig.~\ref{fig:example_cache_allocation}b, GCA constructs $3$ server chains $\mathcal{K}=\{j_1\to j_2, j_1\to j_4 \to j_5, j_3\to j_4 \to j_5\}$ with a capacity of $5$ each. 
\fi

\emph{Complexity:} 
Each \textbf{while} loop in Alg.~\ref{Alg:GCA} will remove at least one link, as the link $(i^*,j^*)\in k_l$ achieving the $\min$ in line~\ref{GCA:6} must have $M^{(l)}_{j^*}<m_{i^* j^*}$ and thus be removed in line~\ref{GCA:11}. Thus, the number of \textbf{while} loops is bounded by $O(|\mathcal{E}^{(0)}|) = O(J^2)$. The time for each \textbf{while} loop is dominated by the time for shortest path routing in line~\ref{GCA:4}, which is $O(J\log{J}+J^2)$. Thus, the total time complexity of Alg.~\ref{Alg:GCA} is bounded by $O(J^4)$. This analysis also implies that the number of server chains constructed by GCA is bounded by $O(J^2)$, which can be much smaller than the total number of possible chains under the given block placement. 

\emph{Analysis:}
Even if the server chains constructed by GCA only constitute a small subset of all possible chains, they are sufficient for supporting JFFS type of job assignment as proved below. 

\begin{theorem}\label{thm:GCA conditional optimality}
Under any given block placement $(\bm{a},\bm{m})$, if GCA (Alg.~\ref{Alg:GCA}) returns a set of server chains $\mathcal{K}$ and the corresponding capacities $(c_k)_{k\in\mathcal{K}}$, then to assign jobs according to the principle of JFFS, it suffices to limit the utilized server chains to $\mathcal{K}$ and the number of concurrent jobs on each chain $k\in \mathcal{K}$ to $c_k$. 
\end{theorem}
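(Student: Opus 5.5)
We need to show that any JFFS-style dispatching, which always sends a job to the fastest feasible chain given current residual memory, uses only chains in $\mathcal{K}$ and never exceeds $c_k$ concurrent jobs on chain $k$. We model JFFS over the full composable system as follows. At any time, the set of ongoing jobs occupies some cache slots. An arriving job, or a job at the head of the central queue when a departure frees memory, is placed on the shortest $j_0\to j_{J+1}$ path among links $(i,j)\in\mathcal{E}_{\bm{a},\bm{m}}$ whose residual memory at $j$ is at least $m_{ij}$. Here "free job server" means a chain with enough residual memory for one more job, and "fastest" means minimum $T_k$. We fix a tie-breaking rule identical to the one used by the shortest-path routine in line~\ref{GCA:4}.

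The core claim, proved by induction, is the following. Suppose the chains in $\mathcal{K}=\{k_1,\ldots,k_{|\mathcal{K}|}\}$ carry $n_l\le c_{k_l}$ jobs respectively, and no other chains carry jobs. Then the fastest feasible chain for a new job is $k_{l^*}$, where $l^*:=\min\{l: n_l<c_{k_l}\}$. If no such $l$ exists, no chain is feasible.

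The induction compares the residual memory under the current state with the GCA residuals $M^{(l^*-1)}_j$. Since all chains $k_1,\ldots,k_{l^*-1}$ are full, the residual memory at each $j$ equals
\[
M^{(l^*-1)}_j+\sum_{l\ge l^*}(c_{k_l}-n_l)\,m_{i_l j}.
\]
It is therefore at least $M^{(l^*-1)}_j$, so chain $k_{l^*}$ is feasible, because GCA chose it in $\mathcal{G}^{(l^*-1)}$.

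For optimality we need that every feasible chain $k$ satisfies $T_k\ge T_{k_{l^*}}$. We argue by cases on whether $k$ lies in $\mathcal{G}^{(l^*-1)}$.

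\textbf{Case 1: $k$ lies in $\mathcal{G}^{(l^*-1)}$.} Then $k$ competed against $k_{l^*}$ in line~\ref{GCA:4}, and shortest-path optimality gives $T_k\ge T_{k_{l^*}}$.

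\textbf{Case 2: $k$ does not lie in $\mathcal{G}^{(l^*-1)}$.} Then $k$ uses some link $(i,j)$ removed at an earlier iteration $l'<l^*$, which means $M^{(l')}_j<m_{ij}$. Residual memory at $j$ can exceed $M^{(l')}_j$ only through unfilled slots of chains $k_l$ with $l\ge l^*>l'$ passing through $j$. Since the chains GCA produces are nondecreasing in cost, any chain using those slots is at least as slow as $k_{l^*}$.

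This last point is the main obstacle. Reusing freed memory of a later chain could in principle enable a mixed chain $k\notin\mathcal{K}$. To rule out $T_k<T_{k_{l^*}}$, we invoke monotonicity of GCA: $T_{k_1}\le T_{k_2}\le\cdots$, because $\mathcal{E}^{(l)}$ is decreasing, so shortest-path lengths are nondecreasing. Concretely, we would show that $k$ lies in $\mathcal{G}^{(l'-1)}$ for the last removal index $l'$ relevant to it. Since $k_{l'}$ was chosen there, $T_k\ge T_{k_{l'}}$. We then aim to show $T_k\ge T_{k_{l^*}}$ by noting that $k$ is feasible only thanks to residual memory that was also present in GCA's graph at iteration $l^*-1$, which would place $k$ in $\mathcal{G}^{(l^*-1)}$ and reduce this case to Case 1. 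If this direct argument fails, the fallback is to use the tie-breaking rule so that JFFS prefers the lowest-index chain among equal costs.

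Finally, departures only decrease some $n_l$, so the invariant ($n_l\le c_{k_l}$ and only chains in $\mathcal{K}$ are used) is preserved under every arrival and departure. Hence JFFS only ever uses chains in $\mathcal{K}$ with concurrency at most $c_k$, as claimed.
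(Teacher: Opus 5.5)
\textbf{Gap: the residual-memory formula is inverted.} Your skeleton matches the paper's proof: maintain the invariant that only chains of $\mathcal{K}$ carry jobs with $n_l\le c_{k_l}$, compare the current residual with GCA's $M^{(l^*-1)}_j$, and use shortest-path optimality of $k_{l^*}$ in $\mathcal{G}^{(l^*-1)}$. The problem is the formula the whole argument rests on. With $k_1,\ldots,k_{l^*-1}$ full and $k_l$ carrying $n_l$ jobs (set $m_{i_l j}:=0$ if $j\notin k_l$),
\[
M_j(t)=\widetilde{M}_j-\sum_{l}n_l\,m_{i_l j}=M^{(l^*-1)}_j-\sum_{l\ge l^*}n_l\,m_{i_l j}\le M^{(l^*-1)}_j .
\]
Your expression $M^{(l^*-1)}_j+\sum_{l\ge l^*}(c_{k_l}-n_l)m_{i_l j}$ is wrong. $M^{(l^*-1)}_j$ already contains all $\sum_{l\ge l^*}c_{k_l}m_{i_l j}$ slots that GCA later assigns to $k_{l^*},k_{l^*+1},\ldots$. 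So you count their free slots twice and never subtract the occupied ones. The ``unfilled slots'' form is valid only relative to GCA's final residual $M^{\mathrm{fin}}_j:=\widetilde{M}_j-\sum_l c_{k_l}m_{i_l j}\ge 0$.

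Your Case~2 is an artifact of this error, and you leave it unproved (``we would show'', ``we aim to show'', ``fallback''). Under your accounting it cannot be closed at all. A full chain $k_l$ with $l<l^*$ could look feasible again through later chains' unused slots, and such a chain is typically strictly faster than $k_{l^*}$. That would contradict both the choice of $k_{l^*}$ and the cap $c_{k_l}$. The monotonicity $T_{k_1}\le T_{k_2}\le\cdots$ does not rescue this: a mixed path can use cheap links of early chains plus one link of a later chain. Tie-breaking cannot repair a strict inequality.

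\textbf{Fix: Case~2 is vacuous.} With the correct inequality you recover the paper's argument, namely that the set of usable links at time $t$ lies inside $\mathcal{E}^{(l^*-1)}$. Suppose $(i,j)$ is usable at time $t$. Then $m_{ij}\le M_j(t)\le M^{(l^*-1)}_j\le M^{(l')}_j$ for every $l'\le l^*-1$, because GCA residuals are nonincreasing. Hence $(i,j)\in\mathcal{E}^{(0)}$, and it was never deleted, since deletion at iteration $l'$ requires $M^{(l')}_j<m_{ij}$.

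It follows that every feasible chain is a $j_0\to j_{J+1}$ path in $\mathcal{G}^{(l^*-1)}$, so your Case~1 gives $T_k\ge T_{k_{l^*}}$. In particular, each full chain $k_l$ with $l<l^*$ is infeasible, because its bottleneck link from line~6 has $M^{(l)}_{j}<m_{ij}$ and was deleted at iteration $l$. Together with your tie-breaking rule, this yields both the restriction to $\mathcal{K}$ and the cap $c_k$.

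You also need to redo the feasibility of $k_{l^*}$, which you derived from the wrong bound. On each link $(i,j)$ of $k_{l^*}$,
\[
M_j(t)=M^{\mathrm{fin}}_j+\sum_l(c_{k_l}-n_l)\,m_{i_l j}\ge (c_{k_{l^*}}-n_{l^*})\,m_{ij}\ge m_{ij},
\]
using $n_{l^*}<c_{k_{l^*}}$.
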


\emph{Remark:} Theorem~\ref{thm:GCA conditional optimality} implies that instead of considering all the exponentially many possible server chains, we only need to consider the $O(J^2)$ server chains constructed by GCA and bound their capacities accordingly, if the load balancing policy is JFFS. 
Note that this holds under any block placement, not only the block placement given by GBP-CR.

\subsection{Load Balancing}\label{subsec:Load Balancing}

As explained in Section~\ref{subsec:Optimization Problems}, after server chain composition (via block placement and cache allocation), the remaining task for the orchestrator becomes a classical load balancing problem of assigning incoming jobs to the composed server chains under their capacity constraints. 
In particular, since our system has a central queue, we can apply the JFFS policy~\cite{Bhambay22PE}, which has demonstrated superior empirical performance in comparison to the state-of-the-art load balancing policies with dedicated queues~\cite{Ainbinder24AS} and been used to establish a lower bound on mean response time for stationary load balancing policies~\cite{Bhambay22PE}.

\subsubsection{Adaptation of JFFS}
Since the original JFFS does not consider parallel processing at servers, we need to adapt it to account for the capability for a server chain to process multiple jobs in parallel. The adapted policy is shown in Alg.~\ref{Alg:JFFC}, referred to as \emph{Join-the-Fastest-Free-Chain (JFFC)}. Here $Z_k(t)$ ($k\in \mathcal{K}$) denotes the number of ongoing jobs on chain $k$, and $Q(t)$ denotes the number of jobs waiting in the central queue, both at time $t$.

\begin{algorithm}[tb]
\small
\SetKwInOut{Input}{input}\SetKwInOut{Output}{output}
\Input{Server chains $\mathcal{K}$ with service rates $(\mu_k)_{k\in \mathcal{K}}$ and capacities $(c_k)_{k\in \mathcal{K}}$}
\Output{Assignment decision for each job}
\tcp{Upon the arrival of a new job at time $t$:}
\If{$\exists k\in \mathcal{K}$ with $Z_k(t)<c_k$}
{$k^*\leftarrow \argmax\{\mu_k:\: Z_k(t)<c_k\}$\;
assign the job to chain $k^*$\;
}
\Else{add the job to the end of the central queue\;}
\tcp{Upon the completion of an existing job on chain $k$:}
\If{$Q(t)>0$}
{assign the first queued job to chain $k$\;}
\caption{Join-the-Fastest-Free-Chain (JFFC)}
\vspace{-.0em}
\label{Alg:JFFC}
\end{algorithm}
\normalsize

\subsubsection{Response Time Analysis}\label{subsubsec:Response Time Analysis}
Our focus is on the steady-state mean response time. 
 For this analysis, we assume that:
 \begin{enumerate}
 \item Jobs arrive according to a Poisson process of rate $\lambda$, each requiring an independent and exponentially distributed amount of work with mean $1$ (i.e., the service times for a server chain with service rate $\mu$ are independent and exponentially distributed with mean $1/\mu$); 
 \item The service times are independent of the inter-arrival times. 
 \end{enumerate}
 We note that Poisson arrivals and exponential service times are standard assumptions for steady-state analysis of load balancing systems and not needed in our previous algorithm development or analysis. As shown later (Section~\ref{subsec:PETALS-based Experiments}), while real arrival processes and service times may deviate from these assumptions, analytical results derived from these assumptions still provide meaningful guidance in practice.

For the ease of presentation, we sort the server chains into descending order of service rates $\{k_l\}_{l\in [K]}$ ($K:=|\mathcal{K}|$), and use $\mu_l$/$c_l$ to denote the service rate/capacity of the $l$-th fastest chain $k_l$. 
We denote the system state at time $t$ by a vector $\bm{Z}(t)$, where $Z_0(t)$ is the \emph{number of queued jobs} and $Z_l(t)$ ($l\in [K]$) is the \emph{number of ongoing jobs on the $l$-th fastest chain}. 
Under the above assumptions, $(\bm{Z}(t))_{t\geq 0}$ is a \emph{continuous-time Markov chain (CTMC)}, with a state space $\mathcal{Z}$ consisting of two disjoint parts: 
\begin{align}
\mathcal{Z}_1 &:= \{\bm{z}:\: z_0 = 0 \mbox{ and } z_l\in \{0,\ldots,c_l\}, \forall l\in [K]\}, \\
\mathcal{Z}_2 &:= \{(n,c_1,\ldots,c_K):\: n\in \mbbZ^+\}.
\end{align}
Under JFFC, the transition rate from $\bm{z}$ to $\bm{z}'$ is given by\footnote{We use $\bm{x}_{i_1:i_2}$ to denote the vector $(x_i)_{i=i_1}^{i_2}$ of a sub-range of elements.} 
\begin{align}
\hspace{-.75em} q(\bm{z},\bm{z}')\hspace{-.25em} =\hspace{-.25em} \left\{\hspace{-.5em}\begin{array}{ll}
\lambda & \hspace{-2em}\mbox{if }\bm{z}\mbox{ transitions into }\bm{z}' \mbox{ upon an arrival},\\
z_l\mu_l &\hspace{-2em} \mbox{if }z'_0=z_0=0, z'_l = z_l-1, z'_{l'}=z_{l'} (\forall l'\neq l),\\
\sum_{l=1}^K c_l\mu_l &\mbox{if }z'_0=z_0-1,\bm{z}'_{1:K}=\bm{z}_{1:K}=\bm{c}_{1:K},\\
0 & \mbox{o.w.,}
\end{array}
\right. \label{eq:q(z,z')}
\end{align}
where $\bm{z}$ upon an arrival will transition into $\bm{z}'$ with $z'_0=z_0+1$ and $\bm{z}'_{1:K} = \bm{z}_{1:K}$ if $\bm{z}_{1:K} = \bm{c}_{1:K}$ (i.e., all the chains are fully occupied), or $z'_{l^*} = z_{l^*}+1$ if $\exists l^*=\argmin\{l:\: z_l<c_l\}$ and $z'_l=z_l$ for $l\neq l^*$ (i.e., $k_{l^*}$ is the fastest chain with available capacity). 

JFFC is throughput-optimal, i.e., it guarantees stability whenever the arrival rate is less than the total service rate, as stated below. 

\begin{lemma}\label{lem:ergodicity}
The CTMC $(\bm{Z}(t))_{t\geq 0}$ with transition rate \eqref{eq:q(z,z')} is ergodic for any $\lambda< \sum_{l=1}^K c_l \mu_l$. 
\end{lemma}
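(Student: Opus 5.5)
The plan is to show that the CTMC is irreducible on $\mathcal{Z}$ and then prove positive recurrence with the Foster--Lyapunov criterion. Irreducibility is straightforward. From any state, a sequence of departures leads to the empty state $\bm{0}$: first the queued jobs are drained at rate $\nu:=\sum_{l=1}^K c_l\mu_l>0$, then the ongoing jobs complete at rates $z_l\mu_l>0$. Conversely, starting from $\bm{0}$, any state in $\mathcal{Z}_1$ can be reached by the following recipe. Let arrivals fill the chains in JFFC order until the target counts are exceeded on the faster chains. Then use departures from those faster chains to bring each $z_l$ down to its target, since a departure from chain $l$ only decrements $z_l$. Any state $(n,\bm{c})\in\mathcal{Z}_2$ is reached by filling every chain and then letting $n$ more arrivals occur. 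All rates involved are positive because $\mu_l>0$ and $\lambda>0$. The chain is also non-explosive, since its total outgoing rate is bounded by $\lambda+\nu$. So the chain is a uniformizable, irreducible CTMC on a countable state space.

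For positive recurrence, I would exploit the fact that the chain restricted to $\mathcal{Z}_2$ is a simple birth--death process. In those states the only possible transitions are $(n,\bm{c})\to(n+1,\bm{c})$ at rate $\lambda$ and $(n,\bm{c})\to(n-1,\bm{c})$ at rate $\nu$, where $(0,\bm{c})\in\mathcal{Z}_1$ is the full state. The cleanest route is to compute the stationary distribution directly via cut equations. Take the cut separating $\{(m,\bm{c}):m\ge n\}$ from the rest of the state space. The only transitions crossing this cut are the arrival from $(n-1,\bm{c})$, at rate $\lambda$, and the departure from $(n,\bm{c})$, at rate $\nu$. So any invariant measure $\pi$ satisfies $\pi(n,\bm{c})=(\lambda/\nu)^n\pi(0,\bm{c})$ for $n\ge1$.

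Next, fix $\pi$ on the finite set $\mathcal{Z}_1$. Irreducibility together with recurrence of the chain gives a unique invariant measure up to scaling, and the restriction of $\pi$ to $\mathcal{Z}_1$ is a finite positive vector. Total mass then equals $\sum_{\mathcal{Z}_1}\pi+\pi(0,\bm{c})\sum_{n\ge1}(\lambda/\nu)^n$, which is finite exactly when $\lambda<\nu$. To avoid having to establish recurrence first, I would instead run Foster's criterion directly with $V(\bm{z})=z_0$, taking the finite exceptional set to be $\mathcal{Z}_1$. For $\bm{z}\in\mathcal{Z}_2$ with $z_0\ge1$, the drift is $\sum_{\bm{z}'}q(\bm{z},\bm{z}')(V(\bm{z}')-V(\bm{z}))=\lambda-\nu<0$. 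On the finite set $\mathcal{Z}_1$ the drift is bounded, because all rates are bounded. Foster's theorem for irreducible, non-explosive CTMCs then yields positive recurrence, and together with irreducibility this gives ergodicity.

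I expect the main obstacle to be the careful irreducibility argument, specifically showing that every state of $\mathcal{Z}_1$ is reachable under the JFFC rule, since arrivals are forced onto the fastest free chain. This is handled by the fill-then-depart construction above: arrivals may fill faster chains beyond their targets, and departures, which are chain-specific, then correct those counts.
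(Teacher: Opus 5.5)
Your proposal is correct and follows essentially the same route as the paper: you prove irreducibility by a reachability argument, using the empty state $\bm{0}$ as the hub where the paper uses the full state $(0,\bm{c}_{1:K})$, and then apply the Foster--Lyapunov criterion with $V(\bm{z})=z_0$, exceptional set $\mathcal{Z}_1$, and drift $\lambda-\sum_{l=1}^K c_l\mu_l<0$ outside that set. Your explicit non-explosiveness remark (total rates bounded by $\lambda+\nu$) is a rigor point the paper omits; the cut-equation detour adds nothing, since you rightly set it aside in favor of Foster's criterion.
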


Lemma~\ref{lem:ergodicity} implies that as long as $\lambda< \sum_{l=1}^K c_l \mu_l$, the system will have a unique \emph{steady-state distribution, denoted by $\bm{\pi}$}, which characterizes the \emph{steady-state mean response time $\overline{T}$} by Little's law as \looseness=-1
\begin{align}\label{eq:steady-state mean response time}
\overline{T} = \mbbE_{\bm{\pi}}\left[\sum_{l=0}^K Z_l\right]/\lambda,
\end{align}
where $\bm{Z}\sim\bm{\pi}$ is the steady-state system state. 
Exact analysis of the steady-state distribution of $(\bm{Z}(t))_{t\geq 0}$ is highly challenging due to the high-dimensional nature of the state space; 
\if\thisismainpaper 1
see Appendix~A.3 in \cite{Sun26arXiv} for the exact analysis in a special case of $K=2$. 
\else
see Appendix~\ref{appendix:Analysis of JFFC for K=2} for the exact analysis in a special case of $K=2$. 
\fi 
Nevertheless, by collapsing the states with suitably defined transition rates, we can derive closed-form upper/lower bounds as follows. 

Define $\mathcal{Z}_n:=\{\bm{z}\in \mathcal{Z}:\: \sum_{l=0}^K z_l = n\}$ ($n\in \mbbN$) as the subset of states with $n$ jobs in the system. Then in the steady state, the flow balance equation between $\mathcal{Z}_{n-1}$ and $\mathcal{Z}_n$ is given by
\begin{align}
\lambda \sum_{\bm{z}\in \mathcal{Z}_{n-1}}\pi_{\bm{z}} = \sum_{\bm{z}\in \mathcal{Z}_n} \pi_{\bm{z}} \left(\sum_{l=1}^K z_l\mu_l\right) = \nu_n \sum_{\bm{z}\in\mathcal{Z}_n}\pi_{\bm{z}}, \label{eq:flow balance of Z_n}
\end{align}
where $\nu_n$ is the \emph{average steady-state death rate} for $\mathcal{Z}_n$, defined as\looseness=-1
\begin{align}\label{eq:nu_n}
\nu_n := \sum_{\bm{z}\in\mathcal{Z}_n}{\pi_{\bm{z}}\over \sum_{\bm{z}'\in\mathcal{Z}_n}\pi_{\bm{z}'}}\sum_{l=1}^K z_l \mu_l.
\end{align}
Therefore, if $(\nu_n)_{n\in\mbbN}$ is known, we can construct a birth-death process with a scalar state $\Phi(t)$ that represents the \emph{total number of jobs} in the system at time $t$, with a fixed birth rate $\lambda$ and a state-dependent death rate $\nu_n$ for each $n\in\mbbN$. The steady-state distribution $\bm{\phi}$ of this process must satisfy the flow balance equation $\lambda \phi_{n-1} = \nu_n \phi_n$, which compared to \eqref{eq:flow balance of Z_n} implies
$\phi_n = \sum_{\bm{z}\in\mathcal{Z}_n}\pi_{\bm{z}}$,
i.e.,\looseness=-1
\begin{align}
\mbbE_{\bm{\pi}}\left[\sum_{l=0}^K Z_l\right] = \sum_{n=0}^\infty n \sum_{\bm{z}\in\mathcal{Z}_n}\pi_{\bm{z}} = \sum_{n=0}^\infty n \phi_n = \mbbE_{\bm{\phi}}[\Phi]. \label{eq:equivalence of occupancy}
\end{align}
Thus, $(\Phi(t))_{t\geq 0}$ is equivalent to $(\bm{Z}(t))_{t\geq 0}$ in terms of mean system occupancy and hence mean response time in the steady state. 

However, we cannot construct $(\Phi(t))_{t\geq 0}$ as we do not know the death rate $\nu_n$ that depends on the unknown steady state distribution of $(\bm{Z}(t))_{t\geq 0}$ through \eqref{eq:nu_n}. Nevertheless, we can bound $\nu_n$ by\footnote{We use $(x)_+$ to denote $\max(x,\: 0)$.}
\begin{align}
\nu_n &\leq \sum_{l=1}^K \mu_l \cdot \min\left(c_l,\: \Big(n-\sum_{l'=1}^{l-1}c_{l'}\Big)_+ \right) =: \overline{\nu}_n, \label{eq:nu_n upper}\\
\nu_n &\geq \sum_{l=1}^K \mu_l \cdot \min\left(c_l,\: \Big(n-\sum_{l'=l+1}^K c_{l'} \Big)_+ \right) =: \underline{\nu}_n, \label{eq:nu_n lower}
\end{align}
where \eqref{eq:nu_n upper} is the maximum job departure rate under $n$ jobs achieved when the jobs run on the fastest chains, and \eqref{eq:nu_n lower} is the corresponding minimum job departure rate when the jobs run on the slowest chains. Accordingly, we can construct two birth-death processes: $(\underline{\Phi}(t))_{t\geq 0}$ with death rate $\overline{\nu}_n$ ($\forall n\in\mbbN$) and $(\overline{\Phi}(t))_{t\geq 0}$ with death rate $\underline{\nu}_n$ ($\forall n\in\mbbN$), both having a fixed birth rate $\lambda$. Their steady-state distributions help to bound the steady-state mean occupancy of our system as follows. 

\begin{theorem}\label{thm:mean occupancy bounds}
Let $\nu:=\sum_{l=1}^K c_l\mu_l$, $C:=\sum_{l=1}^K c_l$, $\rho:= \lambda/\nu$,  
\begin{align}\label{eq:phi_n}
\underline{\phi}_n := \left(1+\sum_{l=1}^{C-1}{\lambda^l\over \prod_{i=1}^l \overline{\nu}_i}+{\lambda^C \nu\over (\prod_{i=1}^C \overline{\nu}_i)(\nu-\lambda)} \right)^{-1} \prod_{i=1}^n{\lambda\over \overline{\nu}_i}
\end{align}
for $n=0,\ldots,C$, and define $\overline{\phi}_n$ similarly by replacing $\overline{\nu}_i$ with $\underline{\nu}_i$. For $\lambda<\nu$, the steady-state mean occupancy under JFFC is bounded by \looseness=0
\begin{align}
\mbbE_{\bm{\pi}}\left[\sum_{l=0}^K Z_l\right] &\geq \sum_{n=0}^{C-1}n \underline{\phi}_n + \underline{\phi}_C\left({\rho\over (1-\rho)^2}+{C\over 1-\rho} \right), \label{eq:E[Z] lower}\\
\mbbE_{\bm{\pi}}\left[\sum_{l=0}^K Z_l\right] &\leq \sum_{n=0}^{C-1}n \overline{\phi}_n + \overline{\phi}_C\left({\rho\over (1-\rho)^2}+{C\over 1-\rho} \right). \label{eq:E[Z] upper}
\end{align}
\end{theorem}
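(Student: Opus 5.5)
The plan is a sample-path (stochastic) comparison between the birth-death processes $(\underline{\Phi}(t))$, $(\Phi(t))$, $(\overline{\Phi}(t))$, followed by an explicit computation of the mean occupancy of the two bounding processes.

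\textbf{Step 1: stationary laws of the bounding processes.} By \eqref{eq:nu_n upper}--\eqref{eq:nu_n lower}, both $\overline{\nu}_n$ and $\underline{\nu}_n$ are nondecreasing in $n$. For $n\ge C$ both equal $\nu$, since all chains are then fully occupied regardless of the ordering. Hence for $\lambda<\nu$ each bounding process is an ergodic birth-death chain. Its stationary law satisfies $\phi_n=\phi_0\prod_{i=1}^n \lambda/\nu_i$ for $n\le C$ and $\phi_n=\phi_C\rho^{n-C}$ for $n\ge C$. Summing the geometric tail gives the normalization constant in \eqref{eq:phi_n}, because $\sum_{n\ge C}\rho^{n-C}=\nu/(\nu-\lambda)$. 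The mean then splits as
\[
\sum_{n<C}n\phi_n+\phi_C\sum_{m\ge0}(C+m)\rho^m=\sum_{n<C}n\phi_n+\phi_C\left(\frac{C}{1-\rho}+\frac{\rho}{(1-\rho)^2}\right).
\]
This yields the right-hand sides of \eqref{eq:E[Z] lower} and \eqref{eq:E[Z] upper}, once we show they are the means of $\underline{\Phi}$ and $\overline{\Phi}$ respectively.

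\textbf{Step 2: comparison of the exact collapsed process with the bounds.} By \eqref{eq:equivalence of occupancy}, it suffices to show $\mbbE_{\underline{\bm\phi}}[\underline{\Phi}]\le \mbbE_{\bm\phi}[\Phi]\le \mbbE_{\overline{\bm\phi}}[\overline{\Phi}]$, where $\Phi$ has death rates $\nu_n$ satisfying $\underline{\nu}_n\le\nu_n\le\overline{\nu}_n$ (from \eqref{eq:nu_n upper}--\eqref{eq:nu_n lower}, since every state in $\mathcal{Z}_n$ has departure rate between these extremes). Write $r_n=\prod_{i\le n}\lambda/\nu_i$ for $\Phi$, and similarly $\underline r_n$, $\overline r_n$. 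Then $\underline r_n\le r_n\le \overline r_n$ termwise, and moreover the ratios $r_n/\underline r_n$ and $\overline r_n/r_n$ are nondecreasing in $n$. This is the monotone likelihood ratio property, which implies stochastic ordering of the normalized distributions and therefore ordering of their means. Equivalently, one can couple the three birth-death chains with common arrivals and uniformized deaths so that $\underline\Phi\le\Phi\le\overline\Phi$ pathwise. In the coupling argument, monotonicity of $\underline{\nu}_n$ and $\overline{\nu}_n$ in $n$ is what handles the states where the processes are unequal. I would use the likelihood-ratio route, as it needs no monotonicity of the unknown $\nu_n$.

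\textbf{Main obstacle.} The main obstacle is justifying that $\Phi$ with the unknown rates $\nu_n$ is a legitimate ergodic birth-death chain whose stationary law is $\phi_n=\sum_{\bm z\in\mathcal Z_n}\pi_{\bm z}$. This comes from the cut equations \eqref{eq:flow balance of Z_n} together with Lemma~\ref{lem:ergodicity}. The remaining task is checking that monotone likelihood ratio indeed holds: it follows since $(r_n/\underline r_n)/(r_{n-1}/\underline r_{n-1})=\underline\nu_n/\nu_n\le1$. Care is needed with the sign here. This inequality makes $r_n/\underline r_n$ nonincreasing, which means $\underline\phi$ is stochastically \emph{larger} than $\phi$. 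I would therefore double-check the pairing of $\underline\phi$ (built from $\overline\nu$) with the lower bound. With $\overline\nu_n\ge\nu_n$ we get $\overline\nu_n$ faster deaths, hence smaller occupancy, so $\underline\phi$ built from $\overline{\nu}$ does give the lower bound, consistent with the statement.
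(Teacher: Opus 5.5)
Your proof is correct. Step 1 matches the paper's computation of the stationary laws and means of the two bounding chains. The difference is in Step 2.

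\textbf{The paper's route.} The paper proves a sensitivity lemma: for a birth-death chain with constant birth rate, it differentiates the stationary mean with respect to a single death rate and obtains $\partial \mbbE_{\bm\phi}[\Phi]/\partial\nu_i\le -\frac{\phi_0^2}{\nu_i}B_{i-}B_{i+}\le 0$. It then raises (or lowers) the rates $\nu_n$ to $\overline{\nu}_n$ (or $\underline{\nu}_n$). Only $n=1,\dots,C-1$ actually change, since all three rate sequences equal $\nu$ for $n\ge C$.

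\textbf{Your route.} You compare the distributions directly. Both $\phi_n/\underline{\phi}_n\propto\prod_{i\le n}\overline{\nu}_i/\nu_i$ and $\overline{\phi}_n/\phi_n\propto\prod_{i\le n}\nu_i/\underline{\nu}_i$ are nondecreasing in $n$. Hence $\underline{\bm\phi}\le_{\mathrm{lr}}\bm\phi\le_{\mathrm{lr}}\overline{\bm\phi}$, which gives stochastic ordering and thus ordering of the means (finite because of the geometric tails). Your argument is one-shot, needs no differentiation, and proves more: the whole distribution of total occupancy is sandwiched, so tail probabilities $\Pr(\sum_l Z_l\ge n)$ are bounded too, not just the mean. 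The paper's route instead gives a reusable statement that the mean is monotone under arbitrary perturbations of the death rates.

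\textbf{A sign slip in your last paragraph.} You compute $(r_n/\underline r_n)/(r_{n-1}/\underline r_{n-1})=\underline{\nu}_n/\nu_n$. But $\underline r_n=\prod_{i\le n}\lambda/\overline{\nu}_i$ is the weight for $\underline\Phi$, so this ratio is actually $\overline{\nu}_n/\nu_n\ge 1$.
\begin{itemize}
\item Therefore $r_n/\underline r_n$ is nondecreasing and $\bm\phi$ is stochastically larger than $\underline{\bm\phi}$, exactly as you stated in Step 2.
\item Your sentence saying $\underline{\bm\phi}$ is stochastically larger is backwards. The correct likelihood-ratio computation, not the heuristic about faster deaths, should carry the argument.
\end{itemize}
Separately, the coupling alternative does not need $\underline{\nu}_n,\overline{\nu}_n$ to be monotone in $n$. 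For skip-free chains with common arrivals, pointwise ordering of the death rates at equal states already preserves the order.
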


Plugging the bounds from Theorem~\ref{thm:mean occupancy bounds} into \eqref{eq:steady-state mean response time} then yields closed-form upper and lower bounds on the steady-state mean response time under JFFC. 

\emph{Remark:}
We note that the existing analysis of JFFS in \cite{Bhambay22PE} only modeled the system state by a scalar $Z(t):= \sum_{l=0}^K Z_l(t)$ (i.e., the {total number of jobs} in the system at time $t$), under the assumption that jobs will always occupy the fastest job servers under JFFS. In contrast, we show that this assumption is incorrect in the steady state 
\if\thisismainpaper 1
(see the example in Appendix~A.3 in \cite{Sun26arXiv}), 
\else
(see the example in Appendix~\ref{appendix:Analysis of JFFC for K=2}), 
\fi
and only leads to a lower bound as shown by Theorem~\ref{thm:mean occupancy bounds}. 

\subsubsection{Improved Parameter Tuning}
The closed-form bounds in Theorem~\ref{thm:mean occupancy bounds} enable another method of tuning parameter $c$ in GBP-CR (Alg.~\ref{Alg:GBP-CR}): instead of minimizing the surrogate objective in \eqref{eq:optimal c}, we can choose $c$ to minimize the upper/lower bound in Theorem~\ref{thm:mean occupancy bounds} for the chains constructed by GBP-CR (Alg.~\ref{Alg:GBP-CR}) and GCA (Alg.~\ref{Alg:GCA}). Compared to the surrogate objective, these bounds better represent the actual mean response time under the proposed solution, which leads to better parameter tuning 
\if\thisismainpaper 1
(see Fig.~6 in \cite{Sun26arXiv}), 
\else
(see Fig.~\ref{fig:parameter_optimization}), 
\fi 
at the cost of slightly more computation at the orchestrator (because instead of only running GBP-CR for each candidate value of $c$, the orchestrator now needs to run both GBP-CR and GCA).

\section{Performance Evaluation}\label{sec:Performance Evaluation}

We will test our proposed solutions against benchmarks in serving LLM inference requests by distributed GPU nodes via two complementary evaluation methods: (i) model-driven simulations that validate our solution under its original assumptions, and (ii) experiments that evaluate the actual performance when applying our solution to a leading open-source Internet‑distributed LLM inference system called PETALS~\cite{Borzunov23NeurIPS}. 

\subsection{Model-driven Simulations}\label{subsec:Model-driven Simulations}

\subsubsection{Simulation Settings}\label{subsubsec:Simulation Settings}


\if\thisismainpaper 1
We simulate the processing of chain-structured jobs according to the system model in Section~\ref{subsec:System Model}, with parameters set to represent a realistic case of pipeline-parallel LLM inference, assuming a 176B-parameter LLM and two types of servers with GPUs comparable to 3g.40gb and 2g.20gb generated from A100 by Nvidia's multi-instance GPU technology~\cite{NvidiaMIG}\footnote{Our simulation code is available at: \url{https://github.com/TingyangSunJeff/load_balance_llm_simulator}.}.     
We consider a geographically-distributed deployment as assumed by PETALS~\cite{Borzunov23NeurIPS}, and set $\tau^c_j$ 
according to communication delays in the RIPE Atlas European network~\cite{ripe_atlas}. See details in Section~4.1.1 of \cite{Sun26arXiv}. 
We randomly select one node as the orchestrator location and $J$ other nodes as server locations, where $\eta$ fraction of randomly selected servers are equipped with high-performance GPUs and the rest are equipped with low-performance GPUs.  
Jobs arrive according to a Poisson process of rate $\lambda$, each having an independent and exponentially distributed size with mean $1$, where serving a job of size $r$ by a server chain with service rate $\mu_k$ takes time $r/\mu_k$. We set $\lambda=0.2 ~\text{requests/s}$ and $\overline{\rho}=0.7$. Results are averaged over 20 Monte Carlo runs.  
\else
We simulate the processing of chain-structured jobs according to the system model in Section~\ref{subsec:System Model}, with parameters set to represent the use case of pipeline-parallel LLM inference\footnote{Our simulation code is available at: \url{https://github.com/TingyangSunJeff/load_balance_llm_simulator}.}. Specifically, we set $s_m=1.32$ GB and $s_c=0.11$ GB according to the tensor sizes for model parameters and KV cache\footnote{The KV cache size is computed based on a maximum sequence length of 2048.} per transformer layer and $L=70$ according to the number of layers for BLOOM-176B~\cite{BigScience23BLOOM}. Assuming an average input length of $2,000$ tokens and an average output length of $20$ tokens according to the traces from \cite{Patel2024Splitwise}, we set $M_j$ and $\tau^p_j$ according to two types of GPUs: $M_j= 40$ GB and $\tau^p_j=109$ ms\footnote{\label{fn:tau^p model}Let $t_o\approx 1$ ms denote the per-block overhead time, $t^I_j$ denote the per-block-per-token prefill time, $t^O_j$ denote the per-block-per-token decode time, and $\overline{l}_{in}$/$\overline{l}_{out}$ denote the average input/output length. Then $\tau^p_j=t_o + t^I_j\overline{l}_{in} + t^O_j (\overline{l}_{out}-1)$. BLOOM-176B requires $F=5$ GFLOPs per block per token and $s_m=1.32$ GB per block under NF4, which means that for a server with $f_j$ TFLOPS and $b_j$ GB/ms memory bandwidth, $t^I_j\approx F/f_j$ ms and $t^O_j\approx s_m/b_j$ ms, as prefill is compute-bound and decode is memory-bound.} 
for a high-performance GPU node with 120 TFLOPS and a memory bandwidth of $1.02$ GB/ms, and $M_j=20$ GB and $\tau^p_j=175$ ms for a low-performance GPU node with 80 TFLOPS and a memory bandwidth of $0.51$ GB/ms. These GPU types are comparable to slices of A100 (3g.40gb and 2g.20gb) generated by Nvidia's multi-instance GPU technology~\cite{NvidiaMIG}. 

We consider geographically-distributed deployments as targeted by PETALS~\cite{Borzunov23NeurIPS}, and set $\tau^c_j$ 
according to the communication delays in a real wide-area network, 
the RIPE Atlas European network~\cite{ripe_atlas}. 
To simulate the communication method in PETALS~\cite{Borzunov23NeurIPS}, where the node initiating an inference session (i.e., the orchestrator) relays data between servers, we set $\tau^c_j$ according to the RTT between the nodes hosting the orchestrator and server $j$, 
plus $18$ ms to account for overhead (e.g.,  serialization or deserialization time). We randomly select one node as the orchestrator location and $J$ other nodes as server locations, where $\eta$ fraction of randomly selected servers are equipped with high-performance GPUs and the rest are equipped with low-performance GPUs.  
Jobs arrive according to a Poisson process of rate $\lambda$, each having an independent and exponentially distributed size with mean $1$, where serving a job of size $r$ by a server chain with service rate $\mu_k$ takes time $r/\mu_k$. By default, $J = 20$, $\eta = 0.2$, $\lambda=0.2 ~\text{requests/s}$, and $\overline{\rho}=0.7$. Results are averaged over 20 Monte Carlo runs. 
\fi


\subsubsection{Unit Tests}
We first validate the performance of individual algorithms under their design assumptions. 
\if\thisismainpaper 1
The results, shown in Section~4.1.2 of \cite{Sun26arXiv}, not only validate our theoretical analysis of the proposed algorithms (Alg.~\ref{Alg:GBP-CR}--\ref{Alg:JFFC}), but also demonstrate the importance of properly configuring the cache reservation parameter $c$ and the efficacy of our bounds from Theorem~\ref{thm:mean occupancy bounds} (particularly the lower bound) in producing a good configuration. 
\fi

\if\thisismainpaper 0
\emph{Block placement:}
To evaluate GBP-CR (Alg.~\ref{Alg:GBP-CR}) in solving \eqref{eq:BP}, we compare the objective in \eqref{BP:obj} scaled by $c$, which is a surrogate for the mean response time, between GBP-CR and $100$ randomly generated feasible block placements 
reported as a box–whisker plot in Fig.~\ref{fig:gbp_cr_boxplots}. To validate Theorem~\ref{thm:optimality of GBP-CR} (optimality of GBP-CR under homogeneous memory), we separately test a homogeneous case (Fig.~\ref{fig:gbp_cr_boxplot_homogeneous}) and a heterogeneous case (Fig.~\ref{fig:gbp_cr_boxplot_heterogeneous}). In both cases, GBP-CR achieves equally good or better performance compared to the best solution found by randomized brute-force search. 


\begin{figure}[!t]
    \centering
    \begin{subfigure}[t]{0.4\linewidth}
        \centering
        \includegraphics[width=\linewidth]{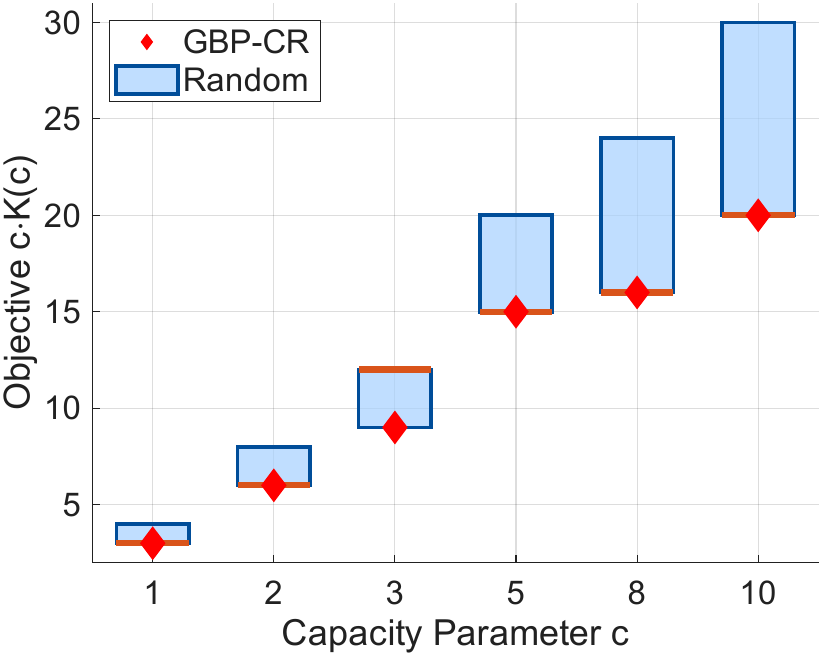}
        \caption{(a) $\eta = 0$}
        \label{fig:gbp_cr_boxplot_homogeneous}
    \end{subfigure}
    \begin{subfigure}[t]{0.4\linewidth}
        \centering
        \includegraphics[width=\linewidth]{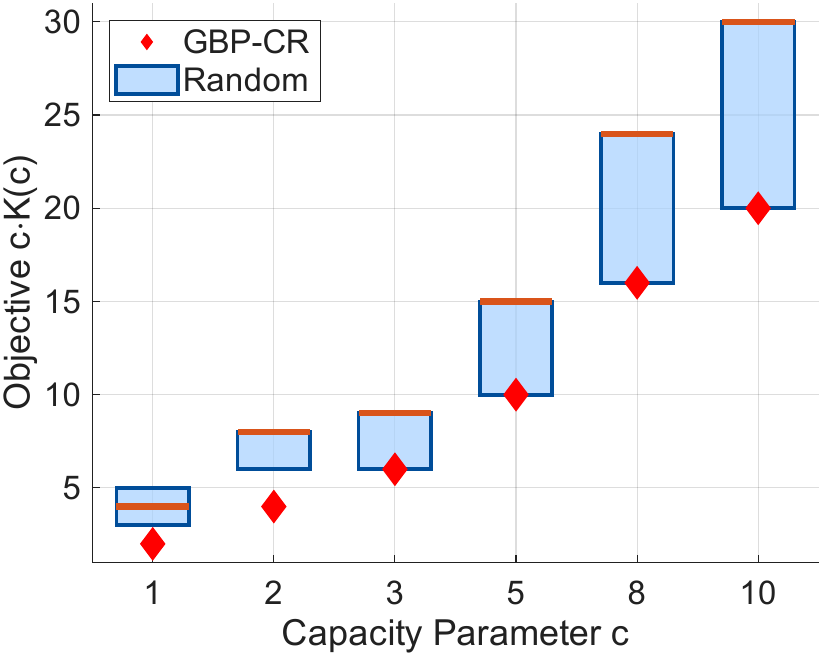}
        \caption{(b) $\eta = 0.2$}
        \label{fig:gbp_cr_boxplot_heterogeneous}
    \end{subfigure}
    \caption{Evaluation of GBP-CR under (a) homogeneous settings and (b) heterogeneous settings ($c=7$).}
    \label{fig:gbp_cr_boxplots}
\end{figure}



\emph{Cache allocation:}
To evaluate GCA (Alg.~\ref{Alg:GCA}) in solving \eqref{eq:BPCA - conceptual} under a given block placement, we assess the number of ``job servers'' to achieve a required total service rate of $\lambda/\overline{\rho}$ as in \eqref{BPCA:obj} (the smaller, the better). Under a block placement given by GBP-CR, we compare the performance of GCA with (i) an upper bound yielded by  only using the disjoint server chains and the reserved cache space given by GBP-CR (`$c\cdot K(c)$'); (ii) a lower bound given by $\lceil \lambda/(\overline{\rho} \mu_1) \rceil$, where $\mu_1$ is the highest service rate per chain (`Lower Bound'); (iii) a conditionally optimal cache allocation by solving the integer linear program resulting from plugging the server chains constructed by GCA into \eqref{eq:BPCA - conceptual} (`Optimal ILP'). The result in Fig.~\ref{fig:gca_comparison} shows that further optimizing cache allocation by GCA can substantially improve the performance and even achieve optimality under light loads. 
%

\begin{figure}[!t]
    \centering
    \includegraphics[width=0.4\linewidth]{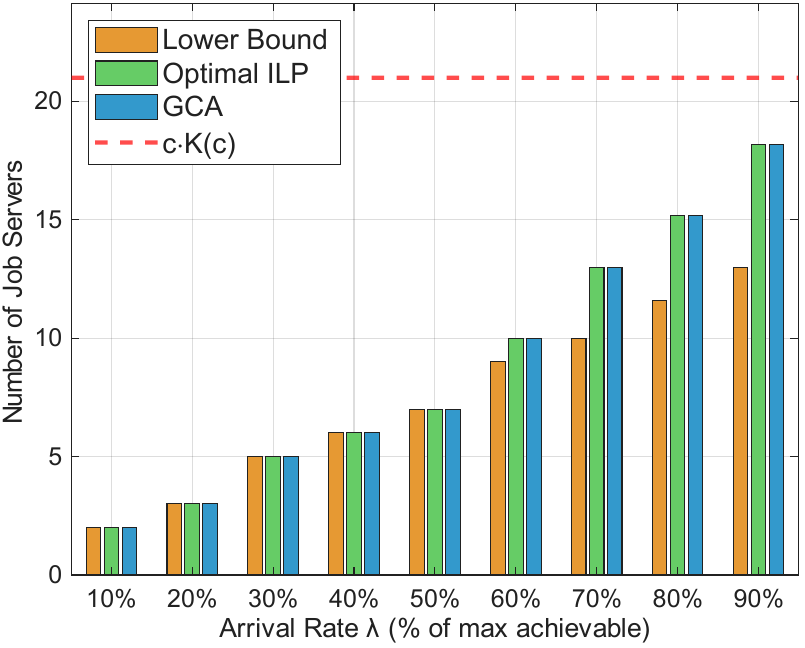}
    \caption{Evaluation of GCA under a fixed block placement produced by GBP-CR ($c=7$, $\lambda$ given in percentage of the total service rate of all the chains constructed by GCA). 
    }
    \label{fig:gca_comparison}
\end{figure}

\emph{Load balancing:}
We evaluate JFFC (Alg.~\ref{Alg:JFFC}) under a fixed set of server chains and their capacities given by GBP-CR + GCA in terms of the mean response time. Fig.~\ref{fig:jffc_other_policies} shows the comparison with other load balancing policies, including baselines like Join-the-Shortest-Queue (JSQ)~\cite{Winston77JAP} and Joint-the-Idle-Queue (JIQ)~\cite{Lu11PE} as well as more advanced policies like Smallest-Expected-Delay (SED) and Speed-Aware JSQ (SA-JSQ)~\cite{Bhambay22PE} (all the policies have been extended to account for parallel processing). Fig.~\ref{fig:jffc_bounds} shows the comparison with the lower/upper bound from Theorem~\ref{thm:mean occupancy bounds}. The results not only validate the state-of-the-art performance of JFFC, but also illustrate the gap between its actual performance and the bound that has been ignored in \cite{Bhambay22PE}.  
Moreover, we also observe that a large percentage (over $85\%$) of the mean response time for JFFC is the mean service time when the load factor is below $0.7$, which justifies our selection of $\overline{\rho}$. 

\begin{figure}[!t]
    \centering
    \begin{subfigure}[t]{0.4\linewidth}
        \centering
        \includegraphics[width=\linewidth]{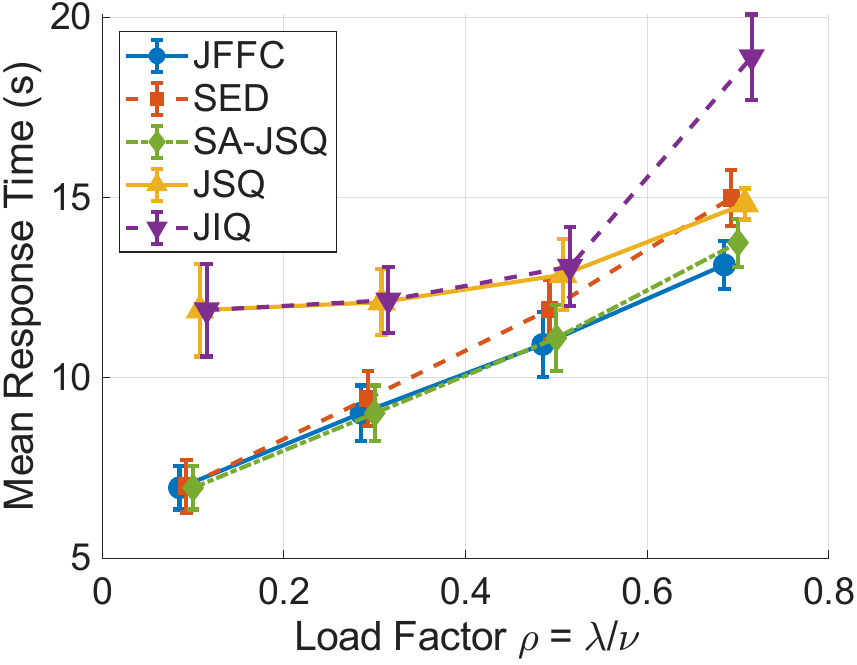}
        \caption{(a) comparison across policies}
        \label{fig:jffc_other_policies}
    \end{subfigure}
    \begin{subfigure}[t]{0.4\linewidth}
        \centering
        \includegraphics[width=\linewidth]{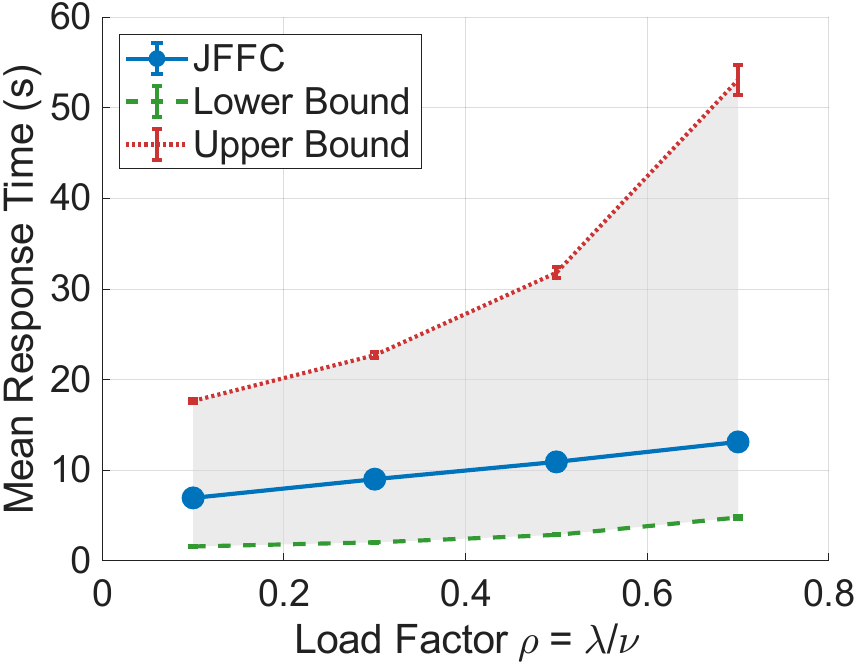}
        \caption{(b) comparison with bounds}
        \label{fig:jffc_bounds}
    \end{subfigure}
    \caption{Evaluation of JFFC under a fixed set of server chains produced by GBP-CR + GCA (for $c=7$).  }
    \label{fig:jffc_response_time_comparison}
\end{figure}


\begin{figure}[!t]
    \centering
    \begin{subfigure}[t]{0.4\linewidth}
        \centering
        \includegraphics[width=\linewidth]{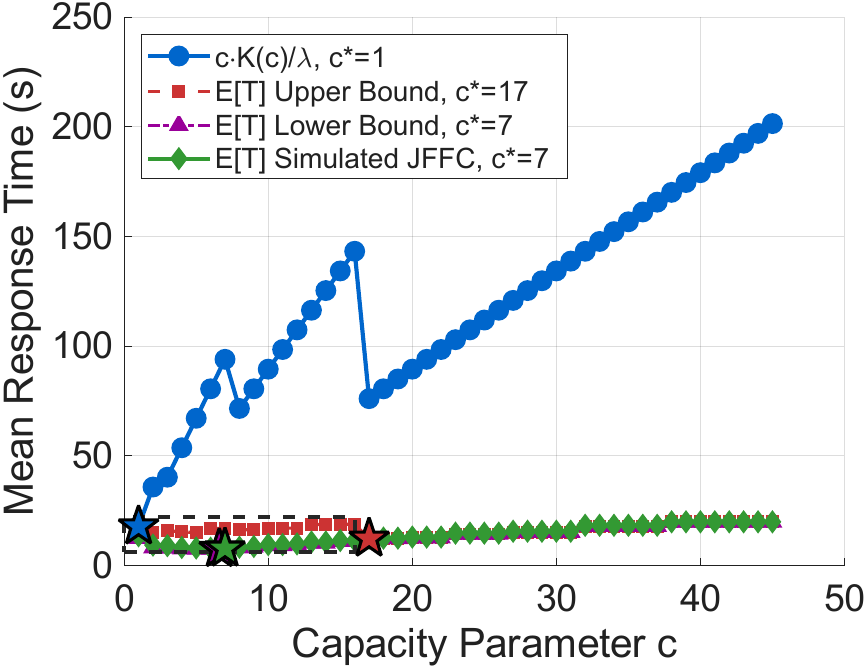}
        \caption{(a) overall view}
        \label{fig:parameter_optimization_c}
    \end{subfigure}
    \begin{subfigure}[t]{0.4\linewidth}
        \centering
        \includegraphics[width=\linewidth]{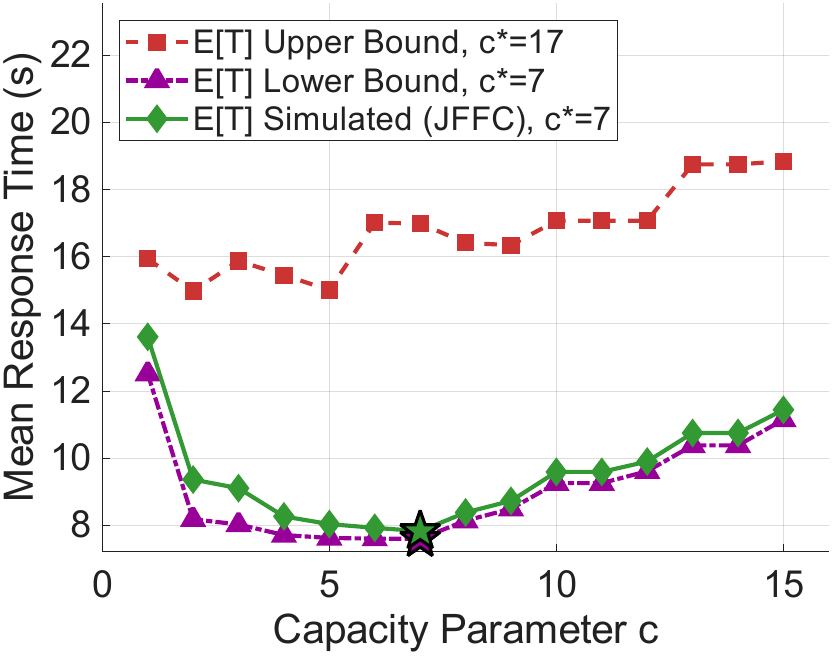}
        \caption{(b) zoomed-in view}
        \label{fig:parameter_optimization_c_zoom}
    \end{subfigure}
    \caption{Optimizing design parameter \(c\) (marked by $\star$). }
    \label{fig:parameter_optimization}
\end{figure}

\begin{figure}[t]
    \centering
    \includegraphics[width=0.4\textwidth]{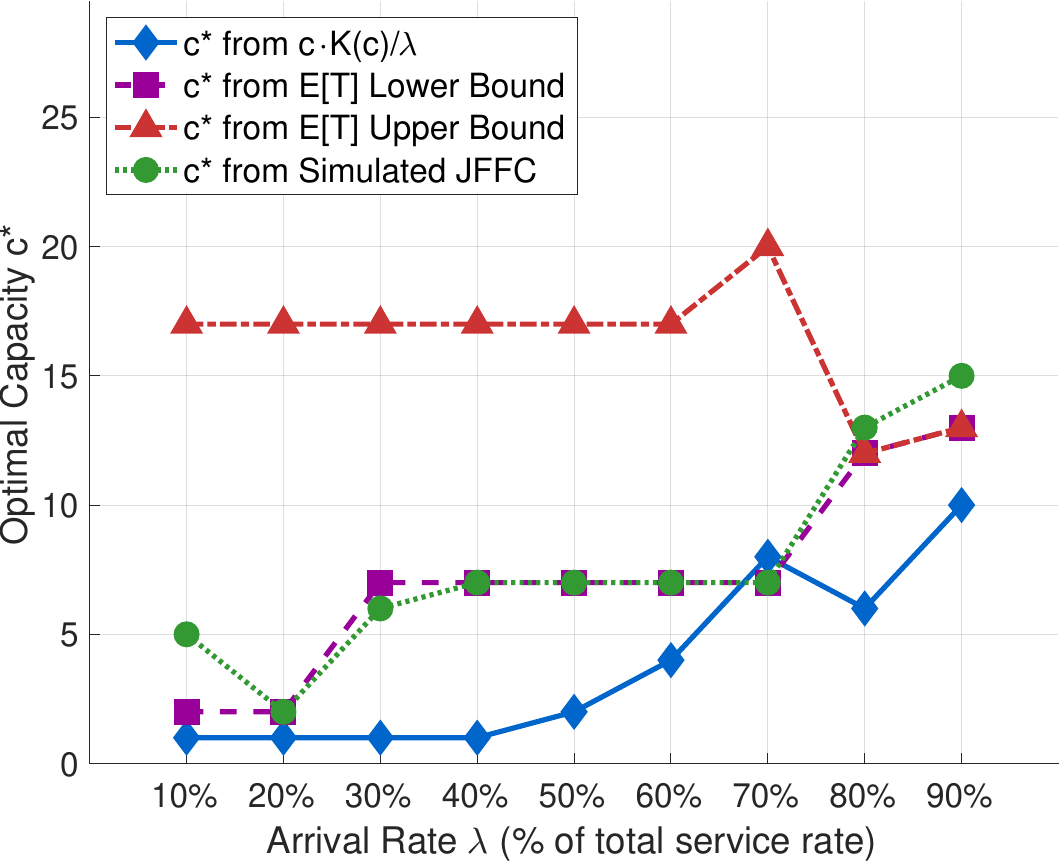}
    \caption{Optimal design parameter $c^*$ as a function of arrival rate $\lambda$. }
    \label{fig:optimal_c_vs_lambda}
\end{figure}

\emph{Parameter tuning:}
Fig.~\ref{fig:parameter_optimization} shows the impact of the design parameter $c$ in GBP-CR, which controls cache reservation, on the mean response time. In addition to the actual mean response time from simulations, we also evaluate: (i) the value of $c \cdot K(c)/\lambda$ that represents the objective of \eqref{eq:optimal c} (where dividing by $\lambda$ converts its meaning to time) and (ii) the upper/lower bound from Theorem~\ref{thm:mean occupancy bounds}. All the curves demonstrate a significant impact of $c$, 
which is not monotone. For example, the value of $c \cdot K(c)/\lambda$ grows linearly with $c$ most of the time, but when $c$ increases beyond a threshold, the number of disjoint chains $K(c)$ will decrease by one due to having fewer blocks per server, which will cause $c \cdot K(c)/\lambda$ to decrease by roughly $c/\lambda$. 
While the objective of \eqref{eq:optimal c} leads to too little cache reservation and the upper bound from Theorem~\ref{thm:mean occupancy bounds} leads to too much cache reservation, the lower bound from Theorem~\ref{thm:mean occupancy bounds} yields a configuration $c^*$ that minimizes the mean response time by reserving the right amount of cache space for each placed block. 

Fig.~\ref{fig:optimal_c_vs_lambda}  further evaluates how the optimal $c^*$ varies with the arrival rate $\lambda$. The lower bound from Theorem~\ref{thm:mean occupancy bounds} exhibits a monotone trend: at low arrival rates, it suggests smaller $c^*$ values, resulting in more memory being allocated to block placement to form shorter server chains and minimize service times; as the arrival rate increases, its suggested $c^*$ increases to allocate more memory to caches, enabling higher parallelism to handle the increased load. While simulation results show some variance due to randomness in the request process, they generally align with the lower bound's suggestion, validating its effectiveness for parameter tuning across different demand levels. 
In contrast, the $c^*$ value derived from the upper bound in Theorem~\ref{thm:mean occupancy bounds} is overly aggressive and the $c^*$ value derived from the surrogate objective $c\cdot K(c)/\lambda$ is overly conservative. \looseness=-1

\fi

\subsubsection{Overall Comparison}\label{subsubsec:Overall Comparison - simulation}

We compare the overall performance in terms of mean response time of the proposed solution, GBP-CR (for $c=7$) + GCA + JFFC, against two state-of-the-art solutions:
\begin{itemize} 
\item `PETALS': the current resource allocation algorithms used in the PETALS system~\cite{Borzunov23NeurIPS}, which greedily place blocks and route inference requests according to heuristic metrics;
\item `BPRR': a recently proposed solution from \cite{Sun25Performance}, which employs a two-time-scale algorithm to place blocks and dynamically route requests without explicitly composing server chains or allocating cache space ahead of time.
\end{itemize}

The results in Fig.~\ref{fig:overall_comparison_vs_J} under various server configurations show that: (i) by explicitly optimizing the composition of server chains, the proposed solution can significantly improve the performance over the state of the art (with 8\% -- 83\% reduction in mean response time), and (ii) the improvement is more prominent in more resource-constrained environments with fewer servers (smaller $J$) or a smaller fraction of high-performance servers (smaller $\eta$).

\begin{figure}[!t]
    \centering
    \begin{subfigure}{0.4\linewidth}
        \centering
        \includegraphics[width=\linewidth]{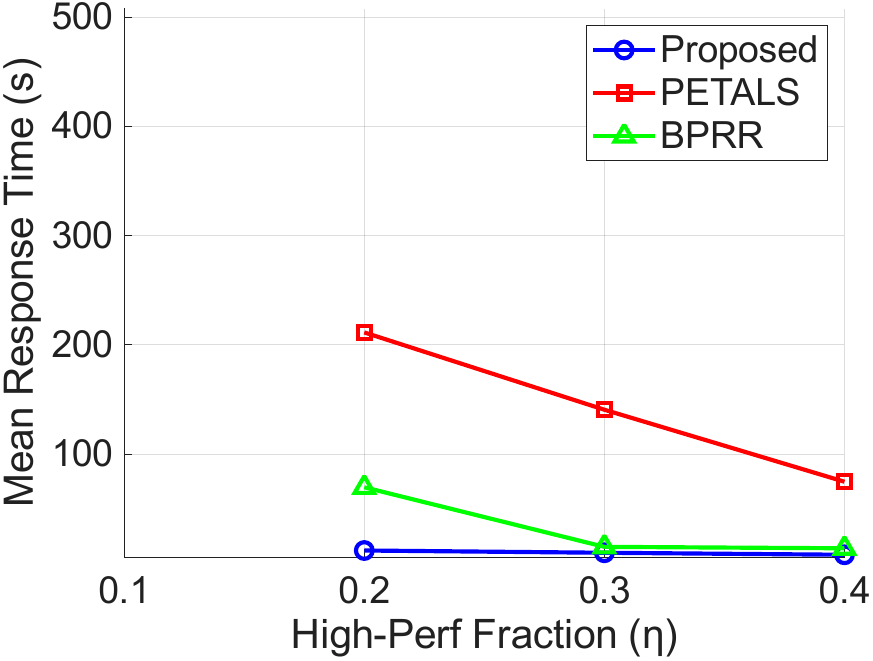}
        \caption{(a) $J=10$}
    \end{subfigure}
    \begin{subfigure}{0.4\linewidth}
        \centering
        \includegraphics[width=\linewidth]{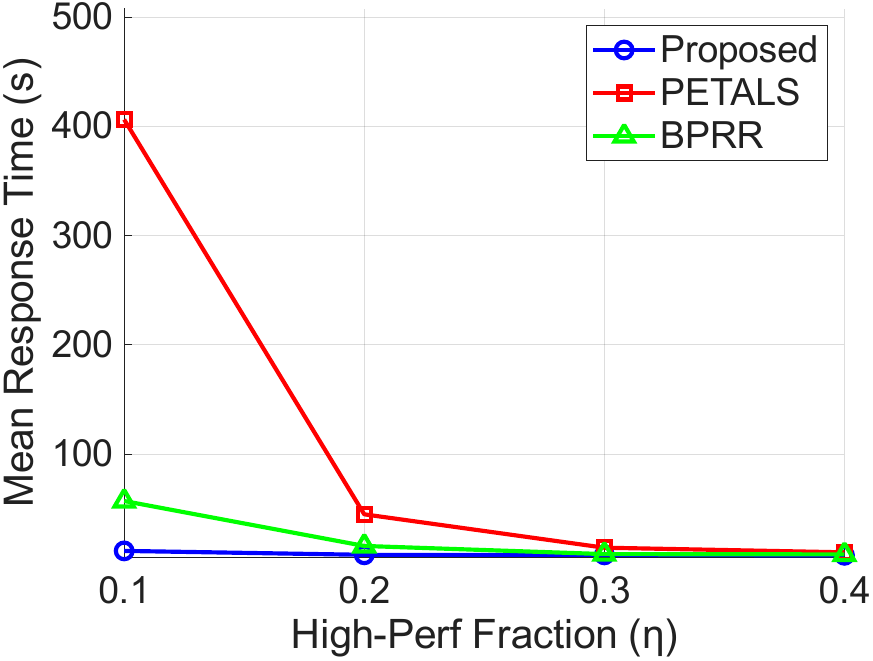}
        \caption{(b) $J=20$}
    \end{subfigure}
    \begin{subfigure}{0.4\linewidth}
        \centering
        \includegraphics[width=\linewidth]{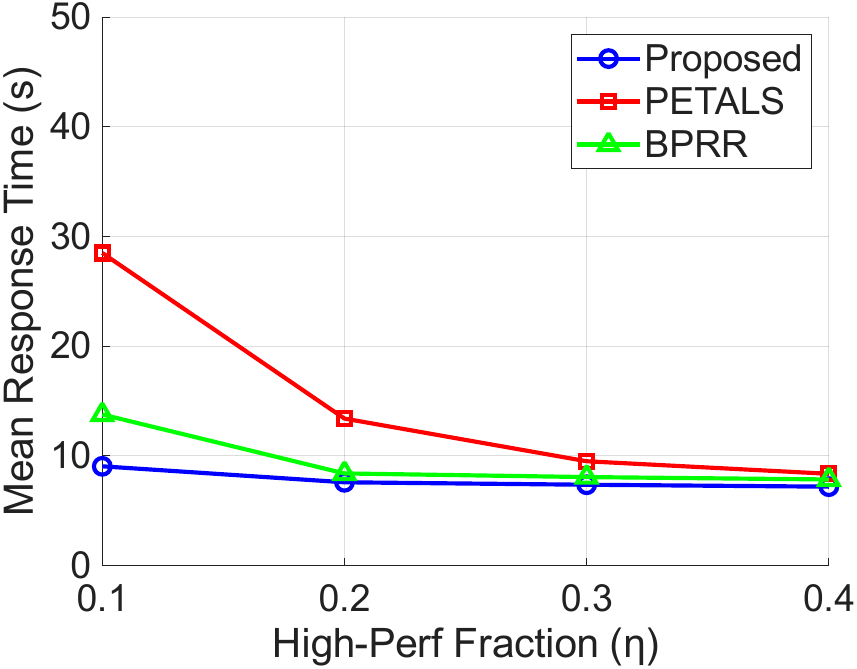}
        \caption{(c) $J=30$}
    \end{subfigure}
    \begin{subfigure}{0.4\linewidth}
        \centering
        \includegraphics[width=\linewidth]{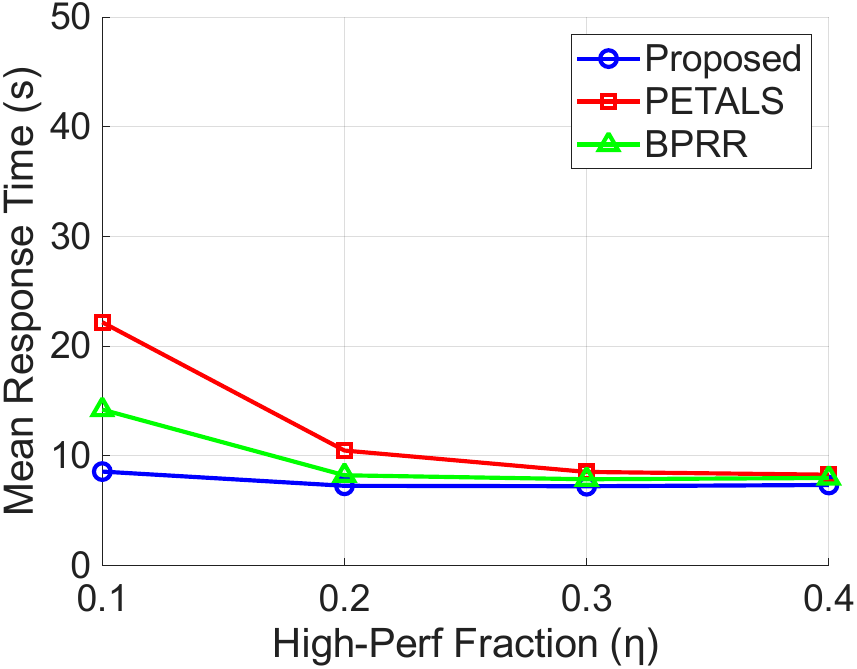}
        \caption{(d) $J=40$}
    \end{subfigure}
    \caption{Overall performance comparison with state-of-the-art solutions (the case of $J=10$ and $\eta=0.1$ is omitted as not all the $L$ blocks can be placed in this case). }
    \label{fig:overall_comparison_vs_J}
\end{figure}


\subsection{PETALS-based Experiments}\label{subsec:PETALS-based Experiments}

\subsubsection{Experiment Settings}

We modify PETALS~\cite{Borzunov23NeurIPS} to implement the proposed server chain composition and load balancing algorithms\footnote{The modified PETALS code is available at: \url{https://github.com/TingyangSunJeff/petals_llm_chain_scheduling}.}. We deploy the system onto a server with 3 A100 (80 GB) GPUs, where we leverage the multi-instance GPU technology~\cite{NvidiaMIG} to partition each A100 into 2 lower-performance GPUs (2g.20gb) and 1 higher-performance GPU (3g.40gb) that together emulate 9 servers with one GPU each. Due to the limited GPUs, we use a smaller model LLaMA-2-7B for the experiment. 
We use the namespace and the traffic control features of Linux~\cite{schubert2019network} to simulate network latency according to the RTTs in RIPE Atlas European network~\cite{ripe_atlas} as described in Section~\ref{subsubsec:Simulation Settings}. 
We generate requests according to the Azure LLM inference trace from \cite{Patel2024Splitwise}, which contains the arrival times and the numbers of input/output tokens from sampled LLM inference services in Azure, collected on November 11th 2023.  The trace has an average rate of $2.57$ requests/s, an average input length of 2048 tokens, and an average output length of 28 tokens.  


\subsubsection{Model Validation}

\if\thisismainpaper 0
We start by estimating the parameters in our system model (e.g., $\tau^p_j$, $\tau^c_j$) and validating the resulting model with measurements from our PETALS deployment. Fig.~\ref{fig:computation time profiling} shows the computation time at a given server, decomposed into prefill time and decode time (of all generated tokens), as a function of the number of processed blocks, the output length, and the input length. The results validate that the overall computation time grows linearly with the number of processed blocks as assumed in \eqref{eq:mean service time model}. Moreover, the decode time grows linearly with the output length and the prefill time grows linearly with the input length, as assumed in footnote~\ref{fn:tau^p model}. 
Fig.~\ref{fig:comm_time_profiling} shows the communication time in invoking a given server, decomposed into communication time in generating the first token (during prefill) and communication time in generating the remaining tokens (during decode). This time is independent of the GPU type or the number of blocks processed at the server. The post-first-token communication time is strongly dependent on the output length due to autoregressive decoding. The first-token communication time weakly depends on the input length (due to data transfer time). 
In all cases, our system model closely matches the actual measurement. 

\begin{figure}[!t]
\centering
\begin{subfigure}[t]{0.4\linewidth}
\centering
\includegraphics[width=\linewidth]{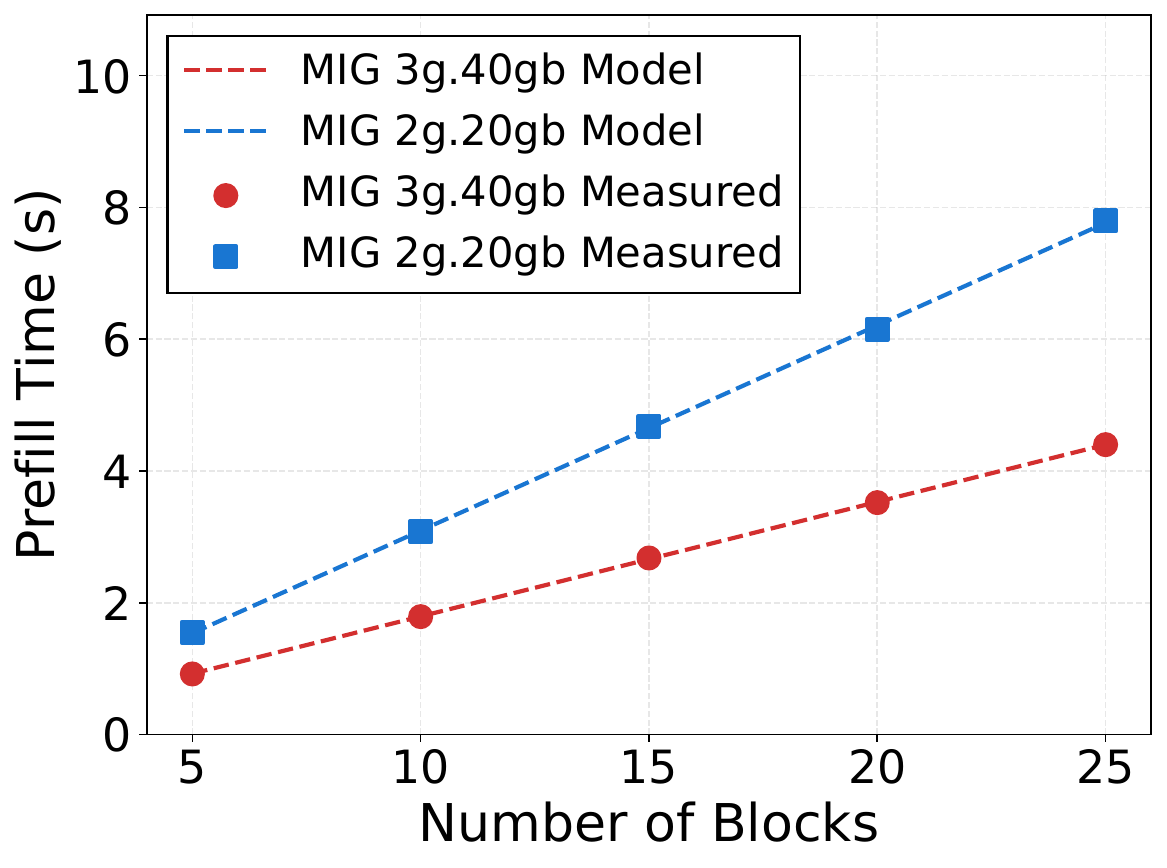}
\vspace{-1.75em}
\caption{(a) Prefill time vs. \#blocks}
\label{fig:prefill_blocks}
\end{subfigure}
\begin{subfigure}[t]{0.4\linewidth}
\centering
\includegraphics[width=\linewidth]{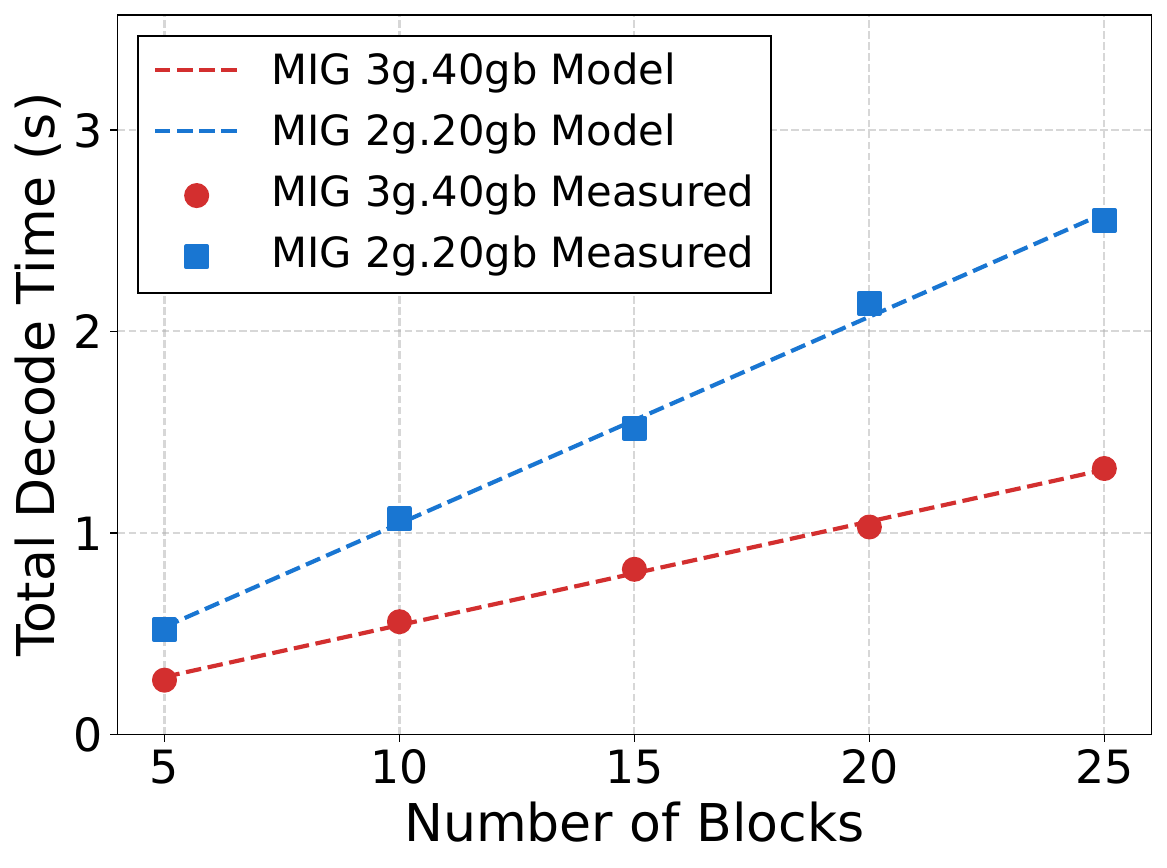}
\vspace{-1.75em}
\caption{(b) Decode time vs. \#blocks}
\label{fig:decode_blocks}
\end{subfigure}
\begin{subfigure}[t]{0.4\linewidth}
\centering
\includegraphics[width=\linewidth]{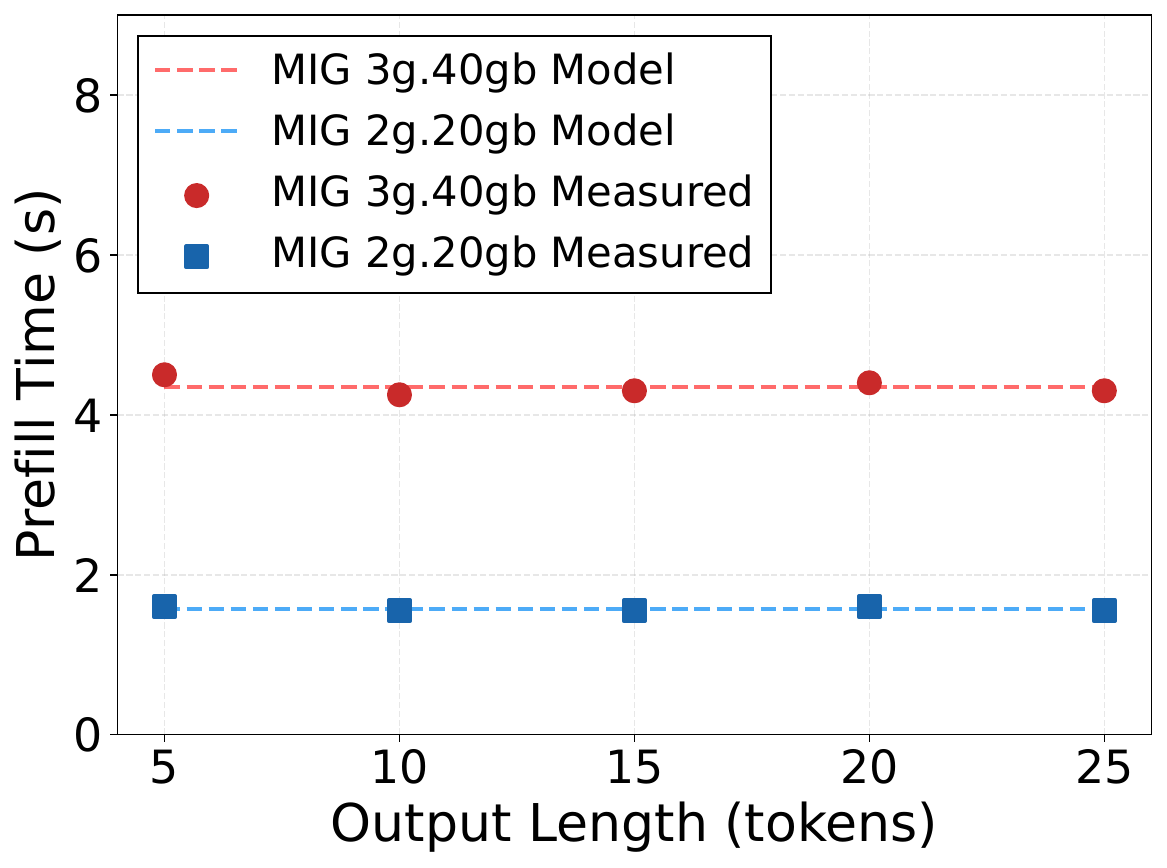}
\vspace{-1.75em}
\caption{(c) Prefill time vs. output length}
\label{fig:prefill_varying_output}
\end{subfigure}
\begin{subfigure}[t]{0.4\linewidth}
\centering
\includegraphics[width=\linewidth]{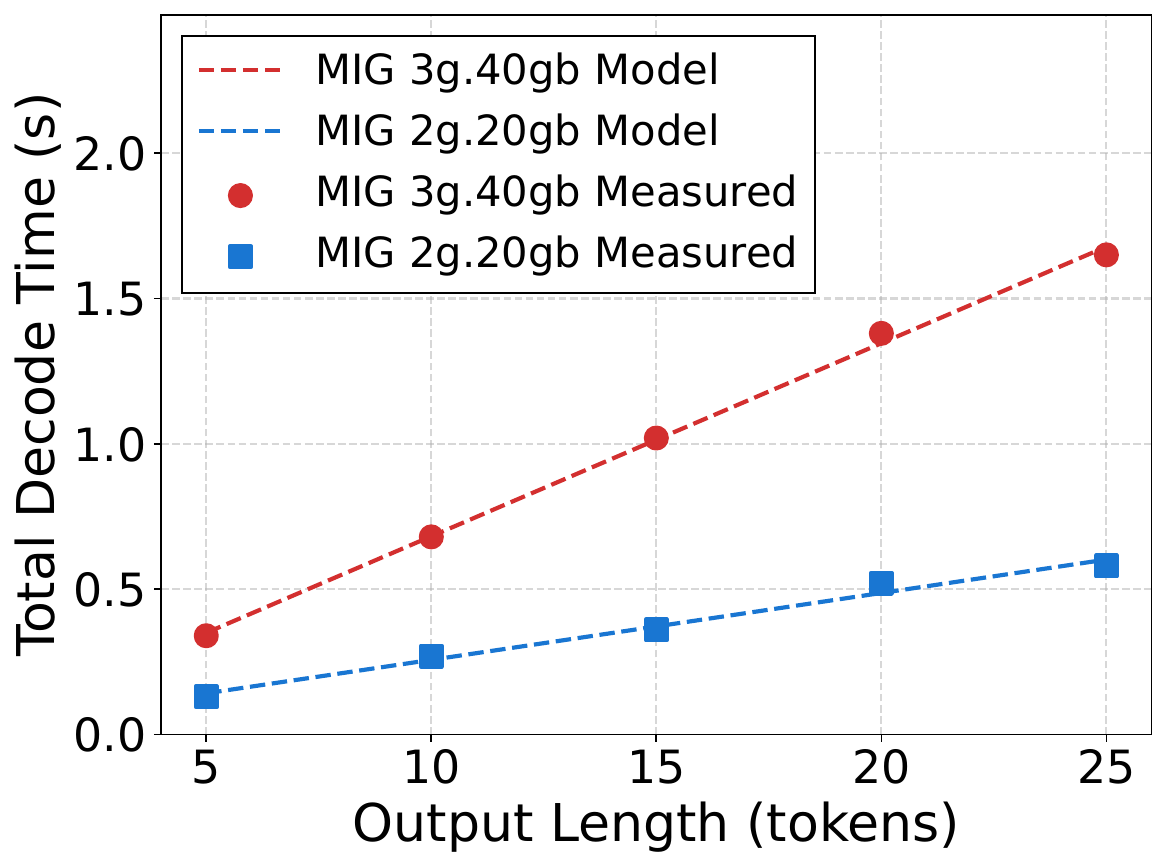}
\vspace{-1.75em}
\caption{(d) Decode time vs. output length}
\label{fig:decode_varying_output}
\end{subfigure}
\begin{subfigure}[t]{0.4\linewidth}
\centering
\includegraphics[width=\linewidth]{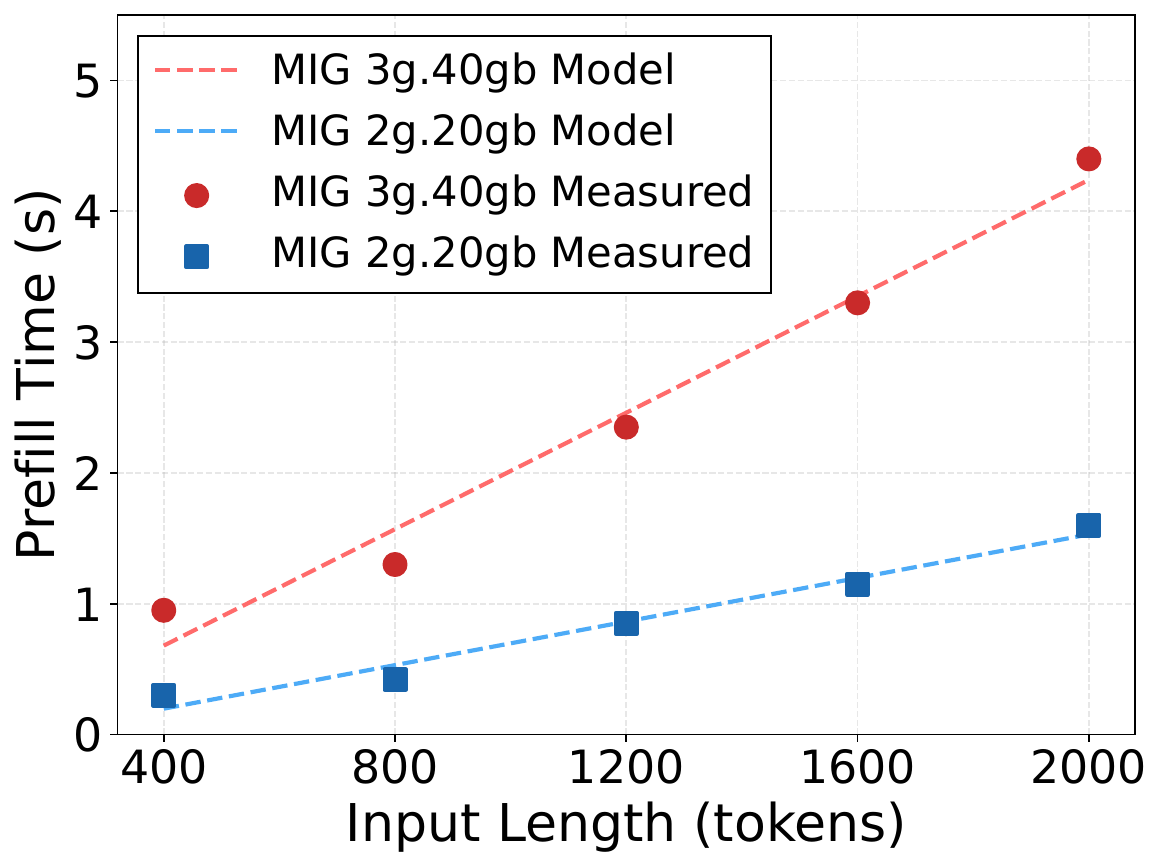}
\vspace{-1.75em}
\caption{(e) Prefill time vs. input length}
\label{fig:prefill_varying_input}
\end{subfigure}
\begin{subfigure}[t]{0.4\linewidth}
\centering
\includegraphics[width=\linewidth]{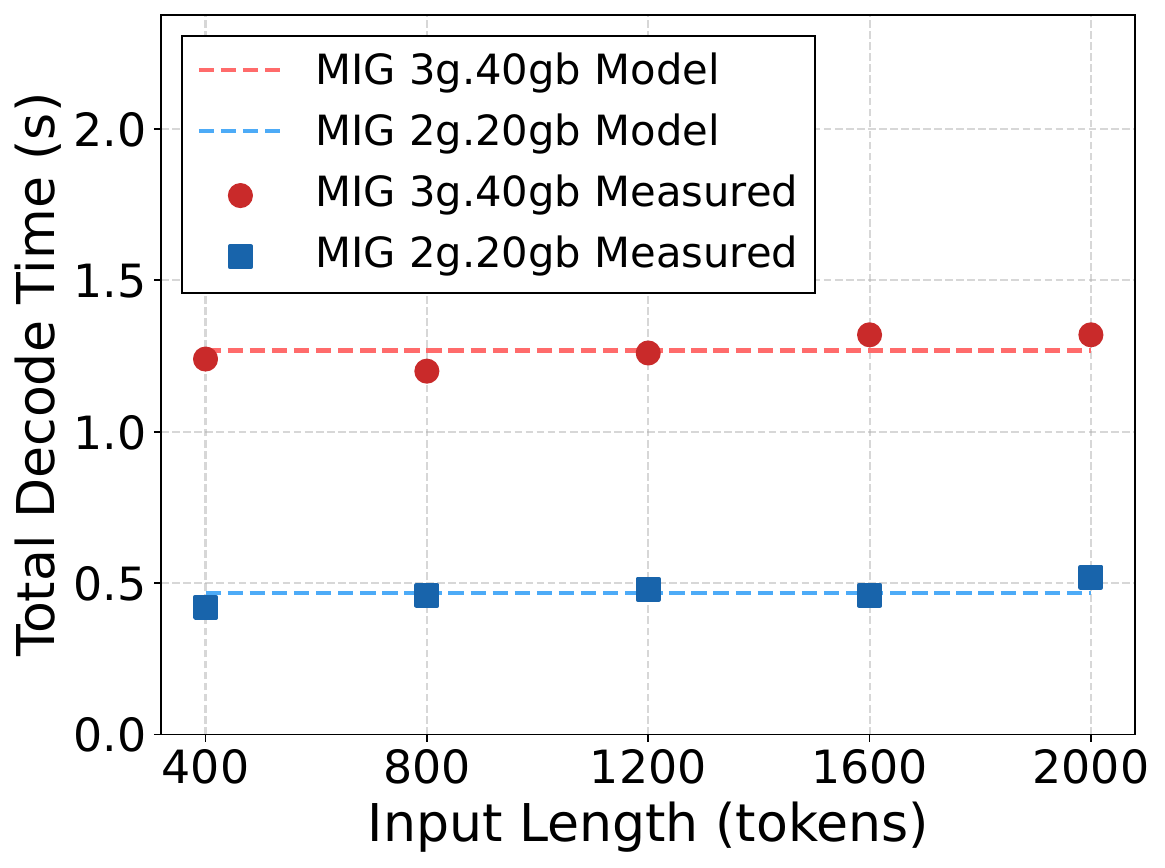}
\vspace{-1.75em}
\caption{(f) Decode time vs. input length}
\label{fig:decode_varying_input}
\end{subfigure}
\vspace{-0.75em}
\caption{Computation time profiling (by default, input length is 2,000 and output length is 20; MIG~3g hosts 25 blocks, MIG~2g hosts 5 blocks). }
\label{fig:computation time profiling}
\vspace{-0.25em}
\end{figure}

\begin{figure}[!t]
\centering
\begin{subfigure}[t]{0.4\linewidth}
    \centering
    \includegraphics[width=\linewidth]{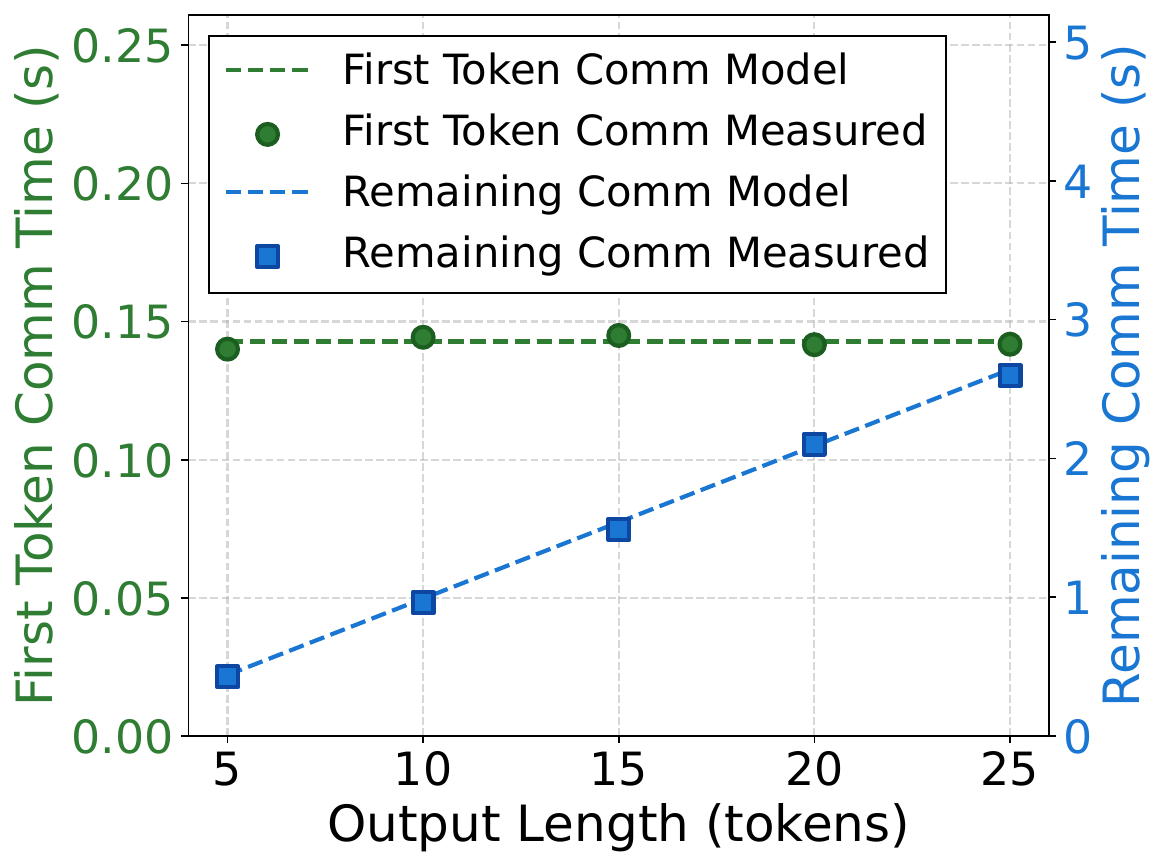}
    \vspace{-1.75em}
    \caption{(a) Comm time vs. output length}
    \label{fig:comm_varying_output}
\end{subfigure}
\begin{subfigure}[t]{0.4\linewidth}
    \centering
    \includegraphics[width=\linewidth]{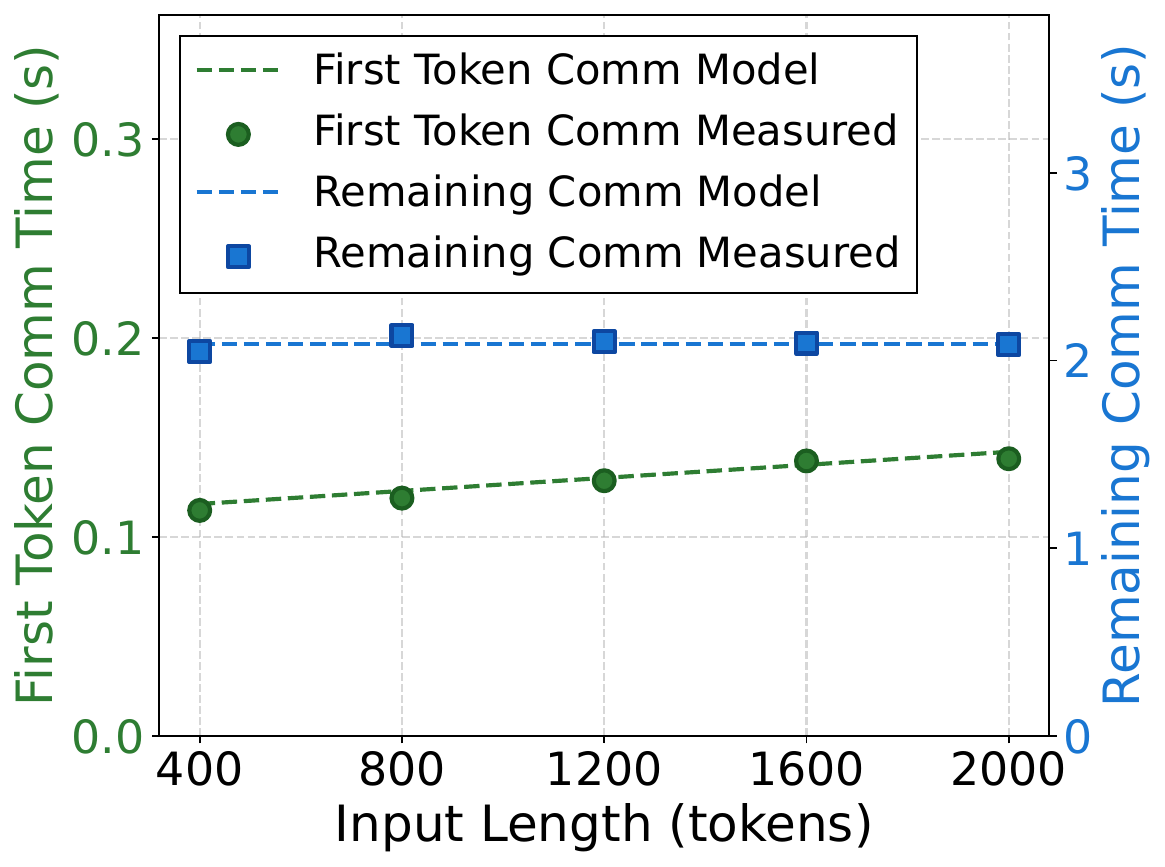}
    \vspace{-1.75em}
    \caption{(b) Comm time vs. input length}
    \label{fig:comm_varying_input}
\end{subfigure}
\vspace{-0.75em}
\caption{Communication time profiling (by default, input length is 2,000 and output length is 20; assume 100 ms RTT and 1 Gbps bandwidth). }
\label{fig:comm_time_profiling}
\end{figure}

\fi

\if\thisismainpaper 1
We first estimate the parameters in our system model (e.g., $\tau^p_j$, $\tau^c_j$) based on measurements from the PETALS deployment, as shown in Section~4.2.2 of \cite{Sun26arXiv}. 
\fi 

We then test the assumptions of Poisson arrivals and exponentially distributed service times used in our mean response time analysis (Section~\ref{subsubsec:Response Time Analysis}) against measurements based on the trace. The results in Fig.~\ref{fig:assumption_validation} indicate that the actual distributions notably deviate from our theoretical assumptions: (i) the inter-arrival times are more bursty than exponential distribution with the same mean (the std ratio between the actual and the exponential distributions is 13.15), while (ii) the service times are less bursty than exponential distribution with the same mean (the std ratios is 0.71--0.81).  
Nevertheless, our proposed solution configured under these simplifying assumptions remains highly effective as shown below. 

\begin{figure}[t]
\centering
\begin{subfigure}[t]{0.4\linewidth}
\centering
\includegraphics[width=\linewidth]{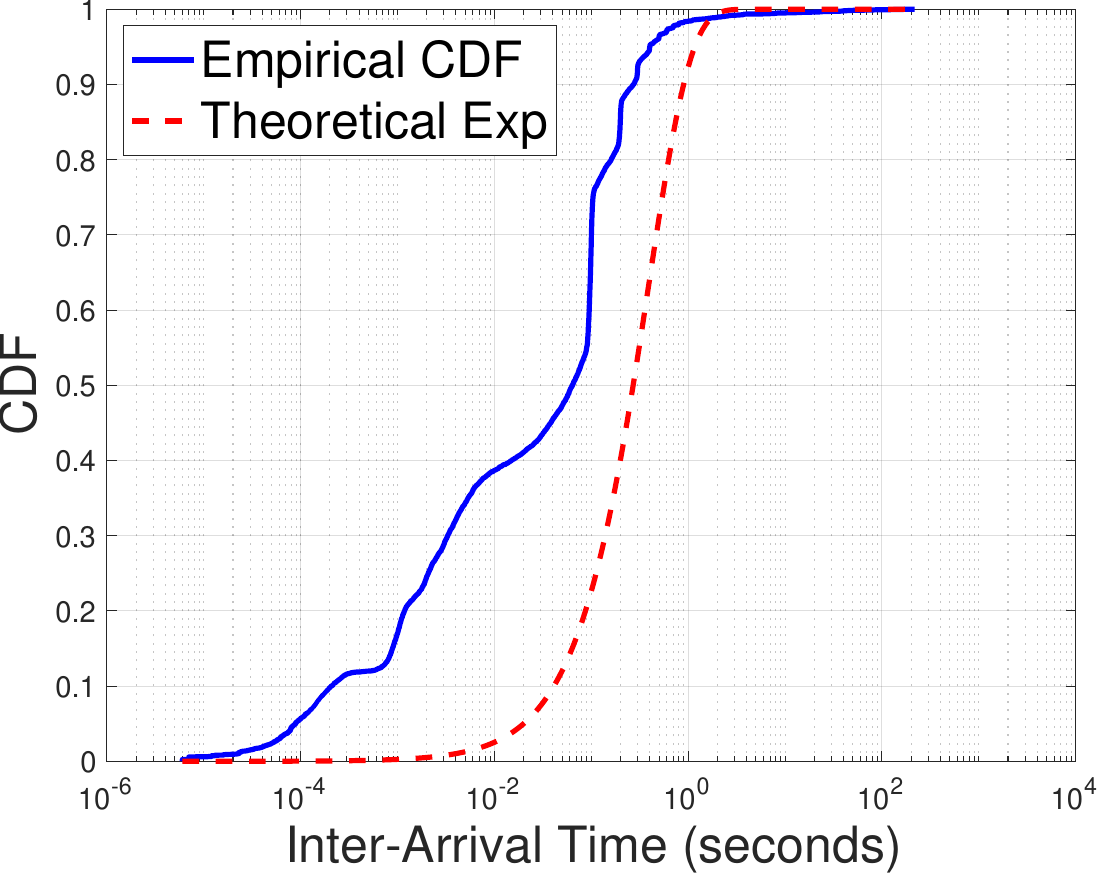}
\caption{(a) Inter-arrival times}
\label{fig:poisson_arrival_cdf}
\end{subfigure}
\begin{subfigure}[t]{0.4\linewidth}
\centering
\includegraphics[width=\linewidth]
{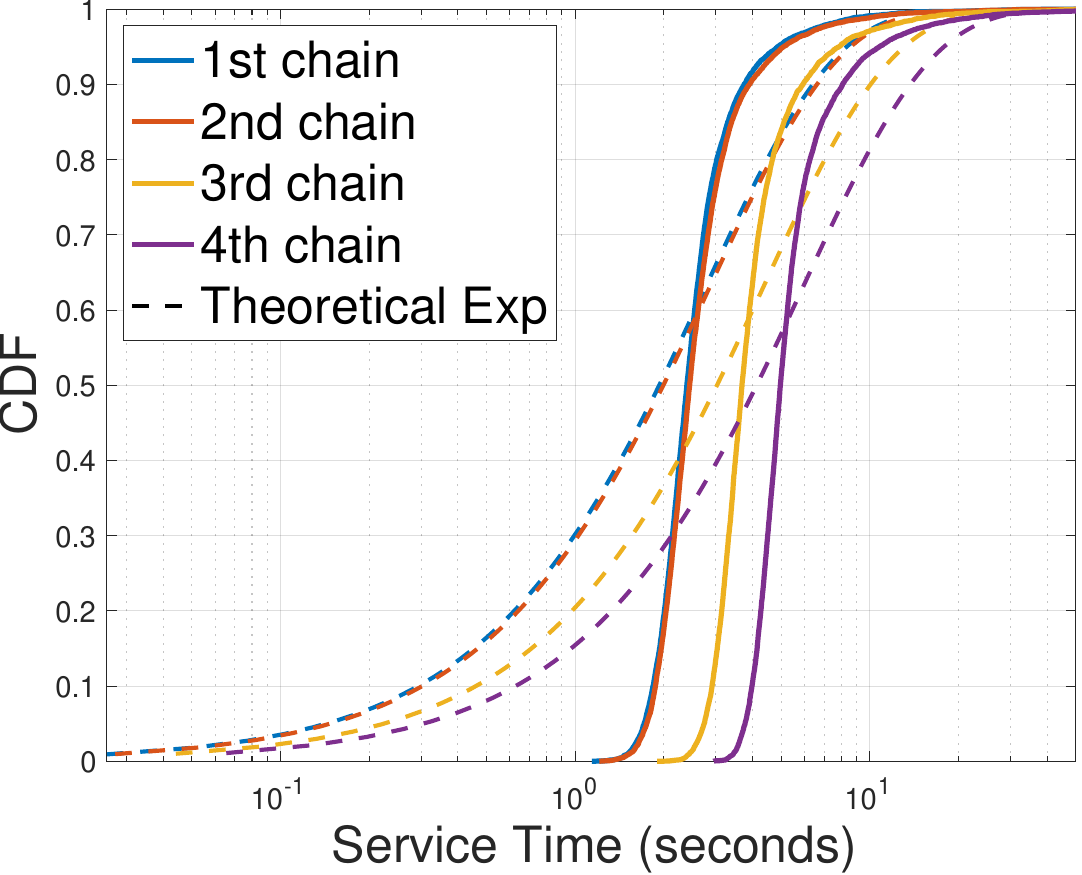}
\caption{(b) Service times}
\label{fig:service_time_cdf}
\end{subfigure}
\vspace{-0.75em}
\caption{Empirical CDF vs.\ exponential distribution CDF for the Azure LLM inference trace: (a) Inter-arrival times; 
(b) Service times on the fastest $K$ server chains (solid lines: empirical; dashed lines: exponential). 
} 
\label{fig:assumption_validation}
\end{figure}

\subsubsection{Performance Comparison}

\begin{table}[!t]\centering
\small
\begin{tabular}{lcccc}
\toprule
\textbf{Metric} & \textbf{PETALS~\cite{Borzunov23NeurIPS}} & \textbf{BPRR~\cite{Sun25Performance}} & \textbf{JFFC only} & \textbf{Proposed} \\
\midrule
\multicolumn{5}{l}{\textit{Response Time (seconds)}} \\
\quad Mean & $31.4$ 
& $19.8$ 
& $10.0$ 
& $\mathbf{7.3}$ \\ 
\quad Median & $27.8$ & $16.9$ & $8.5$ & $\mathbf{6.5}$ \\
\quad P95 & $68.5$ & $44.2$ & $22.1$ & $\mathbf{15.2}$ \\
\quad P99 & $89.3$ & $61.7$ & $29.6$ & $\mathbf{21.4}$ \\
\midrule
\multicolumn{5}{l}{\textit{Waiting Time (seconds)}} \\
\quad Mean & $24.2$ 
& $12.6$ 
& $1.5$ 
& $\mathbf{0.6}$ \\ 
\quad Median & $20.5$ & $9.7$ & $0.1$ & $\mathbf{0.1}$ \\
\quad P95 & $61.3$ & $37.1$ & $7.8$ & $\mathbf{4.8}$ \\
\quad Max & $142.6$ & $98.4$ & $25.1$ & $\mathbf{24.3}$ \\
\midrule
\multicolumn{5}{l}{\textit{Service Time (seconds)}} \\
\quad Mean & $7.2$ 
& $7.2$ 
& $8.5$ 
& $\mathbf{6.7}$ \\ 
\quad Median & $5.8$ & $5.9$ & $7.2$ & $\mathbf{5.7}$ \\
\quad P95 & $18.9$ & $18.7$ & $20.3$ & $\mathbf{18.2}$ \\
\quad Min / Max & $0.8 / 52.3$ & $0.9 / 51.8$ & $1.1 / 55.8$ & $\mathbf{0.8 / 49.1}$ \\
\midrule\midrule
\multicolumn{5}{l}{\textit{Improvement vs.\ PETALS}} \\
\quad Mean Response Time & -- & $36.9\%$ & $68.2\%$ & $\mathbf{76.8\%}$ \\
\quad Mean Waiting Time & -- & $47.9\%$ & $93.8\%$ & $\mathbf{97.5\%}$ \\
\quad P95 Response Time & -- & $35.5\%$ & $67.7\%$ & $\mathbf{77.8\%}$ \\
\bottomrule
\end{tabular}
\caption{Performance comparison for LLaMA-2-7B inference on 9 MIG instances ($3\times$ 3g.40gb + $6\times$ 2g.20gb) for 1000 requests generated from Azure LLM inference trace~\cite{Patel2024Splitwise}.}
\label{tab:performance_comparison}
\end{table}

We compare the response times between the proposed solution and the benchmarks described in Section~\ref{subsubsec:Overall Comparison - simulation}. The results, given in Table~\ref{tab:performance_comparison}, show substantial performance difference for different resource allocation algorithms, where BPRR~\cite{Sun25Performance} reduces the mean response time by $36.9\%$ compared to the current algorithms used in PETALS~\cite{Borzunov23NeurIPS}, and the proposed solution increases such reduction to $76.8\%$ (a $63.1\%$ reduction compared to BPRR). A closer look at the response time components shows that most of the improvement comes from reducing the waiting times, while the service times remain similar. Note that although the model used in this experiment is small enough to fit in an individual server, the high arrival rate in the trace leads to a large amount of cache reservation ($c=35$ according to the lower bound in Theorem~\ref{thm:mean occupancy bounds}), which causes the model to be split across multiple servers\footnote{LLaMA-2-7B has 32 transformer blocks. Under $c=35$, each 3g.40gb can only hold 24 blocks and each 2g.20gb can only hold 12 blocks.}. 
To test the efficacy of our cache reservation strategy, we further evaluate a benchmark that places an entire model instance onto each server and uses JFFC for load balancing. The results (`JFFC only' in Table~\ref{tab:performance_comparison}) demonstrate an intermediate response time performance between the proposed solution and the benchmarks from \cite{Borzunov23NeurIPS,Sun25Performance}. Closer examination shows that compared to the proposed solution, placing the entire model onto each server leads to less cache space at inference time, which reduces the parallel job-processing capability of faster servers that forces more requests to be assigned to slower servers (which increases service times) or queued  (which increases waiting times). 
\if\thisismainpaper 0
Moreover, requiring a single server to process all the blocks also increases the cache requirement for each active job that leads to lower memory utilization. In our experiment, the KV cache requirement on a server processing all the blocks is $\sim 2$~GiB per active inference session. In this case, a server at its concurrency limit can still have free-but-unusable memory of up to $2$~GiB, which amounts to $10\%$ of the HBM on 2g.20gb and $5\%$ on 3g.40gb. 
\fi

\section{Conclusion}\label{sec:Conclusion}

We have addressed a resource allocation problem in serving chain-structured jobs with large memory footprints by composable server chains, inspired by pipeline-parallel processing of inference requests for large transformers. Based on the unique requirements of such workloads, 
we formulate and solve a set of novel optimizations aimed at optimally composing server chains to minimize the mean response time. Our experiments based on a real LLM serving system and real demand traces show that the proposed solution is highly effective in reducing the response time compared to state-of-the-art solutions. 
Meanwhile, fully optimizing large model serving involves many other considerations, such as time-varying demands and additional control knobs allowed by modern model serving systems. 
Our optimization-based approach has laid a solid foundation for addressing these problems in future research.  \looseness=0

\bibliographystyle{plain}
\bibliography{references}

@article{Sun26arXiv,
  author       = {Tingyang Sun and Ting He and I-Hong Hou},
  title        = {Serving Chain-structured Jobs with Large Memory Footprints with Application to Large Foundation Model Serving},
  journal      = {CoRR},
  volume       = {abs/??},
  year         = {2026} 
}

@article{Sun25Performance,
title = {Optimizing resource allocation for geographically-distributed inference by large language models},
journal = {Performance Evaluation},
volume = {170},
pages = {102527},
year = {2025},
author = {Tingyang Sun and Ting He and Bo Ji and Parimal Parag}
}

@article{Bhambay22PE,
title = {Asymptotic optimality of speed-aware JSQ for heterogeneous service systems},
journal = {Performance Evaluation},
volume = {157-158},
pages = {102320},
year = {2022},
author = {Sanidhay Bhambay and Arpan Mukhopadhyay}
}

@article{Weng20MACS,
author = {Weng, Wentao and Zhou, Xingyu and Srikant, R.},
title = {Optimal Load Balancing with Locality Constraints},
year = {2020},
volume = {4},
number = {3},
journal = {Proc. ACM Meas. Anal. Comput. Syst.},
numpages = {37}
}

@article{Mitzenmacher25SS,
    author = {Michael Mitzenmacher and Rana Shahout},
    title = {Queueing, Predictions, and Large Language Models: Challenges and Open Problems},
    journal = {Stochastic Systems},
    year = {2025},
 volume = {15},
 number = {3}, 
 pages = {195-219}
}

@Article{Ainbinder24AS,
AUTHOR = {Ainbinder, Inessa and Temnikov, Evgeni and Allalouf, Miriam},
TITLE = {A Study Comparing Waiting Times in Global and Local Queuing Systems with Heterogeneous Workers},
JOURNAL = {Applied Sciences},
VOLUME = {14},
YEAR = {2024},
NUMBER = {9},
ARTICLE-NUMBER = {3799}
}

@article{Winston77JAP,
 author = {Wayne Winston},
 journal = {Journal of Applied Probability},
 number = {1},
 pages = {181--189},
 title = {Optimality of the Shortest Line Discipline},
 volume = {14},
 year = {1977}
}

@article{Lu11PE,
title = {Join-Idle-Queue: A novel load balancing algorithm for dynamically scalable web services},
journal = {Performance Evaluation},
volume = {68},
number = {11},
pages = {1056-1071},
year = {2011},
author = {Yi Lu and Qiaomin Xie and Gabriel Kliot and Alan Geller and James R. Larus and Albert Greenberg}
}

@article{Wang18TON,
    author = {Chunpu Wang and Chen Feng and Julian Cheng},
    title = {Distributed Join-the-Idle-Queue for Low Latency Cloud Services},
    journal = {IEEE/ACM Transactions on Networking},
    volume = {26},
    number = {5},
    year = {2018}, 
    pages = {2309–2319}
}

@inproceedings{Mitzenmacher96FOCS,
author = {Mitzenmacher, Michael},
title = {Load balancing and density dependent jump Markov processes},
year = {1996},
booktitle = {Proceedings of the 37th Annual Symposium on Foundations of Computer Science},
pages = {213}
}

@article{Gromicho12COR,
title = {Solving the job-shop scheduling problem optimally by dynamic programming},
journal = {Computers \& Operations Research},
volume = {39},
number = {12},
pages = {2968-2977},
year = {2012},
author = {Joaquim A.S. Gromicho and Jelke J. {van Hoorn} and Francisco Saldanha-da-Gama and Gerrit T. Timmer}
}

@inproceedings{Huang19NeurIPS,
 author = {Huang, Yanping and Cheng, Youlong and Bapna, Ankur and others},
 booktitle = {Advances in Neural Information Processing Systems},
 title = {GPipe: Efficient Training of Giant Neural Networks using Pipeline Parallelism},
 url = {https://proceedings.neurips.cc/paper_files/paper/2019/file/093f65e080a295f8076b1c5722a46aa2-Paper.pdf},
 volume = {32},
 year = {2019}
}

@inproceedings{Narayanan19SOSP,
author = {Narayanan, Deepak and Harlap, Aaron and Phanishayee, Amar and others},
title = {PipeDream: generalized pipeline parallelism for DNN training},
year = {2019},
booktitle = {Proceedings of the 27th ACM Symposium on Operating Systems Principles},
pages = {1–15}
}

@inproceedings{Yang21MLsys,
 author = {Yang, Bowen and Zhang, Jian and Li, Jonathan  and others},
 booktitle = {Proceedings of Machine Learning and Systems},
 pages = {269--296},
 title = {PipeMare: Asynchronous Pipeline Parallel DNN Training},
 year = {2021}
}

@article{Krizhevsky17CommACM,
author = {Krizhevsky, Alex and Sutskever, Ilya and Hinton, Geoffrey E.},
title = {{ImageNet} classification with deep convolutional neural networks},
year = {2017},
volume = {60},
number = {6},
journal = {Communications of the ACM},
month = {May},
pages = {84–90}
}

@article{Ben-Nun19ACMCompSurv,
author = {Ben-Nun, Tal and Hoefler, Torsten},
title = {Demystifying Parallel and Distributed Deep Learning: An In-depth Concurrency Analysis},
year = {2019},
volume = {52},
number = {4},
journal = {ACM Comput. Surv.},
month = {Aug},
articleno = {65},
numpages = {43}
}

@article{Tang20arXiv,
  author       = {Zhenheng Tang and
                  Shaohuai Shi and
                  Xiaowen Chu and
                  Wei Wang and
                  Bo Li},
  title        = {Communication-Efficient Distributed Deep Learning: {A} Comprehensive
                  Survey},
  journal      = {CoRR},
  volume       = {abs/2003.06307},
  year         = {2020} 
}

@ARTICLE{Medhat17CommMag,
  author={Medhat, Ahmed M. and Taleb, Tarik and Elmangoush, Asma and others},
  journal={IEEE Communications Magazine}, 
  title={Service Function Chaining in Next Generation Networks: State of the Art and Research Challenges}, 
  year={2017},
  volume={55},
  number={2},
  pages={216-223}
}

@ARTICLE{Jang17JSAC,
  author={Jang, Insun and Suh, Dongeun and Pack, Sangheon and Dán, György},
  journal={IEEE Journal on Selected Areas in Communications}, 
  title={Joint Optimization of Service Function Placement and Flow Distribution for Service Function Chaining}, 
  year={2017},
  volume={35},
  number={11},
  pages={2532-2541}
}

@INPROCEEDINGS{Zhang17ICDCS,
  author={Zhang, Qixia and Xiao, Yikai and Liu, Fangming and Lui, John C.S. and Guo, Jian and Wang, Tao},
  booktitle={IEEE 37th International Conference on Distributed Computing Systems (ICDCS)}, 
  title={Joint Optimization of Chain Placement and Request Scheduling for Network Function Virtualization}, 
  year={2017},
  volume={},
  number={},
  pages={731-741}
}

@INPROCEEDINGS{Guo18INFOCOM,
  author={Guo, Linqi and Pang, John and Walid, Anwar},
  booktitle={IEEE INFOCOM 2018 - IEEE Conference on Computer Communications}, 
  title={Joint Placement and Routing of Network Function Chains in Data Centers}, 
  year={2018},
  volume={},
  number={},
  pages={612-620}
}

@INPROCEEDINGS{Jalalitabar19NFV-SDN,
  author={Jalalitabar, Maryam and Wang, Yang and Cao, Xiaojun},
  booktitle={IEEE Conference on Network Function Virtualization and Software Defined Networks (NFV-SDN)}, 
  title={Branching-Aware Service Function Placement and Routing in Network Function Virtualization}, 
  year={2019},
  volume={},
  number={},
  pages={1-6}
}

@INPROCEEDINGS{Ma17INFOCOM,
  author={Ma, Wenrui and Sandoval, Oscar and Beltran, Jonathan and Pan, Deng and Pissinou, Niki},
  booktitle={IEEE Conference on Computer Communications (INFOCOM)}, 
  title={Traffic aware placement of interdependent NFV middleboxes}, 
  year={2017},
  volume={},
  number={},
  pages={1-9}
}

@article{Chen19arXiv,
  author       = {Weihan Chen and
                  Xia Yin and
                  Zhiliang Wang and
                  Xingang Shi},
  title        = {Placement and Routing Optimization Problem for Service Function Chain:
                  State of Art and Future Opportunities},
  journal      = {CoRR},
  volume       = {abs/1910.02613},
  year         = {2019}  
}

@inproceedings{Shang19ICPP,
author = {Shang, Xiaojun and Liu, Zhenhua and Yang, Yuanyuan},
title = {Network Congestion-aware Online Service Function Chain Placement and Load Balancing},
year = {2019},
booktitle = {Proceedings of the 48th International Conference on Parallel Processing},
articleno = {46},
numpages = {10}
}

@misc{BigScience23BLOOM,
      title={{BLOOM}: A {176B}-Parameter Open-Access Multilingual Language Model}, 
      author={Teven Le Scao and Angela Fan and Christopher Akiki and others},
      year={2023},
      eprint={2211.05100},
      archivePrefix={arXiv},
      primaryClass={cs.CL},
      url={https://arxiv.org/abs/2211.05100}, 
}

@inproceedings{Borzunov23NeurIPS,
 author = {Borzunov, Alexander and Ryabinin, Max and Chumachenko, Artem and others},
 booktitle = {Advances in Neural Information Processing Systems},
 pages = {12312--12331},
 title = {Distributed Inference and Fine-tuning of Large Language Models Over The Internet},
 volume = {36},
 year = {2023}
}

@article{Hayes02AS,
 author = {Brian Hayes},
 journal = {American Scientist},
 number = {2},
 pages = {113--117},
 title = {Computing Science: The Easiest Hard Problem},
 volume = {90},
 year = {2002}
}

@Inbook{Kellerer2004BookChapter,
author="Kellerer, Hans
and Pferschy, Ulrich
and Pisinger, David",
title="Multidimensional Knapsack Problems",
bookTitle="Knapsack Problems",
year="2004",
publisher="Springer Berlin Heidelberg",
pages="235--283"
}

@misc{vLLM_Ray,
  author       = {{A. Sprenger}},
  title        = {Ray vLLM Inference},
  howpublished = {\url{https://github.com/asprenger/ray_vllm_inference}},
  note         = {Accessed: 2025-09-01}
}

@misc{Nvidia_Dynamo,
  author       = {{NVIDIA}},
  title        = {A Datacenter Scale Distributed Inference Serving Framework},
  howpublished = {\texttt{https://github.com/ai-dynamo/dynamo}},
  note         = {Accessed: 2025-09-01}
}

@misc{Amazon_EKS,
  author       = {{Amazon Web Services}},
  title        = {Amazon Elastic Kubernetes Service},
  howpublished = {\texttt{https://aws.amazon.com/eks/}},
  note         = {Accessed: 2025-09-01}
}

@misc{vLLM_distributed,
  author       = {{vLLM Project}},
  title        = {vLLM: Distributed Inference and Serving},
  howpublished = {\texttt{https://docs.vllm.ai/en/stable/index.html}},
  note         = {Accessed: 2025-09-01}
}

@inproceedings{Patel2024Splitwise,
  title = {Splitwise: Efficient Generative LLM Inference Using Phase Splitting},
  author = {Pratyush Patel and Esha Choukse and Chaojie Zhang and Aashaka Shah and Íñigo Goiri and Saeed Maleki and Ricardo Bianchini},
  booktitle = {Proceedings of the 51st Annual International Symposium on Computer Architecture (ISCA)},
  year = {2024},
  pages = {123--136},
  publisher = {ACM/IEEE},
  address = {Toronto, Canada},
  doi = {10.1145/3456789.3456790}
}

@MISC{NvidiaMIG,
  author = {NVIDIA Corporation},
  title = {NVIDIA Multi-Instance GPU User Guide},
  howpublished = {\url{https://docs.nvidia.com/datacenter/tesla/mig-user-guide/index.html}},
  year = {2025}
}

@article{schubert2019network,
  title={Network emulation using Linux network namespaces},
  author={Schubert, Daniel and Jaeger, Benedikt and Helm, Max},
  journal={Network},
  volume={57},
  year={2019}
}

@misc{ripe_atlas,
  title = {{RIPE Atlas}: A Global Internet Measurement Network},
  author = {{RIPE NCC}},
  howpublished = {\url{https://atlas.ripe.net/}},
  year = {2024}
}

\appendix
\section{Appendix}\label{appendix:A}

\subsection{Notation}\label{appendix:Notations}

Table~\ref{tab:notations} summarizes the main notation used in this paper, where the first half are input parameters and the second half are decision variables (including both independent and dependent variables). 

\begin{table}
\small
    \centering
    \begin{tabular}{c|l}
      Notation   &   Description \\
         \hline
$J$ &  number of servers  \\
$\mathcal{J}$, $\mathcal{J}_+$ & set/extended set of servers \\
$M_j$ & memory size of server $j$ \\
$\tau^c_j$ & mean communication time of server $j$ \\
$\tau^p_j$ & mean per-block computation time of server $j$ \\
$L$ & number of blocks in the service \\
$s_m$, $s_c$ & size of each block/cache slot \\
$\lambda$ & request arrival rate \\
$\overline{\rho}$ & target maximum system load \\
\hline
$a_j$, $m_j$ & first block/number of blocks at server $j$ \\
$m_{ij}$ & number of blocks processed at server $j$ after server $i$ \\
$\widetilde{M}_j$ & number of cache slots at server $j$ after block placement \\
$\mathcal{E}_{\bm{a},\bm{m}}$ & feasible neighboring servers under block placement $(\bm{a},\bm{m})$ \\
$\mathcal{K}_{\bm{a},\bm{m}}$ & feasible server chains under block placement $(\bm{a},\bm{m})$ \\
$c_k$, $\mu_k$ & capacity/service rate of chain $k$ \\
$c$ & required capacity of each server (during block placement)
    \end{tabular}
    \vspace{-.5em}
    \caption{Main notation.      
    }
    \label{tab:notations}
    \vspace{-.0em}
\end{table}

\subsection{Supporting Proofs}\label{appendix:Proofs}

\begin{proof}[Proof of Theorem~\ref{thm:NP-hardness of BPCA}]
We prove the conclusion by reducing the {Multidimensional Knapsack Problem (MKP)} to an instance of our problem. According to \cite{Kellerer2004BookChapter}, the MKP has the following form:
\begin{subequations}\label{eq:MKP}
\begin{align}
\max_{\bm{c}}\quad & \sum_{k=1}^K \mu_k c_k \\
\mbox{s.t.}\quad & \sum_{k=1}^K m_{jk} c_k \leq D_j,~~~\forall j\in [d],\\
& c_k\in \{0,1\},~~~\forall k\in [K],
\end{align}
\end{subequations}
which tries to select from a given set of $K$ $d$-dimensional items, each with value $\mu_k$ and size $m_{jk}$ in dimension $j$, to maximize the total value while fitting into a knapsack with capacity $D_j$ in dimension $j$, where $\mu_k$ and $D_j$ are positive integers and $m_{jk}$ is a nonnegative integer. 

We construct a corresponding instance of our problem: each item $k$ corresponds to a feasible server chain with service rate $\mu_k$; each dimension $j$ corresponds to a server with $\widetilde{M}_j = D_j$ cache slots after block placement; each server $j\in [d]$ hosts $m_j:= \max_{k\in [K]} m_{jk}$ blocks $\{1+\sum_{j'=1}^{j-1}m_{j'},\ldots,\sum_{j'=1}^j m_{j'}\}$; the total number of blocks is $L:= 1+  \sum_{j\in [d]}m_j$. In addition, for each $k\in [K]$ and each $j\in [d]$ with $m_{jk}<m_j$, add a server $v_{jk}$ before $j$ onto chain $k$ that hosts the same first $m_j-m_{jk}$ blocks as $j$ with $m_j-m_{jk}$ cache slots; for each chain $k\in [K]$, add a server at the end that hosts the last block with $1$ cache slot. 
By the above construction, we have $K$ server chains that can each run up to one job at a time, and enabling chain $k$ to run a job requires $m_{jk}$ cache slots at server $j\in [d]$. 
Thus, the decision version of MKP ``is there a feasible solution to \eqref{eq:MKP} with objective value $\geq \mu$?'' is equivalent to the feasibility test of the constructed subproblem of \eqref{eq:BPCA - conceptual} with the $K$ constructed server chains and $\lambda/\overline{\rho}:=\mu$. As the decision version of MKP is NP-complete~\cite{Kellerer2004BookChapter}, the subproblem of \eqref{eq:BPCA - conceptual} is NP-hard.  
\end{proof}

\begin{proof}[Proof of Lemma~\ref{lem:service rate bound}]
Since $\sum_{j\in \mathcal{J}_k}m_j(c)\geq L$, we can place the entire set of blocks onto each subset of servers $\mathcal{J}_k$ ($\forall k\in \mathcal{K}$). Let $k$ denote the chain corresponding to $\mathcal{J}_k$. Each chain $k$ has a mean service time no greater than $\sum_{j\in \mathcal{J}_k}t_j(c)$, as the mean time spent at each $j\in\mathcal{J}_k$ is upper-bounded by $t_j(c)$. The service rate of chain $k$ is thus no less than $(\sum_{j\in \mathcal{J}_k}t_j(c))^{-1}$. Moreover, according to \eqref{eq:m_j(c)}, this block placement leaves enough residual memory at each server to process $c$ jobs concurrently. This completes the proof.
\end{proof}

\begin{proof}[Proof of Lemma~\ref{lem:NP-hardness of server grouping}]
We prove the claim via a reduction from the \emph{partition problem}~\cite{Hayes02AS}. Given a set of positive integers $\mathcal{X}:=\{x_1,\ldots,x_N\}$ where $N$ is an even number, the partition problem determines if there exists a partition of $\mathcal{X}$ into two subsets $\mathcal{X}_1, \mathcal{X}_2$ with equal sum. 
We construct a corresponding instance of our problem \eqref{eq:BP} by: (i) constructing a server $j$ for each number $x_j$ with $m_j(c)=t_j(c)=x_j$, and (ii) setting $L:= (\sum_{j=1}^N x_j)/2$ and $\lambda/(\overline{\rho}c):= 2/L$. 
If there is no feasible solution to the partition problem, then the constructed problem has only one server group $\mathcal{J}'\subseteq [N]$ that satisfies \eqref{BP:feasibility}, with a scaled total service rate of 
\begin{align}
{1\over \sum_{j\in \mathcal{J}'} x_j} \leq {1\over L},
\end{align} 
which cannot satisfy the constraint \eqref{BP:service rate}. 
If there is a feasible solution $(\mathcal{X}_1, \mathcal{X}_2)$ to the partition problem, then $(\mathcal{J}_1,\mathcal{J}_2)$ with $\mathcal{J}_k:=\{j:\: x_j\in \mathcal{X}_k\}$ is a feasible solution to \eqref{eq:BP}. Thus, testing the feasibility of the constructed instance of \eqref{eq:BP} will solve the partition problem. Since the partition problem is NP-complete~\cite{Hayes02AS}, \eqref{eq:BP} is NP-hard. 
\end{proof}

\begin{proof}[Proof of Theorem~\ref{thm:optimality of GBP-CR}]
Since $M_j\equiv M$, $m_j(c)\equiv m(c)$, and constraint \eqref{BP:feasibility} reduces to $|\mathcal{J}_k| \geq \lceil L/m(c) \rceil =: n(c)$. Let $(\mathcal{J}^*_k)_{k\in \mathcal{K}^*}$ be an optimal solution to \eqref{eq:BP}. Without loss of generality, we assume that $T_{k_1} := \sum_{j\in \mathcal{J}^*_{k_1}}t_j(c) \leq \sum_{j\in \mathcal{J}^*_{k_2}}t_j(c) =: T_{k_2}$ for any $k_1<k_2$ (interpreted as ``chain $k_1$ is constructed before chain $k_2$''), and $|\mathcal{J}^*_k|\equiv n(c)$ ($\forall k\in \mathcal{K}^*$). Suppose that $\exists k_1<k_2$ and $j_1\in \mathcal{J}^*_{k_1},\: j_2\in \mathcal{J}^*_{k_2}$ such that $t_{j_1}(c)> t_{j_2}(c)$. 
Then
\begin{align}
&{1\over T_{k_1} -t_{j_1}(c)+t_{j_2}(t)} + {1\over T_{k_2} -t_{j_2}(c)+t_{j_1}(c)}  - {1\over T_{k_1}} 
- {1\over T_{k_2}} \nonumber\\
=& {t_{j_1}(c)-t_{j_2}(c)\over (T_{k_1}-t_{j_1}(c)+t_{j_2}(c))T_{k_1}} - {t_{j_1}(c)-t_{j_2}(c)\over (T_{k_2} +t_{j_1}(c)-t_{j_2}(c))T_{k_2}} > 0,
\end{align}
i.e., swapping $j_1$ and $j_2$ can only increase the scaled total service rate and thus the resulting solution remains optimal. Thus, there must exist an optimal solution to \eqref{eq:BP} in which $t_{j_1}(c)\leq t_{j_2}(c)$ for all $k_1<k_2$ and $j_1\in \mathcal{J}^*_{k_1},\: j_2\in \mathcal{J}^*_{k_2}$. Moreover, it must be optimal to use the fastest $|\mathcal{K}^*|$ chains. Therefore, an optimal solution to \eqref{eq:BP} is to group servers into server chains in the increasing order of $t_j(c)$ until the scaled total service rate satisfies the requirement, which is exactly what is done in GBP-CR since $\widetilde{t}_j(c)\propto t_j(c)$ in this case. 
\end{proof}

\begin{proof}[Proof of Theorem~\ref{thm:GCA conditional optimality}]
First, suppose that when JFFS assigns a job at time $t$, the fastest chain in $\mathcal{K}$ with available capacity according to $\bm{c}$ is $k_l$. Then all the faster chains $k_1,\ldots,k_{l-1}$ in $\mathcal{K}$ must be fully occupied at this time, i.e., running $c_{k_1},\ldots,c_{k_{l-1}}$ jobs respectively. Thus, the set of feasible links $\mathcal{E}(t):=\{(i,j)\in \mathcal{E}_{\bm{a},\bm{m}}:\: j\mbox{ has at least }m_{ij}\mbox{ available cache slots at time }t\} \subseteq \mathcal{E}^{(l-1)}$. This implies that the length of the shortest $j_0\to j_{J+1}$ path in $\mathcal{G}(t)=(\mathcal{J}_+, \mathcal{E}(t))$ is no smaller than the length of the shortest $j_0\to j_{J+1}$ path in $\mathcal{G}^{(l-1)}=(\mathcal{J}_+, \mathcal{E}^{(l-1)})$, i.e., chain $k_l$ achieves the minimum mean service time among all the server chains with available capacity at time $t$, including those outside $\mathcal{K}$. Therefore, it suffices for JFFS to always assign jobs to chains in $\mathcal{K}$. 

Moreover, when $k_l\in \mathcal{K}$ is selected by JFFS for processing a job at time $t$, every faster chain $k_{l'}$ with $l'<l$ must be running $c_{k_{l'}}$ jobs, and thus the number of available cache slots $M_j(t)$ at this time on each server $j$ must be upper-bounded by $M^{(l-1)}_j$. This implies that the maximum number of jobs chain $k_l$ can run at time $t$ is no more than $c_{k_l}$. Thus, JFFS assigns no more than $c_k$ jobs on each $k\in \mathcal{K}$ at any point in time. 
\end{proof}

\begin{proof}[Proof of Lemma~\ref{lem:ergodicity}]
First, it is easy to see that any state $(n,\bm{c}_{1:K})$ in $\mathcal{Z}_2$ communicates with the state $(0,\bm{c}_{1:K})$ through arrivals and departures. Meanwhile, any state $(0,\bm{z}_{1:K})$ in $\mathcal{Z}_1$ can reach $(0,\bm{c}_{1:K})$ through arrivals, and can also be reached from $(0,\bm{c}_{1:K})$ through departures. Thus, $(\bm{Z}(t))_{t\geq 0}$ is irreducible. 

Next, we use the Foster–Lyapunov Criterion to prove positive recurrence. Define a Lyapunov function $V(\bm{z}):= z_0$ and a finite set $\mathcal{C}:= \mathcal{Z}_1$. Then for any $\bm{z}\not\in \mathcal{C}$, the Lyapunov drift satisfies
\begin{align}
\Delta V(\bm{z}) := \sum_{\bm{z}'}q(\bm{z},\bm{z}')(V(\bm{z}')-V(\bm{z})) = \lambda - \sum_{l=1}^K c_l\mu_l < 0.
\end{align}
For any $\bm{z}\in \mathcal{C}$, the Lyapunov drift satisfies
$\Delta V(\bm{z}) \leq \lambda$.  
Thus, by the Foster–Lyapunov Criterion, $(\bm{Z}(t))_{t\geq 0}$ is positive recurrent, which completes the proof. 
\end{proof}

\begin{proof}[Proof of Theorem~\ref{thm:mean occupancy bounds}]
It is easy to see that both constructed birth-death processes have unique steady-state distributions if $\lambda<\nu$. Let $\underline{\bm{\phi}}$ denote the steady-state distribution of $(\underline{\Phi}(t))_{t\geq 0}$ and $\overline{\bm{\phi}}$ denote the steady-state distribution of $(\overline{\Phi}(t))_{t\geq 0}$. We can verify that $\underline{\bm{\phi}}$ follows \eqref{eq:phi_n}. Specifically, by flow balance equations, we have
\begin{align}\label{eq:phi_n general}
\underline{\phi}_n = \left\{\begin{array}{ll} 
\underline{\phi}_0 \prod_{i=1}^n{\lambda\over \overline{\nu}_i} & \mbox{if }n\leq C,\\
\underline{\phi}_0 \left(\prod_{i=1}^C{\lambda\over \overline{\nu}_i}\right) \rho^{n-C} & \mbox{if }n>C.
\end{array}\right.
\end{align}
Then from $\sum_{n=0}^\infty \underline{\phi}_n = 1$, we can obtain
\begin{align}\label{eq:phi_0}
\underline{\phi}_0 = \left(1+\sum_{l=1}^{C-1}{\lambda^l\over \prod_{i=1}^l \overline{\nu}_i}+{\lambda^C \nu\over (\prod_{i=1}^C \overline{\nu}_i)(\nu-\lambda)} \right)^{-1}.
\end{align}
The definition of $\overline{\bm{\phi}}$ can be verified similarly. It is then easy to see that the steady-state mean $\mbbE_{\underline{\bm{\phi}}}[\underline{\Phi}]$ is given by the righthand side of \eqref{eq:E[Z] lower} and the steady-state mean $\mbbE_{\overline{\bm{\phi}}}[\overline{\Phi}]$ is given by \eqref{eq:E[Z] upper}. \looseness=0

We then show that the steady-state mean of such a birth-death process is decreasing in each of the death rates. Consider the birth-death process $(\Phi(t))_{t\geq 0}$ with a constant birth rate $\lambda$ and a state-dependent death rate $\nu_n$ defined in \eqref{eq:nu_n}. Its steady-state distribution $\bm{\phi}$ follows the same formula as \eqref{eq:phi_n general}--\eqref{eq:phi_0}, except that $\overline{\nu}_i$ is replaced by $\nu_i$. Define $b_n:=\phi_n/\phi_0$ ($\forall n\in \mbbN$). For any fixed $i\in\mbbN$, define $B_{i-}:=\sum_{n=0}^{i-1}b_n$ and $B_{i+}:=\sum_{n=i}^\infty b_n$. It is easy to see that $\partial b_n/\partial \nu_i = 0$ if $n<i$ and $-b_n/\nu_i$ if $n\geq i$. Thus, 
\begin{align}
{\partial \phi_n\over \partial \nu_i} = \left\{
\begin{array}{ll}
{\phi_0^2\over \nu_i} B_{i+} b_n & \mbox{if }n<i,\\
-{\phi_0^2\over \nu_i}B_{i-} b_n & \mbox{if }n\geq i.
\end{array}\right.
\end{align}
Plugging these into the formula for ${\partial\over \partial \nu_i}\mbbE_{\bm{\phi}}[\Phi]$ gives
\begin{align}
{\partial\over \partial \nu_i}\mbbE_{\bm{\phi}}[\Phi] = \sum_{n=0}^\infty n{\partial \phi_n\over \partial \nu_i} = {\phi_0^2\over \nu_i}B_{i+}\sum_{n=0}^{i-1}n b_n - {\phi_0^2\over \nu_i}B_{i-}\sum_{n=i}^\infty n b_n.
\end{align}
Since $\sum_{n=0}^{i-1}n b_n\leq (i-1)B_{i-}$ and $\sum_{n=i}^\infty n b_n \geq i B_{i+}$, we have
\begin{align}
{\partial\over \partial \nu_i}\mbbE_{\bm{\phi}}[\Phi]&\leq {\phi_0^2\over \nu_i}B_{i+}B_{i-}(i-1)-{\phi_0^2\over \nu_i}B_{i-}B_{i+}i \nonumber\\
&= -{\phi_0^2\over \nu_i}B_{i-}B_{i+} \leq 0,
\end{align}
which proves that $\mbbE_{\bm{\phi}}[\Phi]$ is decreasing in $\nu_i$ for any $i\in \mbbN$. 

The proof completes by noting that $\mbbE_{\bm{\pi}}\left[\sum_{l=0}^K Z_l\right] = \mbbE_{\bm{\phi}}[\Phi]$ by \eqref{eq:equivalence of occupancy}, and thus increasing the death rates to their upper bounds in \eqref{eq:nu_n upper} will lead to a lower bound on $\mbbE_{\bm{\pi}}\left[\sum_{l=0}^K Z_l\right]$ as in \eqref{eq:E[Z] lower} while decreasing the death rates to their lower bounds in \eqref{eq:nu_n lower} will lead to an upper bound on $\mbbE_{\bm{\pi}}\left[\sum_{l=0}^K Z_l\right]$ as in \eqref{eq:E[Z] upper}. 
\end{proof}

\subsection{Exact Analysis of JFFC}\label{appendix:Analysis of JFFC for K=2}

\begin{figure}[!t]
   \centerline{\includegraphics[width=.95\linewidth]{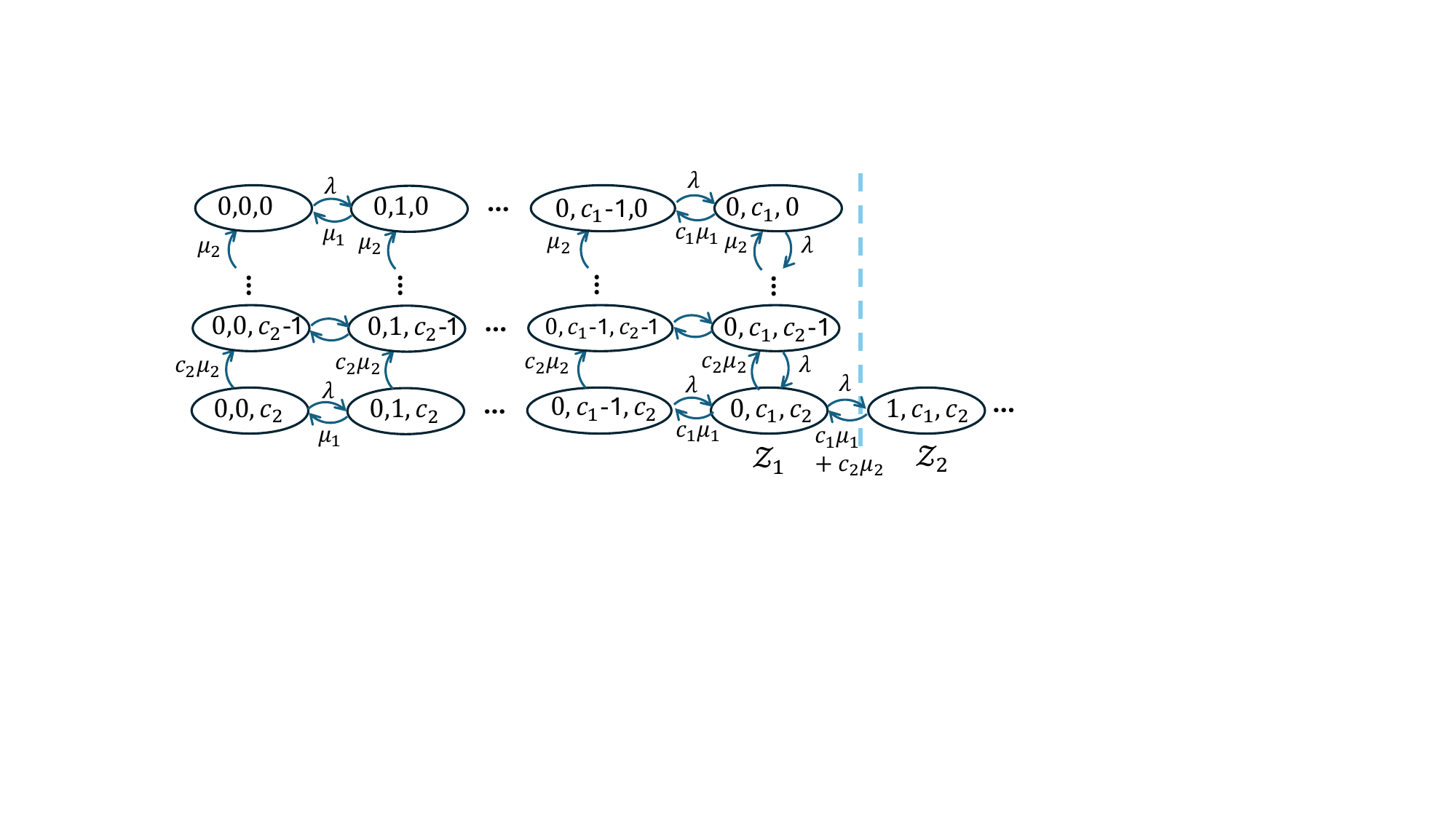}}
   \vspace{-1em}
    \caption{Transition diagram of the system state under JFFC for $K=2$.
    }
    \label{fig:CTMC_K2}
    \vspace{-.05em}
\end{figure}

Fig.~\ref{fig:CTMC_K2} illustrates the transition diagram of $(\bm{Z}(t))_{t\geq 0}$ in the case of $K=2$. Note that although JFFC always assigns each new arrival to the fastest available chain, it is still possible for the system to be in a state with $z_2>0$ and $z_1<c_1$ due to departures, which gives the transition diagram a 2D-grid structure as shown in Fig.~\ref{fig:CTMC_K2}. 

Define coefficient $\alpha_{\bm{z}}:=\pi_{\bm{z}}/\pi_{0,0,c_2}$ ($\forall \bm{z}\in \mathcal{Z}$), which is valid as Lemma~\ref{lem:ergodicity} implies $\pi_{0,0,c_2}>0$ for any $\lambda< \sum_{l=1}^K c_l \mu_l$. 
First, by definition we have $\alpha_{0,0,c_2}=1$, and by the flow balance equation for the states in $\{(0,i,c_2)\}_{i=0}^{n-1}$ we have
\begin{align}
\alpha_{0,n,c_2} = {1\over n\mu_1}\left(c_2\mu_2\sum_{i=0}^{n-1}\alpha_{0,i,c_2} + \lambda \alpha_{0,n-1,c_2} \right),~~~\forall n=1,\ldots,c_1.
\end{align}
Meanwhile, the state transition in $\mathcal{Z}_2$ has a birth-death structure with birth rate $\lambda$ and death rate $\nu :=\sum_{l=1}^K c_l \mu_l$, which implies that $\pi_{n,c_1,c_2} = \left({\lambda / \nu}\right)^n \pi_{0,c_1,c_2}$ and thus
\begin{align}\label{eq:pi_n,c1,c2}
\alpha_{n,c_1,c_2}  = \left({\lambda\over \nu}\right)^n \alpha_{0,c_1,c_2},~~~~~\forall n\in \mbbZ^+.  
\end{align}

Now suppose that $(\alpha_{0,i,z_2+1})_{i=0}^{c_1}$ has been computed (initially $z_2 = c_2-1$). By the flow balance between $\{\bm{z}':\: z'_2\leq z_2\}$ and $\{\bm{z}':\: z'_2 > z_2\}$, we have
\begin{align}\label{eq:alpha_0,c1,z2}
\alpha_{0,c_1,z_2} = {(z_2+1)\mu_2\over \lambda} \sum_{i=0}^{c_1}\alpha_{0,i,z_2+1}.
\end{align}
The flow balance equation for the states in $\{(0,i,z_2)\}_{i=0}^{n-1}$ is
\begin{align}
z_2\mu_2\sum_{i=0}^{n-1}\alpha_{0,i,z_2} + \lambda \alpha_{0,n-1,z_2} = (z_2+1)\mu_2\sum_{i=0}^{n-1}\alpha_{0,i,z_2+1} + n \mu_1\alpha_{0,n,z_2}, \label{eq:flow balance for z2}
\end{align}
which implies that $\alpha_{0,n,z_2}$ is an affine function of $\alpha_{0,0,z_2}$. Define parameters $\beta_{0,n,z_2}$ and $\gamma_{0,n,z_2}$ such that $\alpha_{0,n,z_2} = \beta_{0,n,z_2} \alpha_{0,0,z_2} + \gamma_{0,n,z_2}$; $\beta_{0,0,z_2}:=1$, $\gamma_{0,0,z_2}:=0$. For $n=1,\ldots,c_1$, \eqref{eq:flow balance for z2} implies the following recursive equations for $\beta_{0,n,z_2}$ and $\gamma_{0,n,z_2}$:
\begin{align}
\beta_{0,n,z_2} &= {1\over n\mu_1}\left(z_2\mu_2\sum_{i=0}^{n-1}\beta_{0,i,z_2} + \lambda\beta_{0,n-1,z_2}\right),\\
\gamma_{0,n,z_2}&={1\over n\mu_1}\left(z_2\mu_2\sum_{i=0}^{n-1}\gamma_{0,i,z_2} + \lambda \gamma_{0,n-1,z_2} - (z_2+1)\mu_2\sum_{i=0}^{n-1}\alpha_{0,i,z_2+1} \right).
\end{align}
In particular, since $\alpha_{0,c_1,z_2}$ is already known through \eqref{eq:alpha_0,c1,z2}, we can compute $\alpha_{0,0,z_2}$ by 
\begin{align}
\alpha_{0,0,z_2} = {1\over \beta_{0,c_1,z_2}}(\alpha_{0,c_1,z_2}-\gamma_{0,c_1,z_2}),
\end{align}
and hence $\alpha_{0,n,z_2}$ for $n=1,\ldots,c_1-1$. Repeating these steps for $z_2=c_2-1, \ldots,0$ yields $(\alpha_{0,i,z_2})_{i=0}^{c_1}$ for any $z_2<c_2$.

Since $\sum_{\bm{z}\in\mathcal{Z}}\pi_{\bm{z}} = 1$, we can compute the steady-state distribution by $\pi_{0,0,c_2} = \left(\sum_{\bm{z}\in\mathcal{Z}}\alpha_{\bm{z}}\right)^{-1}$ and $\pi_{\bm{z}} = \pi_{0,0,c_2}\alpha_{\bm{z}}$. This yields the steady-state mean system occupancy (recall $\nu :=\sum_{l=1}^K c_l \mu_l$)
\begin{align}
\mbbE_{\bm{\pi}}\left[\sum_{l=0}^K Z_l\right] &= {\sum_{\bm{z}\in\mathcal{Z}}\alpha_{\bm{z}}(\sum_{l=0}^K z_l) \over \sum_{\bm{z}\in \mathcal{Z}}\alpha_{\bm{z}}} \nonumber\\
&\hspace{-4em}= {\sum_{z_1=0}^{c_1}\sum_{z_2=0}^{c_2} \alpha_{0,z_1,z_2}(z_1+z_2) + {\lambda \alpha_{0,c_1,c_2}\over \nu-\lambda}({\nu\over \nu-\lambda}+c_1+c_2) \over \sum_{z_1=0}^{c_1}\sum_{z_2=0}^{c_2}\alpha_{0,z_1,z_2} + {\lambda \alpha_{0,c_1,c_2}\over \nu-\lambda}},
\end{align}
plugging which into \eqref{eq:steady-state mean response time} gives the steady-state mean response time. 

\end{document}